\newcommand{\cody}[1]{}
\newcommand{\amcomment}[1]{}
\newcommand{\madhu}[1]{}
\newcommand{\inrevA}[1]{#1}
\newcommand{\Ss}{\mathcal{S}}
\newcommand{\Ff}{\mathcal{F}}
\newcommand{\Gg}{\mathcal{G}}
\newcommand{\Ll}{\mathcal{L}}
\newcommand{\soexists}{{\mathbf{\exists}\!\!\!\exists}}
\newcommand{\sortedll}{\mathit{sortedll}}
\newcommand{\sorted}{\mathit{sorted}}
\newcommand{\revsorted}{\mathit{rev\_sorted}}
\newcommand{\nil}{\mathit{nil}}
\newcommand{\nxt}{\mathit{next}}
\newcommand{\key}{\mathit{key}}
\newcommand{\prev}{\mathit{prev}}
\newcommand{\tree}{\mathit{tree}}
\newcommand{\rank}{\mathit{rank}}
\newcommand{\htree}{\mathit{htree}}
\newcommand{\len}{\mathit{length}}
\newcommand{\revlen}
{\mathit{rev\_length}}
\newcommand{\keys}{\mathit{keys}}
\newcommand{\old}{\mathit{old}}
\newcommand{\hslist}{\mathit{hslist}}
\newcommand{\lst}
{\mathit{last}}
\newcommand{\Dafny}{{\sc Dafny}\xspace}
\newcommand{\Boogie}{{\sc Boogie}\xspace}
\newcommand{\bfBoogie}{{\sc \textbf{Boogie}}\xspace}
\newcommand{\LC}{\mathit{LC}}
\newcommand{\Br}{\mathit{Br}}
\newcommand{\wb}{\vdash_{\mathrm{WB}}}
\newcommand{\ite}{\mathit{ite}}
\definecolor{nicepink}{rgb}{0.8,0.0,0.8}
\definecolor{darkpastelgreen}{rgb}{0.01, 0.75, 0.24}
\definecolor{chocolate}{rgb}{0.48, 0.25, 0.0}
\definecolor{darkviolet}{rgb}{0.58, 0.0, 0.83}
\definecolor{keywordcolor}{rgb}{0.9, 0.4, 0.0}
\newcommand{\ghost}[1]{\textcolor{blue}{#1}}
\newcommand{\autobr}[1]{\textcolor{nicepink}{#1}}
\newtheorem{theorem}{Theorem}[section]
\newtheorem{lemma}[theorem]{Lemma}
\newtheorem{proposition}[theorem]{Proposition}
\newcommand{\mypara}[1]{\smallskip\noindent\emph{\textbf{#1}.\ }}
\newcommand{\Alloc}{\mathit{Alloc}}
\newcommand{\Mod}{\mathit{Mod}}
\newif\iflong
\begin{document}

\title{Predictable Verification using Intrinsic Definitions}

\author{Adithya Murali}
\orcid{0000-0002-6311-1467}
\affiliation{%
  \institution{University of Illinois at Urbana-Champaign}
  \city{Urbana}
  \country{USA}
}
\email{adithya5@illinois.edu}

\author{Cody Rivera}
\orcid{0000-0001-7824-4054}
\affiliation{%
  \institution{University of Illinois at Urbana-Champaign}
  \city{Urbana}
  \country{USA}
}
\email{codyjr3@illinois.edu}

\author{P. Madhusudan}
\orcid{0000-0002-9782-721X}
\affiliation{%
  \institution{University of Illinois at Urbana-Champaign}
  \city{Urbana}
  \country{USA}
}
\email{madhu@illinois.edu}

\begin{abstract}
We propose a novel mechanism of defining data structures using \emph{intrinsic definitions} that avoids recursion and instead utilizes \emph{monadic maps satisfying local conditions}. We show that intrinsic definitions are a powerful mechanism that can capture a variety of data structures naturally. We show that they also enable a predictable verification methodology that allows engineers to write ghost code to update monadic maps and perform verification using reduction to decidable logics. We evaluate our methodology using {\sc Boogie} and prove a suite of data structure manipulating programs correct. 
\end{abstract}

\begin{CCSXML}
<ccs2012>
<concept>
<concept_id>10011007.10011074.10011099.10011692</concept_id>
<concept_desc>Software and its engineering~Formal software verification</concept_desc>
<concept_significance>500</concept_significance>
</concept>
<concept>
<concept_id>10003752.10003790.10002990</concept_id>
<concept_desc>Theory of computation~Logic and verification</concept_desc>
<concept_significance>500</concept_significance>
</concept>
<concept>
<concept_id>10003752.10003790.10003794</concept_id>
<concept_desc>Theory of computation~Automated reasoning</concept_desc>
<concept_significance>500</concept_significance>
</concept>
</ccs2012>
\end{CCSXML}

\ccsdesc[500]{Software and its engineering~Formal software verification}
\ccsdesc[500]{Theory of computation~Logic and verification}
\ccsdesc[500]{Theory of computation~Automated reasoning}

\keywords{Predictable Verification, Intrinsic Definitions, Verification of Linked Data Structures, Decidability, Ghost-Code Annotations}

\maketitle

\section{Introduction}
\label{sec:intro}

In computer science in general, and program verification in particular, classes of finite structures (such as data structures) are commonly defined using \emph{recursive definitions (aka inductive definitions)}. Proving that a set of structures is in such a class or proving that structures in the class have a property is naturally performed using \emph{induction}, typically mirroring the recursive structure in its definition. For example, trees in pointer-based heaps can be defined using the following recursive definition in first-order logic (FOL) with least fixpoint semantics for definitions:
\begin{align}
\begin{split}
\textit{tree}(x)  :: =_\textit{lfp} & ~ x=\nil  \vee \big(x \not = \nil \wedge \textit{tree}(l(x)) \wedge \textit{tree}(r(x)) \\
& ~~\wedge 
x \not \in \htree(l(x)) \wedge 
x \not \in \htree(r(x)) \wedge 
\htree(l(x)) \cap \htree(r(x)) = \emptyset\big) \\
\htree(x) :: =_\textit{lfp} &~ 
ite \left(x=\nil, ~\emptyset, 
 ~\htree\left(l(x)\right) \cup \htree\left(r(x)\right) \cup \{x\}\right)
\end{split}
\end{align}

In the above, $\htree$ maps each location $x$ in the heap to the set of all locations reachable from $x$ using $l$ and $r$ pointers, and the definition of $\tree$ uses this to ensure that the left and right trees are disjoint from each other and the root. Definitions in separation logic are similar (with heaplets being implicitly defined, and disjointness expressed using the separating conjunction '$\star$'~\cite{reynolds02,seplogicprimer,ohearn01}).

When performing imperative program verification, we annotate programs with loop invariants and contracts for methods, and reduce verification
to validation of Hoare triples of the form $\{\alpha\}s \{\beta\}$, where $s$ is a straight-line program (potentially with calls to other methods encoded using their contracts). The validity of each Hoare triple is translated to a pure logical validity question, called the \emph{verification condition} (VC). When $\alpha$ and $\beta$ refer to data structure properties, the resulting VCs are typically proved using induction on the structure of the recursive definitions. Automation of program verification reduces to automating validity of the logic the VCs are expressed in. 

Logics that are powerful enough to express rich properties of data structures are invariably {incomplete}, not just undecidable, i.e., they do not admit any automated procedure that is complete (guaranteed to eventually prove any valid theorem, but need not terminate on invalid theorems). 
For instance, validity is incomplete for both first-order logic with least fixpoints and separation logic. Consequently, though verification frameworks like {\sc Dafny}~\cite{dafny} support rich specification languages, validation of verification conditions can fail even for valid Hoare triples. Automated verification engines hence support several heuristics resulting in sound but incomplete verification. 

When proofs succeed in such systems, the verification engineer is happy that automation has taken the proof through. However, when proofs \emph{fail}, as they often do, the verification engineer is stuck and perplexed. First, they would crosscheck to see whether their annotations are strong enough and that the Hoare triples are indeed valid. If they believe they are, they do not have clear guidelines to help the tool overcome its incompleteness. Engineers are instead required to know the \emph{underlying proof mechanisms/heuristics} the verification system uses in order to figure out why the system is unable to succeed, and figure out how to help the system. For instance, for data structures with recursive definitions, the proof system may just unfold definitions a few times, and the engineer must be able to see why this heuristic will not be able to prove the theorem and formulate new inductively provable lemmas or quantification triggers that can help. 
Such \emph{unpredictable} verification systems that require engineers to know their internal heuristics and proof mechanisms frustrate verification experience. 

\mypara{Predictable Verification} 
In this paper, we seek an entirely new paradigm of \emph{predictable} verification. We want a technique where:
\begin{description}
\item [(a)] the verification engineer is asked to provide upfront a set of annotations that help prove programs correct, where these annotations are entirely \emph{independent} of the verification mechanisms/tools, and 
\item [(b)] the program verification problem, given these annotations, is guaranteed to be \emph{decidable} (and preferably decidable using efficient engines such as SMT solvers).
\end{description}

The upfront agreement on the information
that the verification engineer is required to provide makes their task crystal clear. The fact that the verification is decidable given these annotations ensures that the verification engine, given enough resources of time and space (of course) will eventually return proving the program correct or showing that the program or annotations are incorrect. There is no second-guessing by the engineer as the verification will never fail on valid theorems, and hence they need not worry about knowing how the verification engine works, or give further help. Note that the verification \emph{without annotations} can (and typically will be) undecidable.

\mypara{Intrinsic Definitions of Data Structures}
In this paper, we propose an entirely new way of defining data structures, called \emph{intrinsic definitions}, that facilitates a predictable verification paradigm 
for proving their maintenance. 
Rather than defining data structures using recursion, like in equation (1) above (which naturally calls for inductive proofs and invariably entails incompleteness), we define data structures by augmenting each location with additional information using \emph{ghost maps} and demanding that certain \emph{local conditions} hold between each location and its neighbors. 

Intrinsic definitions formally require a set of monadic maps (maps of arity one) that associate values to each location in a structure (we can think of these as ghost fields associated with each location/object). We demand that the monadic maps on local neighborhoods of every location satisfy certain logical conditions. The existence of maps that satisfy the local logical conditions ensures that the structure is a valid data structure.  

For example, we can capture trees in pointer-based heaps in the following way. Let us introduce maps $\tree: Loc \rightarrow Bool$, $\rank: Loc \rightarrow \mathbb{Q}^+$ (non-negative rationals), and $p: Loc \rightarrow Loc$ (for ``parent''),  
and demand the following local property:
\begin{align*}
\begin{split}
 \forall x::Loc. (  \tree(x) \Rightarrow & 
  (~(l(x) \not = \nil \Rightarrow (\tree(l(x)) \wedge p(l(x))=x \wedge 
         \rank(l(x)) < \rank(x))) \\
 & \wedge (r(x) \not = \nil \Rightarrow (\tree(r(x)) \wedge p(r(x))=x \wedge 
         \rank(r(x)) < \rank(x))) \\
 & \wedge ((l(x) \not = \nil \land r(x) \neq \nil) \Rightarrow l(x) \neq r(x))\\
 & \wedge (p(x) \not = \nil \Rightarrow (r(p(x))=x \vee l(p(x))=x)) )~~)
\end{split}
\end{align*}

The above demands that ranks become smaller as one descends the tree, that a node is the parent of its children, and that a node is either the left or right child of its parent. 

Given a \emph{finite} heap, it is easy to see that if there exist maps $\tree$, $\rank$ and $p$ that satisfy the above property, and
if $\textit{tree}(l)$ is true for a location $l$, then $l$ must point to a tree (strictly decreasing ranks ensure that there are no cycles and existence of a unique parent  ensures that there are no ``merges''). Furthermore, in any heap, if $T$ is the subset of locations that are roots of  trees, then there are maps that satisfy the above property and have precisely $tree(l)$ to be true for locations in $T$. 

Note that the above intrinsic definition \emph{does not use recursion} or least fixpoint semantics. It simply requires maps such that each location satisfies the local neighborhood condition.

\mypara{Fix-What-You-Break Program Verification Methodology}

Intrinsic definitions are particularly attractive for proving \emph{maintenance} of structures when structures undergo mutation. 
When a program mutates a heap $H$ to a heap $H'$, we start with monadic maps that satisfy local conditions in the pre-state. As the heap $H$ is modified, we ask the verification engineer to also \emph{repair} the monadic maps, using ghost map updates, so that the local conditions on all locations are met in the heap in the post-state $H'$.

For instance, consider a program that walks down a tree from its root to a node $x$ and introduces a newly allocated node $n$ between $x$ and $x$'s right child $r$. Then we would assume in the precondition that the monadic maps $\tree$, $\rank$, and $p$ exist satisfying the local condition (2) above. After the mutation, we would simply update these maps so that $\tree(n)$ is true, $p(r)=n$, $p(n)=x$, and $rank(n)$ is, say, $(\textit{rank(x)}+\textit{rank}(r))/2$. 

The annotations required of the user, therefore, are ghost map updates to locations such that the local conditions are valid for each location. We will guarantee that checking whether the local conditions holds for each location, after the repairs, is expressible in decidable logics.

We propose a modular verification approach for verifying data structure maintenance that asks the programmer to fix what they break. 
Given a program that we want to verify, we instead verify an \emph{augmented program} that keeps track of a ghost set of \emph{broken locations} $\Br$. Broken locations are those that (potentially) do not satisfy the local condition. 
When the program destructively modifies the fields of an object/location, it and some of its neighbors (accessible using pointers from the object) may not satisfy the local condition anymore, and hence will get added to the broken set. The verification engineer must repair the monadic maps on these broken locations and ensure (through an assertion) that the local condition holds on them before removing them from the broken set $Br$. However, even while repairing monadic maps on a location, the local condition on \emph{its neighboring} locations may fail and get added to the broken set. 

We develop a \emph{fix-what-you-break (FWYB)} program verification paradigm, giving formal rules of how to augment programs with broken sets, how users can modify monadic maps, and fixed recipes of how broken sets are maintained in any program. 
In order to verify that a method $m$ maintains a data structure, we need to prove that if $m$ starts with the broken set being empty, it returns with the empty broken set. We prove this methodology sound, i.e., if the program augmented with broken sets and ghost updates is correct, then the original program maintains the data structure properties mentioned in its contracts.

\mypara{Decidable Verification of Annotated Programs} 
The general idea of using local conditions to capture global properties has been explored in the literature to reduce the complexity of proofs (e.g., iterated separation in separation logic~\cite{ReynoldsSepLogic02}; see Section~\ref{sec:relwork}). Intrinsic definitions of data structures and the fix-what-you-break program verification methodology are more specifically designed to ensure the key property of \emph{decidable verification of annotated programs} by avoiding both recursion/least-fixpoint definitions and avoiding even quantified reasoning.

The verification conditions for Hoare triples involving basic blocks of our annotated programs have the following structure. First, the precondition can be captured using \emph{uninterpreted monadic functions} that are \emph{implicitly} assumed to satisfy the local condition on each location that is not in the broken set $Br$ (avoiding universal quantification). The monadic map updates (repairs) that the verification engineer makes can be captured using map updates. The postcondition of the ghost-code augmented program can, in addition to properties of variables, assert properties of the broken set $Br$ using logics over sets. Finally, we show that capturing the modified heap after function calls can be captured using  \emph{parameterized map update} theories, that are decidable~\cite{pointwisearrays}. 
Consequently, the entire verification condition is captured in quantifier-free logics involving maps, parametric map updates, and sets over combined theories. These verification conditions are hence decidable and efficiently handled by modern SMT solvers\footnote{Assuming of course that the underlying quantifier-free theories are decidable; for example, integer multiplication in the program or in local conditions would make verification undecidable, of course.}.

\mypara{Intrinsic Definitions for Representative Data Structures and Verification in {\scshape \bfseries Boogie}}
Intrinsic definitions of data structures is a novel paradigm and capturing data structures requires thinking anew in order to formulate monadic maps and local conditions that characterize them. 

We give intrinsic definitions for several classic data structures such as linked lists, sorted lists, circular lists, trees, binary search trees, AVL trees, and red-black trees. These require novel definitions of monadic maps and local conditions. We also show how standard methods on these data structures (insertions, deletions, concatenations, rotations, balancing, etc.) can be verified using the fix-what-you-break strategy and standard loop invariant/contract annotations. 
We also consider \emph{overlaid data structures} 
consisting of multiple data structures overlapping and
sharing locations. In particular, we model the core of an overlaid data structure that is used in an I/O scheduler in Linux that has a linked list (modeling a FIFO queue) overlaid on a binary search tree (for efficient search over a key field). Intrinsic definitions beautifully capture such structures by compositionally combining the instrinsic definitions for each structure and a local condition linking them together. We show methods to modify this structure are provable using fix-what-you-break verification.

We model the above data structures and the annotated methods in the low-level programming language {\sc Boogie}. {\sc Boogie} \inrevA{is} an intermediate programming language with verification support that several high-level programming languages compile to for verification (e.g., C~\cite{vcc,havoc}, {\sc Dafny}~\cite{dafny}, {\sc Civl}~\cite{civl}, Move~\cite{move}). These annotated programs do not use quantifiers or recursive definitions, and {\sc Boogie} is able to verify them automatically using decidable verification in negligible time, without further user-help. 

\mypara{Contributions}
The paper makes the following contributions:
\begin{itemize}
    \item A new paradigm of \emph{predictable verification} that asks upfront for programmatic annotations and ensures annotated program verification is decidable, without reliance on users to give heuristics and tactics.
    \item A novel notion of intrinsic definitions of  data structures based on ghost monadic maps and local conditions.
    \item A predictable verification methodology for programs that manipulate data structures with intrinsic definitions following a fix-what-you-break (FWYB) methodology.
    \item Intrinsic definitions for several classic data structures, and fix-what-you-break annotations
    for programs that manipulate such structures, with realization of these programs and their verification using {\sc Boogie}.
\end{itemize}

\section{Intrinsic Definitions of Data Structures: The Framework}
\label{sec:datastructures}

In this section we present the first main contribution of our paper, the framework of  intrinsically defined data structures. We first define the notion of a data structure in a pointer-based heap.

\subsection{Data Structures}
\label{sec:ds}

In this paper, we think of data structures defined using a \emph{class} $C$ of objects. The class $C$ can coexist with other classes, heaps, and data structures, potentially modeled and reasoned with using other mechanisms. For technical exposition and simplicity, we restrict the technical definitions to a single class of data structures over a class $C$. 

A class $C$ has a signature $(\Ss, \Ff)$ consisting of a finite set of sorts $\Ss = \{\sigma_0, \sigma_1 \ldots, \sigma_n\}$ and a finite set of fields $\Ff = \{f_1, f_2 \ldots, f_m\}$. We assume without loss of generality that the sort $\sigma_0$ represents the sort of objects of the class $C$, and we denote this sort by $C$ itself. We use $C$ to model objects in the heap. %
The other ``background'' sorts, e.g., integers, are used to model the values of the objects' fields. Each field $f_i: C \rightarrow \sigma$ is a unary function symbol and is used to model pointer and data fields of heap locations/objects. We model $\nil$ as a non-object value and denote the sort $C \uplus \{\nil\}$ consisting of objects as well as the $\nil$ value by $C?$.

A $C$-heap $H$ is a \emph{finite}  first-order model of the signature of $C$. More formally, it is a pair $(O, I)$ where $O$ is a finite set of \emph{objects} interpreting the foreground sort $C$ and $I$ is an interpretation of every field in $\Ff$ for every object in $O$.

\begin{minipage}{0.65\textwidth}
\begin{example}[$C$-Heap]
\label{ex:c-heap}
Let $C$ be the class consisting of a pointer field $\nxt: C \rightarrow C?$ and a data field $\key: C \rightarrow \mathit{Int}$. The figure on the right represents a $C$-heap consisting of objects $O= \{o_1,o_2\}$ and the illustrated interpretation $I$ for $\nxt$ and $\key$.\qed
\end{example}
\end{minipage}
\hspace{5pt}
\begin{minipage}{0.3\textwidth}
\centering
\begin{tikzpicture}[node distance={16mm}, main/.style = {draw, circle}]
\node[main] (o1) {$o_1$};
\node[main] (o2) [right of=o1] {$o_2$};
\node (nil) [right of=o2] {$\nil$};
\node (o1k) [below=6mm of o1] {$1$};
\node (o2k) [below=6mm of o2] {$2$};
\draw[->] (o1) -- (o2) node[above,midway] {$\nxt$};
\draw[->] (o2) -- (nil) node[above,midway] {$\nxt$};
\draw[->] (o1) -- (o1k) node[right,midway] {$\key$};
\draw[->] (o2) -- (o2k) node[right,midway] {$\key$};
\end{tikzpicture}
\end{minipage}

\smallskip
\noindent We now define a data structure. We fix a class $C$. 

\begin{definition}[Data Structure]
\label{defn:ds}
A data structure $D$ of arity $k$ is a set of triples of the form $(O, I, \overline{o})$ such that $(O,I)$ is a $C$-heap and $\overline{o}$ is a $k$-tuple of objects from $O$.\qed
\end{definition}

Informally, a data structure is a particular subset of $C$-heaps along with a distinguished tuple of locations $\overline{o}$ in the heap that serve as the ``entry points'' into the data structure, such as the root of a tree or the ends of a linked list segment. 

\begin{example}[Sorted Linked List]
\label{ex:ds-sorted-list}
Let $C$ be the class defined in Example~\ref{ex:c-heap}. The data structure %
of sorted linked lists is the set of all $(O, I, o_1)$ such that $O$ contains objects $o_1, o_2 \ldots o_n$ %
with the interpretation $\nxt(o_i) = o_{i+1}$ and $\key(o_i) \leq \key(o_{i+1})$ for every $1 \leq i < n$, and $\mathit{next}(o_n) = \nil$. For example, let $(O, I)$ be the $C$-heap described in Example~\ref{ex:c-heap}. The triple $(O, I, o_1)$ is an example of a sorted linked list. %
Here $o_1$ represents the head of the sorted linked list.\qed
\end{example}

\subsection{Intrinsic Definitions of Data Structures}
\label{sec:ids}

In this work, we propose a characterization of data structures using \emph{intrinsic definitions}. Intrinsic definitions consist of
a set of \emph{monadic maps} that associate (ghost) values to each object and a set of \emph{local} conditions that constrain the monadic maps on each location and its neighbors. A $C$-heap is considered to be a valid data structure  if \emph{there exists} a set of monadic maps for the heap that satisfy the local conditions.

Annotations using intrinsic definitions enable local and decidable reasoning for correctness of programs manipulating data structures using the Fix-What-You-Break (FWYB) methodology, which is described later in Section~\ref{sec:proglogic}. We develop the core idea of intrinsic definitions below.

\mypara{Ghost Monadic Maps} We denote by $C_{\Gg} = (\Ss, \Ff \cup \Gg)$ an extension of $C$ with a finite set of monadic (i.e., unary) function symbols $\Gg$. We can think of these as \emph{ghost} fields of objects.

The key idea behind intrinsic definitions is to extend a $C$-heap with a set of ghost monadic maps and formulate local conditions using the maps that characterize the heaps belonging to the data structure. The existence of such ghost maps satisfying the local conditions is then the intrinsic definition. Definitions are parameterized by a multi-sorted first-order logic $\Ll$ in which local conditions are stated. The logic has the sorts $\Ss$ and contains the function symbols in $\Ff \cup \Gg$, as well as interpreted functions over background sorts 
(such as $+$ and $<$ on integers, and $\subseteq$ on sets). %

\begin{definition}[Intrinsic Definition]
\label{defn:ids}
Let $C=(\Ss, \Ff)$ be a class. 
An intrinsic definition $\mathit{IDS}(\overline{y})$ over the class $C$ is a tuple $(\Gg, \Ll, \LC, \varphi(\overline{y}))$ where:
\begin{enumerate}
    \item $\Gg$ is a finite set of \emph{monadic map names and function signatures} disjoint from $\Ff$,
    \item $\Ll$ is a first-order logic over the sorts $\Ss$ containing the interpreted functions of the background sorts as well as the function symbols in $\Ff \cup \Gg$,
    \item A \emph{local condition} formula $\LC$ of the form $\forall x: Loc.\, \rho(x)$ such that $\rho$ is a quantifier-free $\Ll$-formula, and
    \item A \emph{correlation formula} $\varphi(\overline{y})$ that is a quantifier-free $\Ll$-formula over free variables $\overline{y} \in Loc$.\qed
\end{enumerate}

We denote an intrinsic definition by $(\Gg, \LC, \varphi(\overline{y}))$ when the logic $\Ll$ is clear from context. In this work $\Ll$ is typically a decidable combination of quantifier-free theories~\cite{nelson80,nelson-oppen1979,tinellizarba}, containing theories of integers, sets, arrays~\cite{pointwisearrays}, etc., supported effectively in practice by SMT solvers~\cite{Z3,cvc4}. %
\end{definition}

\begin{definition}[Data Structures defined by Intrinsic Definitions]
\label{defn:ds-of-ids}
Let $C=(\Ss, \Ff)$ be a class and $\mathit{IDS}(\overline{y})=(\Gg, \textit{LC}, \varphi(\overline{y}))$ be an intrinsic definition over $C$ consisting of monadic maps $\Gg$, local condition $\LC$ and correlation formula $\varphi$. The data structure defined by $\mathit{IDS}$ is precisely the set of all $(O, I, \overline{o})$ where
\emph{there exists} an interpretation $J$ that extends $I$ with interpretations for the symbols in $\Gg$ such that $O, J \models LC$ and $O, J[\overline{y} \mapsto \overline{o}] \models \varphi(\overline{y})$, where $[\overline{y} \mapsto \overline{o}]$ denotes that the free variables $\overline{y}$ are interpreted as $\overline{o}$.
\end{definition}

Informally, given a data structure $\mathit{DS}$ consisting of triples $(O, I, \overline{o})$, an intrinsic definition demands that there exist monadic maps $\Gg$ such that the $C$-heaps $(O, I)$ in the data structure can be extended with values for maps in $\Gg$ satisfying the local conditions $\LC$, and the entrypoints $\overline{o}$ are characterized in the extension by the quantifier-free formula $\varphi$. 

\begin{example}[Sorted Linked List]
\label{ex:ids-sorted-list}
Recall the data structure of sorted linked lists defined in Example~\ref{ex:ds-sorted-list}. We capture sorted linked lists by an intrinsic definition $\mathit{SortedLL}(y)$ using monadic maps $\sortedll: C \rightarrow \mathit{Bool}$ and $\rank: C \rightarrow \mathbb{Q}^+$ 
such that:
\begin{align*}
\LC \equiv &\;\forall x.\; \Big((\sortedll(x) \,\land\, \nxt(x) \neq \nil) \Rightarrow\\[-7pt] &\hspace{6em}(\sortedll(\nxt(x)) \,\land\, \rank(\nxt(x)) < \rank(x) \,\land\, \key(x) \leq \key(\nxt(x))) \Big)\\[-7pt]
\varphi(y) \equiv &\;\; \sortedll(y)
\end{align*}

In the above definition the $\rank$ field decreases wherever $\sortedll$\, holds as we take the $\textit{next}$ pointer, and hence assures that there are no cycles. Observe that without the constraint on $\rank$, the triple  $(\{o_1,o_2\}, I, o_1)$ where $I = \{\nxt(o_1) = o_2, \nxt(o_2) = o_1, \key(o_1) = \key(o_2) = 0\}$ denoting a two-element circular list would satisfy the definition, which is undesirable. 

Note that the above allows for a heap to contain both sorted lists as well as unsorted lists. We are guaranteed by the local condition that the set of all objects where $\textit{sortedll}$ is true will be the heads of sorted lists.

We can also replace the domain of ranks in the above definition using any strict partial order, say integers or reals (with the usual $<$ order on them), and the definition will continue to define sorted lists. Well-foundedness of the order is not important as heaps are \emph{finite} in our work (see definition of $C$-heaps in Section~\ref{sec:ds})\qed 
\end{example}

\iflong
\section{Fix What You Break (FWYB) Verification Methodology}
\label{sec:proglogic}

In this section we present the second main contribution of this paper: the Fix-What-You-Break (FWYB) methodology for verifying programs with respect to data structure properties expressed using intrinsic definitions. We begin by describing a while programming language and defining the verification problem we study.

\subsection{Programs, Contracts, and Correctness}
\label{sec:triples}

We fix a class $C = (\Ss, \Ff)$ throughout this section.

\mypara{Programs} Figure~\ref{fig:prog-lang} shows the programming language used in this work. It contains assignments, field lookups, and field mutations, as is usual with programming languages for heaps. We can also use variables and expressions over other sorts. Functions can return multiple outputs. We assume that method signatures contain designated output variables, and therefore the return statement does not mention values. We also note that there is no command for deallocating objects. Instead, we assume that the language has garbage collection. Both these assumptions are true in \Dafny where we implement the FWYB methodology (Section~\ref{sec:implementation}).

\madhu{Why not add methods to grammar in Fig1? Something like Decl:= f(xbar) returns ybar $\{$ P $\}$}

\begin{figure}
\begin{align*}
P \coloneqq &\;\; x\, :=\, \nil\;\; |\;\; x\, :=\, y\;\; |\;\; v\, :=\, be \tag{Assignment} \\[-3pt]
&\;\;|\;\; y\, :=\, x.f\;\; |\;\; v\, :=\, x.d \tag{Lookup}\\[-3pt]
&\;\; |\;\; x.f\, :=\, y\;\;|\;\; x.d\, :=\, v \tag{Mutation}\\[-3pt] 
&\;\;|\; \;x\, :=\, \mathsf{new}\; C() \tag{Allocation}\\[-3pt]
&\;\;|\; \;\overline{r}\, :=\, \mathit{Function}(\overline{t}) \tag{Function Call}\\[-3pt]
&\;\; |\;\; \mathsf{skip}\;\;|\;\; \mathsf{assume}\;\mathit{cond}\;\;|\;\; \mathsf{return}\;\;|\;\; P\, ; \, P\;\;|\;\; \mathsf{if}\; \mathit{cond}\; \mathsf{then}\; P\; \mathsf{else}\; P\;\; |\;\; \mathsf{while}\; \mathsf{cond}\; \mathsf{do}\; P\;\\[-1pt]
\mathit{cond} \coloneqq &\;\; x = y\;\;|\;\; x \neq y \;\;|\;\; \mathit{be} \tag{Condition Expressions} 
\end{align*}
\caption{Grammar of while programs with recursion. $x,y$ are variables denoting objects of class $C?$ (i.e., $C$ objects or $\nil$), $v, w$ are a background sort(s) variables, $r, t$ denote variables of any sort, $f$ is a pointer field, $d$ is a data field, and $be$ is a expression of the background sort(s).}
\label{fig:prog-lang}
\end{figure}

\mypara{Operational Semantics} A program configuration consists of a store (a mapping of variables to values of the appropriate type) and a $C$-heap. More formally, it is a triple $\theta = (s, O, I)$ where $s$ is a store, $O$ is a finite set of objects of the class $C$, and $I$ is an interpretation of the function symbols in $\Ff$. This is similar to the notion of configurations for other programming languages for heaps in prior literature~\cite{seplogicprimer,reynolds02,framelogictoplas2023,kassios06}. The operational semantics is the usual one for heap manipulating programs with function calls. In particular, all dereferences must be memory safe and unsafe dereferences lead to the error state $\bot$. %
For the remainder of this section, we simply denote the transition between configurations on a program according to the operational semantics by $\theta \xrightarrow{P} \theta'$, and the satisfaction of a pre/post condition on a configuration by $\theta \models \alpha$.

\mypara{Intrinsic Hoare Triples} The verification problem we study in this paper is \emph{maintenance} of data structure properties. Fix an intrinsic definition $(\Gg, \LC, \varphi(\overline{y}))$ where $\Gg = \{g_1, g_2\ldots, g_k\}$. Let $\overline{z}$ be the input/output variables for a program that we want to verify. We consider pre and post conditions of the form 
\begin{center}
$\soexists\, g_1, g_2\ldots,g_k.\, (\LC \land \varphi(\overline{w}) \land \psi(\overline{z}))$
\end{center}

\noindent
where each $g_i$ is a ghost monadic map (unary functions over locations), $\psi$ is a quantifier-free formula over $\overline{z}$ that can use the ghost monadic maps $g_i$, and $\overline{w}$ is a tuple of variables from $\overline{z}$ whose arity is equal to $\overline{y}$. Note that the above has a second-order existential quantification ($\soexists$) over function symbols $g_1, \ldots, g_k$, and $\LC$ has first-order universal quantification over a single location variable.

Read in plain English, specifications are of the form ``$\,\overline{w}$ points to a data structure $\mathit{IDS}$ such that the (quantifier-free) property $\psi(\overline{z})$ holds''. For example, given inputs $x$ of type $C$ and $k$ of type $\mathit{Int}$ to a program we can specify a precondition that $x$ points to a sorted linked list such that the key stored at $x$ is smaller than $k$. The corresponding formula is $\soexists\,\sortedll,\rank.\, (\LC \land \sortedll(x)\land \key(x) < k)$, where $\LC$ is the local condition for a sorted linked list defined in Example~\ref{ex:ids-sorted-list}.

\medskip
\noindent
We study the validity of the following Hoare Triples:

\begin{center}
$\langle\,\alpha(\overline{x})\,\rangle\;\; \mathrm{P}(\overline{x}, \,\mathit{ret}\!:\, \overline{r})\;\; \langle\,\beta(\overline{x}, \overline{r})\,\rangle$
\end{center}

\noindent
where $\alpha$ and $\beta$ are pre and post conditions of the above form, $\mathrm{P}$ is a program, and $\overline{x}, \overline{r}$ are respectively input and output variables for $\mathrm{P}$.

\smallskip
We use a simple contract for insertion into a sorted linked list as a running example:

\begin{example}[Running Example: Insertion into a Sorted List]
\label{ex:slist-insert-triple}
Let $\mathit{SortedLL}(y) = (\Gg, \LC, \mathit{sorted}(y))$ be the intrinsic definition of a sorted linked list given in Example~\ref{ex:ids-sorted-list} where $\Gg = \{\sortedll,\rank\}$. The following Hoare triple says that insertion into a sorted list returns a sorted list:
\begin{center}
$\langle\,\soexists\, \sortedll,\rank.\, \LC \land \sortedll(x) \,\rangle\; \mathit{sorted\!\!-\!\! insert}(x,k,\, \mathit{ret}\!: x)\; \langle\,\soexists\, \sortedll,\rank.\, \LC \land \sortedll(x)\,\rangle$
\end{center}

\noindent
where $x,r$ are variables of type $C$, $k$ is of type $\mathit{Int}$ and $\mathit{sorted\!\!-\!\! insert}$ is the usual recursive method that inserts a key into a sorted linked list by recursively traversing the list starting from $x$ until it finds the appropriate place to insert $k$. We provide the relevant snippets to explain our methodology in later sections. 

Note that the input $x$ is of type $C$ (rather than $C?$) and therefore it cannot be $\nil$. We do this for simplicity of exposition; one can write a more complex contract allowing for $x$ possibly being $\nil$.
\end{example}

\mypara{Validity of Intrinsic Hoare Triples} We define the validity of Hoare Triples using the notation developed above:

\begin{definition}[Validity of Intrinsic Hoare Triples]
\label{defn:ids-triple-validity}
An intrinsic triple $\langle\, \alpha\, \rangle\, P\, \langle\, \beta\, \rangle$ is \emph{valid} if for every configuration $\theta$ such that $\theta \models \alpha$, transitioning according to $P$ starting from $\theta$ does not encounter the error state $\bot$, %
and furthermore, if $\theta \xrightarrow{P} \theta'$ for some $\theta'$, then $\theta' \models \beta$.
\end{definition}

\smallskip
\subsection*{An Overview of FWYB}

We develop the Fix-What-You-Break (FWYB) methodology in three stages, in the following subsections. We give here an overview of the methodology and the stages.

Recall that intrinsic triples are of the form $\langle\,\soexists\, g_1, g_2\ldots,g_k.\, (\LC \land \varphi \land \alpha)\,\rangle\; P\; \langle\,\soexists\, g_1, g_2\ldots,g_k.\, (\LC \land \varphi \land \beta)\,\rangle$. In Stage 1 (Section~\ref{sec:ghost-code}) we remove the second-order quantification from the monadic maps to obtain specifications of the form $\LC \land \varphi \land \psi$. We do this by re-framing the problem such that the $g_i$ maps are explicitly available in the pre state as ``ghost'' fields of objects (such that they satisfy $\LC \land \varphi \land \alpha$), and requiring the verification engineer to update the fields using \emph{ghost code} and construct the $g_i$ maps for the post state such that they satisfy $\LC \land \varphi \land \beta$. Intuitively, instead of reasoning with specifications that say there exist maps (obtained magically!) in the pre and post state that satisfy certain conditions, the verification engineer argues that for \emph{any} maps given in the pre state satisfying the precondition, we can \emph{compute} a set of corresponding maps for the post state satisfying the postcondition. This transforms the problem of study into triples of the form $\langle\,\LC \land \varphi \land \alpha\,\rangle\; P_\Gg\; \langle\,\LC \land \varphi \land \beta\,\rangle$ where $P_\Gg$ is an augmentation of $P$ with ghost code that updates the values of the fields in $\Gg$ wherever the execution of $P$ breaks the local conditions $\LC$.

In Stage 2 (Section~\ref{sec:broken-sets}) we partially remove the first-order quantification in specifications by restricting the set of objects where local conditions are required to hold. Specifically, we introduce a special variable called the \emph{broken set} $\Br$ and use it to track objects where the program's changes to the heap destroy the local conditions. Then we only require that at any point in the program, local conditions hold for objects not in the broken set. Formally, we transform specifications of the form $(\forall x.\, \rho) \land \varphi \land \psi$ where $\LC \equiv \forall x.\, \rho$ into $(\forall x \notin \Br. \rho) \land \varphi \land \psi$\footnote{This is of course more general, as setting $\Br = \emptyset$ degenerates to the original specifications. We need the more general form to state contracts of called methods and loop invariants at a general point in the program where the set of broken objects may not be empty.}. We require the verification engineer to maintain the $\Br$ set accurately by adding objects to it when the program breaks their local conditions and removing them when the engineer's updates to their ghost fields fixes the local conditions on them. This leaves us with triples of the form $\langle\,(\forall x \notin \Br) \land \varphi \land \alpha\,\rangle\; P_{\Gg,\Br}\; \langle\,(\forall x \notin \Br) \land \varphi \land \beta\,\rangle$ where $P_{\Gg,\Br}$ is an augmentation of $P$ with ghost code that updates both $\Gg$ and $\Br$ correctly. 

In Stage 3 (Section~\ref{sec:well-behaved}) we show how to remove the quantification entirely for programs with \emph{well-behaved} manipulations of the broken set, reducing the problem to triples of the form $\langle\,\varphi \land \alpha\,\rangle\; P_{\Gg,\Br}\; \langle\,\varphi \land \beta\,\rangle$. Intuitively, the well-behaved programming paradigm ensures careful handling of the broken set by forcing the verification engineer to add objects to the broken set if their local conditions are potentially broken by statements that change the heap (mutations, allocation, etc.) and only allowing them to remove the objects from the broken set if the engineer can show that local conditions hold on them. In Section~\ref{sec:cs-slist-insert} we introduce a syntactically definable fragment of well-behaved programs that is capable of expressing many data structure manipulation methods (as shown by our case studies and evaluation in Sections~\ref{sec:case-studies} and~\ref{sec:evaluation}).

One of the salient features of monadic maps and local conditions is that they offer simple \emph{frame reasoning}. When the fields of a location $x$ are mutated, the local condition may cease to hold for $x$ as well as neighbors of $x$ (locations accessible from $x$ using pointer-sequences or locations that can access $x$ using pointer-sequences, where these pointer-sequences are mentioned in the local condition). However, we know that local conditions are not affected on other locations further away, and we exploit this implicit frame reasoning to maintain broken sets. 

In Section~\ref{sec:soundness} we prove the soundness of the entire FWYB methodology, namely that if $\langle\,\varphi \land \alpha\,\rangle\; P_{\Gg,\Br}\; \langle\,\varphi \land \beta\,\rangle$ is valid and $P_{\Gg, \Br}$ is well-behaved, then the triple $\langle\,\soexists\, g_1, g_2\ldots,g_k.\, (\LC \land \varphi \land \alpha)\,\rangle\; P\; \langle\,\soexists\, g_1, g_2\ldots,g_k.\, (\LC \land \varphi \land \beta)\,\rangle$ intended by the user is valid. The validity of the former triple can be discharged effectively by decision procedures for combinations of quantifier-free theories, which makes the FWYB methodology entirely automatic given the ghost annotations.

\subsection{Stage 1: Removing Existential Quantification over Monadic Maps using Ghost Code}
\label{sec:ghost-code}

Consider an intrinsic Hoare Triple $\langle\,\soexists\, g_1, g_2\ldots,g_k.\, (\LC \land \varphi \land \alpha)\,\rangle\; P\; \langle\,\soexists\, g_1, g_2\ldots,g_k.\, (\LC \land \varphi \land \beta)\,\rangle$. Read simply, the precondition says that \emph{there exist} maps $\{g_i\}$ satisfying some properties, and the postcondition says that we must \emph{show the existence} of maps $\{g_i\}$ satisfying the post state properties. 

We remove existential quantification from the problem by re-formulating it as follows: we assume that we are \emph{given} the maps $\{g_i\}$ as part of the pre state such that they satisfy $\LC \land \varphi \land \alpha$, and we require the verification engineer to \emph{compute} the $\{g_i\}$ maps in the post state satisfying $\LC \land \varphi \land \beta$. The engineer computes the post state maps by taking the given pre state maps and `repairing' them on an object whenever the program breaks local conditions on that object. %
The repairs are done using \emph{ghost code}, which is a common technique in verification literature~\cite{Jonesghostcode,lucasghostcode,spiritofghostcode,ReynoldsCraftOfProgramming}. Let us consider an example:

\begin{example}[Proof using Ghost Code]
\label{ex:slist-insert-ghostcode}
We use the running example (Example~\ref{ex:slist-insert-triple}) of insertion into a sorted list. Figure~\ref{fig:slist-normal-ghost} shows on the left a snippet where the key $k$ that is to be inserted lies between the keys of $x$ and $\nxt(x)$ (which we assume is not $\nil$). We ignore the conditionals that determine $\nxt(x) \neq \nil$ and $\key(x) \leq k \leq \key(\nxt(x))$ for brevity.

\begin{figure}[htbp]
\begin{subfigure}[t]{0.4\textwidth}
\raggedright
\begin{alltt}
\annotation{pre:} \ensuremath{\soexists\,\sortedll,\rank.\;\LC\land\sortedll(x)}
\annotation{post:} \ensuremath{\soexists\,\sortedll,\rank.\;\LC\land\sortedll(x)}
 y := x.next;
 z := \keyword{new} C();
 z.key := k;
 z.next := y;
 x.next := z;
 \hfill
 \hfill
 \hfill
\end{alltt}
\end{subfigure}
\hspace{2em}
\begin{subfigure}[t]{0.4\textwidth}
\raggedright
\begin{alltt}
\annotation{pre:} \ensuremath{\LC\land\sortedll(x)}
\annotation{post:} \ensuremath{\LC\land\sortedll(x)}
 y := x.next;
 z := \keyword{new} C();
 z.key := k;
 z.next := y;
 \ghost{z.sortedll := True;}
 x.next := z;
 \ghost{z.rank := (x.rank + y.rank)/2;}
 \codecomment{// LC holds for x,y, and z}
\end{alltt}
\end{subfigure}
\caption{Left: Code and specifications written by the user; Right: Augmented program with ghost updates. Specifications do not quantify over monadic maps.}
\label{fig:slist-normal-ghost}
\end{figure}

Let us assume that objects now have fields $\sortedll$ and $\rank$ satisfying the local conditions in the pre state. The local conditions (see Example~\ref{ex:ids-sorted-list}) say that if $\sortedll$ holds on $x$ (which is true in our case), then $\sortedll$ must also hold on $\nxt(x)$, the $\rank$ of $\nxt(x)$ must be smaller than the $\rank$ of $x$, and its key must be larger than the key of $x$. The program violates some of these conditions.%

Observe that the required order on keys is already satisfied since the insertion is correct. The conditions that are violated are: (a) $\sortedll(z)$ must hold since $\sortedll(x)$ holds, (b) the rank of $z$ must larger than that of $y$, and (c) the rank of $x$ must be larger than that of $z$. We fix these violations using ghost updates, shown in Figure~\ref{fig:slist-normal-ghost} on the right (marked in \textcolor{blue}{blue}).

Here the specifications do not contain the second-order existential quantification over the monadic maps. Instead, they are treated as (ghost) fields and manipulated using the usual field update statements.

A formal proof of the sufficiency of the repairs shown above is complex. In general, the mutation \texttt{x.next := z} may break both \texttt{x} and \texttt{z}, similarly each field update has its own set of \emph{impacted objects}, and one must reason about a set of such updates and fixes. Our contributions in Stage 2 and Stage 3 presented in later sections enable the engineer to systematically track broken objects and reason about the sufficiency of repairs effectively.\qed
\end{example}

We note a point of subtlety here: the second triple in the above example eliminates existential quantification over $\Gg$ by claiming something stronger than the original specification, namely that for \emph{any} maps $\sortedll$ and $\rank$ such that $\LC \land \sortedll(x)$ is satisfied in the pre state, there is a \emph{computation} that yields corresponding maps in the post state such that the property is preserved. The onus of coming up with such a computation is placed on the verification engineer.

\medskip
Formally, fix an intrinsically defined data structure $(\Gg, \LC, \varphi)$. We extend the class signature $C = (\Ss, \Ff)$ (and consequently the programming language) to $C_\Gg = (\Ss, \Ff \cup \Gg)$ and treat the symbols in $\Gg$ as \emph{ghost fields} of objects of class $C$ in the program semantics. We do not formalize ghost code here as it is standard; please see prior literature for a formal exposition~\cite{Jonesghostcode,lucasghostcode,spiritofghostcode,ReynoldsCraftOfProgramming}. Intuitively, ghost variables/fields cannot influence the computation of non-ghost variables/fields. In particular, this means that when conditional statements or loops use ghost variables in the condition, the body of the statement must also consist entirely of ghost code. We must also ensure that such `ghost loops' are always terminating since a nonterminating ghost loop changes the meaning of the original program. We do this by requiring the verification engineer to write a ranking function which decreases with each iteration of the ghost loop. We henceforth mean the validity of a Hoare Triple over a program containing such ghost loops with ranking functions to include the condition that the ranking function is shown to be decreasing. Given a program augmented with ghost code as described above, we use below the idea of `projecting out' the ghost code, which refers to the original program with all ghost code simply eliminated. Note that the definition of ghost code ensures that the projection operation is sensible. \amcomment{Madhu, please take a look at this} %

\smallskip
Performing the transformation described in this section %
reduces the verification problem to proving triples of the form $\langle\, \LC \land \varphi \land \alpha \,\rangle\; P_{\Gg}\; \langle\, \LC \land \varphi \land \beta \,\rangle$, where there is no existential quantification over $\Gg$ and $P_\Gg$ is an augmentation of $P$ with ghost code that updates the $\Gg$ maps. The following proposition captures the correctness of this reduction:

\begin{proposition}
\label{prop:ghost-code}
Let $\psi_{\mathit{pre}}$ and $\psi_{\mathit{post}}$ be quantifier-free formulae over $\Ff \cup \Gg$. If $\langle\, \psi_{\mathit{pre}} \,\rangle\; P_{\Gg}\; \langle\, \psi_{\mathit{post}} \,\rangle$ is valid then $\langle\, \exists g_1, g_2\ldots , g_k.\, \psi_{\mathit{pre}} \,\rangle\; P\; \langle\, \exists g_1, g_2\ldots , g_k.\,\psi_{\mathit{post}} \,\rangle$ is valid~\footnote{Here the notion validity for both triples is given by Definition~\ref{defn:ids-triple-validity}, where configurations are interpreted appropriately with or without the ghost fields.}, where $P$ is the projection of $P_\Gg$ obtained by eliminating ghost code. 
\end{proposition}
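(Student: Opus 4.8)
The plan is to prove Proposition~\ref{prop:ghost-code} by a direct simulation argument between executions of $P_\Gg$ and executions of its ghost-projection $P$. The key observation is that ghost code, by definition, cannot influence the non-ghost part of a configuration: if we fix a configuration $\theta_\Gg = (s_\Gg, O, I_\Gg)$ of $P_\Gg$ (where $s_\Gg$ and $I_\Gg$ include the ghost variables/fields in $\Gg$) and let $\theta = (s, O, I)$ be its projection (dropping the $\Gg$-components), then the two programs remain in lockstep on the non-ghost state: whenever $\theta_\Gg \xrightarrow{P_\Gg} \theta_\Gg'$, the projection $\theta'$ of $\theta_\Gg'$ satisfies $\theta \xrightarrow{P} \theta'$, and conversely every $P$-execution from $\theta$ lifts to a $P_\Gg$-execution from any $\theta_\Gg$ projecting to $\theta$ (the ghost code is deterministic given the ghost state, and the ghost loops terminate by the ranking-function proviso folded into validity). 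In particular $P_\Gg$ reaches $\bot$ from $\theta_\Gg$ iff $P$ reaches $\bot$ from $\theta$, since memory-safety depends only on the non-ghost heap.

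First I would state this simulation lemma precisely and note it follows from the standard properties of ghost code cited in the paper (projection is well-defined, ghost conditionals/loops have ghost bodies, ghost loops terminate). Then I would carry out the main argument. Suppose $\langle\,\psi_{\mathit{pre}}\,\rangle\; P_\Gg\; \langle\,\psi_{\mathit{post}}\,\rangle$ is valid. To show $\langle\,\exists g_1,\dots,g_k.\,\psi_{\mathit{pre}}\,\rangle\; P\; \langle\,\exists g_1,\dots,g_k.\,\psi_{\mathit{post}}\,\rangle$ is valid, take any configuration $\theta = (s,O,I)$ of $P$ with $\theta \models \exists g_1,\dots,g_k.\,\psi_{\mathit{pre}}$. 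By definition of satisfaction of the existential, there is an interpretation of the $\Gg$-symbols — call the resulting extended configuration $\theta_\Gg$ — with $\theta_\Gg \models \psi_{\mathit{pre}}$, and $\theta_\Gg$ projects to $\theta$. By validity of the first triple, $P_\Gg$ does not reach $\bot$ from $\theta_\Gg$; by the simulation lemma, neither does $P$ from $\theta$. Moreover, if $\theta \xrightarrow{P} \theta'$, then by the converse direction of the simulation lemma the $P$-execution lifts to $\theta_\Gg \xrightarrow{P_\Gg} \theta_\Gg'$ with $\theta_\Gg'$ projecting to $\theta'$; validity of the first triple gives $\theta_\Gg' \models \psi_{\mathit{post}}$, and since $\psi_{\mathit{post}}$ is a formula over $\Ff \cup \Gg$ whose free variables are program variables, the projection $\theta'$ together with the $\Gg$-interpretation from $\theta_\Gg'$ witnesses $\theta' \models \exists g_1,\dots,g_k.\,\psi_{\mathit{post}}$. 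This establishes validity of the projected triple.

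The step I expect to be the main obstacle — or at least the one needing the most care — is the precise formulation and justification of the simulation lemma, specifically the determinism and termination of the ghost-code fragments. The forward direction (every $P_\Gg$-step projects to a $P$-step) is essentially immediate from the syntactic discipline of ghost code, but the backward lifting direction requires that, given the non-ghost trace produced by $P$, there is a (unique, terminating) way to fill in the ghost steps; this is where the requirement that ghost loops come with decreasing ranking functions is essential — without it, a $P_\Gg$-execution could diverge inside a ghost loop while $P$ terminates, breaking the correspondence on terminating runs. Since the paper has already folded the ranking-function obligation into its notion of Hoare-triple validity for programs with ghost loops, this is available to us, and the rest is routine structural induction on $P$. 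I would therefore keep the simulation lemma's proof brief, citing the standard treatment of ghost code, and devote the visible argument to the quantifier-witnessing bookkeeping above.
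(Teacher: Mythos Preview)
Your proposal is correct and follows essentially the same approach as the paper. The paper factors what you call the ``simulation lemma'' into four separate lemmas---termination of pure ghost code, termination lifting (if $\hat{M}$ terminates from $c$ then $M$ terminates from any extension $C$), non-interference of pure ghost code with user state, and non-interference lifting (if $\hat{M}$ takes $c_1$ to $c_2$ then $M$ takes any extension $C_1$ either to $\bot$ or to some $C_2$ with $\hat{C_2}=c_2$)---and then carries out exactly the quantifier-witnessing argument you describe: extend $c_1$ to $C_1$ using the existential precondition, use termination lifting plus validity of the ghost triple to get $C_2\neq\bot$ with $C_2\models\psi_{\mathit{post}}$, use non-interference lifting to conclude $\hat{C_2}=c_2$, and read off the existential witness for the postcondition.
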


\madhu{There's no formal definition of ghost code. I think we need to say at least precisely what it is... maybe just code that can read original and ghost variables and write to ghost variables... and terminating...} 

\subsection{Stage 2: Relaxing Universal Quantification using Broken Sets}
\label{sec:broken-sets}

We turn to verifying programs whose pre and post conditions are of the form $\LC \land \gamma$, where $\LC \equiv \forall z.\, \rho(z)$ is the local condition. %
Consider a program $P$ that maintains the data structure. The local conditions are satisfied everywhere in both the pre and post state of $P$. However, they need not hold everywhere in the intermediate states. In particular, $P$ may call a method $N$ which may neither receive nor return a proper data structure. %
To reason about $P$ modularly we must be able to express contracts for methods like $N$. %
To do this we must be able to talk about program states where only some objects may satisfy the local conditions.

\smallskip
\mypara{Broken Sets} We introduce in programs a ghost set variable $\Br$ that represents the set of (potentially) broken objects. Intuitively, at any point in the program the local conditions must always be satisfied on every object that is \emph{not} in the broken set. Formally, for a program $P$ we extend the signature of $P$ with $\Br$ as an additional input and an additional output. We also write pre and post conditions of the form $(\forall z \notin \Br.\, \rho(z)) \land \gamma$ to denote that local conditions are satisfied everywhere outside the broken set, where $\gamma$ can now use $\Br$. In particular, given the Hoare triple 
\begin{center}
$\langle\, (\forall z.\, \rho) \land \alpha\, \rangle\; P_\Gg(\overline{x},\,\mathit{ret}\!:\,\overline{y})\; \langle\, (\forall z.\, \rho) \land \beta\, \rangle$    
\end{center}

\noindent
from Stage 1, we instead prove the following Hoare triple (whose validity implies the validity of the triple above):

\begin{center}
$\langle\, (\forall z \notin \Br.\, \rho) \land \alpha \land \Br = \emptyset\, \rangle\; P_{\Gg,\Br}(\overline{x}, \Br,\,\mathit{ret}\!:\,\overline{y}, \Br)\; \langle\, (\forall z \notin \Br.\, \rho) \land \beta \land \Br = \emptyset\, \rangle$
\end{center}

\noindent
where $\Br$ is a ghost input variable of the type of set of objects and $P_{\Gg,\Br}$ is an augmentation of $P$ with ghost code that computes the $\Gg$ maps as well as the $\Br$ set satisfying the postcondition. %

$P$ may also call other methods $N$ with bodies $Q$. We similarly extend the input and output signatures of the called methods and use the broken set to write appropriate contracts for the methods, introducing triples of the form $\langle\, (\forall z \notin\Br.\, \rho) \land \alpha_N\,\rangle\; Q_\Br(\overline{s},\Br,\, \mathit{ret}\!:\, \overline{r},\Br)\; \langle\,(\forall z \notin\Br.\, \rho) \land \beta_N\,\rangle$. Again, $Q_{\Gg,\Br}$ is an augmentation of $Q$ with ghost code that updates $\Gg$ and $\Br$. %

Note the conjunct $\Br = \emptyset$ in the pre and post conditions for $P$, which make them equivalent to the original specifications. Indeed, for the main method that preserves the data structure property, the broken set is empty at the beginning and end of the program. %
However, called methods or loop invariants can talk about states with nonempty broken sets. %
We require the verification engineer to write ghost code that maintains the broken set accurately. The correctness of this transformation follows from an argument similar to the correctness of Proposition~\ref{prop:ghost-code}.

The above transformation turns the problem of verifying a complex program with multiple auxiliary methods into a modular verification problem where each method has a contract of the same form. We describe how to exploit this uniformity in specifications to achieve decidable reasoning in the next section.

\subsection{Stage 3: Eliminating the Universal Quantifier for Well-Behaved Programs}
\label{sec:well-behaved}

We consider triples of the form 
\begin{center}
$\langle\, (\forall z \notin \Br.\, \rho) \land \alpha\, \rangle\; P_{\Gg,\Br}(\overline{x}, \Br,\,\mathit{ret}\!:\,\overline{y}, \Br)\; \langle\, (\forall z \notin \Br.\, \rho) \land \beta\, \rangle$
\end{center} 

\noindent
where $P_{\Gg,\Br}$ is a program augmented with ghost updates to the $\Gg$-fields as well as the $\Br$ set, and $\alpha,\beta$ are quantifier-free formulae that can also mention the fields in $\Gg$ and the $\Br$ set. In this stage we would like to eliminate the quantified conjunct entirely and instead ask the engineer to prove the validity of the triple 
\begin{center}
$\{\alpha\}\; P_{\Gg,\Br}(\overline{x}, \Br,\,\mathit{ret}\!:\,\overline{y}, \Br)\; \{\beta\}$
\end{center}

However, the above two triples are not, in general, equivalent (as broken sets can be manipulated wildly). In this section we define a syntactic class of \emph{well-behaved} programs that force the verification engineer to maintain broken sets correctly, and for such programs the above triple are indeed equivalent. For example, for a field mutation, well-behaved programs require the engineer to determine the set of \emph{impacted objects} where local conditions may be broken by the mutation. The well-behavedness paradigm then mandates that the engineer add the set of impacted objects to the broken set immediately following the mutation statement. Similarly, well-behaved programs do not allow the engineer to remove an object from the broken set unless they show that the local conditions hold on that object. The imposition of this discipline ensures that programmers carefully preserve the meaning of the broken set (i.e., objects outside the broken set must satisfy local conditions). This allows for the quantified conjunct in the triple given by Stage 2 to be dropped since it always holds for a well-behaved program. Let us look at such a program in the context of our running example:

\begin{example}[Well-Behaved Sorted List Insertion]
\label{ex:slist-insert-wellbehaved}
We first relax the universal quantification as described in Stage 2 (Section~\ref{sec:broken-sets}) and rewrite the pre and post conditions to $\forall\,z\notin\Br.\,\LC(z))\land\sortedll(x)\land\Br=\emptyset$.

We then write the following well-behaved augmentation of the original program. We show the value of the broken set through the program in comments on the right:

\vspace{0.5em}
\hspace{2em}
\begin{minipage}[t]{0.3\textwidth}
\begin{alltt} 
\annotation{pre:} \ensuremath{\sortedll(x)\land\Br=\emptyset}
\annotation{post:} \ensuremath{\sortedll(x)\land\Br=\emptyset}
 \ghost{assert x \ensuremath{\notin} Br;}
 \autobr{assume LC(x);}
 y := x.next;    \codecomment{// \{\}}
 z := \keyword{new} C();
 \autobr{Br := Br \ensuremath{\cup} \{z\};} \codecomment{// \{z\}}
 z.key := k;
 \autobr{Br := Br \ensuremath{\cup} \{z\};} \codecomment{// \{z\}}
 z.next := y;
 \autobr{Br := Br \ensuremath{\cup} \{z\};} \codecomment{// \{z\}}
 \hfill
\end{alltt}
\end{minipage}
\hspace{5em}
\begin{minipage}[t]{0.6\textwidth}
\begin{alltt}
 \ghost{z.sortedll := True;}
 \autobr{Br := Br \ensuremath{\cup} \{z\};} \codecomment{// \{z\}}
 x.next := z;
 \autobr{Br := Br \ensuremath{\cup} \{x\};} \codecomment{// \{x,z\}}
 \ghost{z.rank := (x.rank + y.rank)/2;}
 \autobr{Br := Br \ensuremath{\cup} \{z\};} \codecomment{// \{x,z\}}
 \codecomment{// x and z satisfy LC}
 \autobr{assert LC(z);}
 \ghost{Br := Br \ensuremath{\setminus} \{z\};} \codecomment{// \{x\}}
 \autobr{assert LC(x);}
 \ghost{Br := Br \ensuremath{\setminus} \{x\};} \codecomment{// \{\}}
\end{alltt}
\end{minipage}

\smallskip
We depict the statements that are enforced by the well-behavedness paradigm in \textcolor{nicepink}{pink} and the ghost updates written by the verification engineer in \textcolor{blue}{blue}. As we can see, the paradigm adds the set of impacted objects to the broken set after each mutation and allocation. Determining the impact set of a mutation is nontrivial; we show how to construct them in Section~\ref{sec:cs-slist-insert}. We also see that if the engineer wants to remove $x$ from the broken set then they are required to show that $\LC(x)$ holds (assert followed by removal from $\Br$). Finally, as seen at the beginning of the program, if we show $x$ does not belong to $\Br$ then well-behavedness allows us to infer that $\LC(x)$ holds. This is possible because well-behaved programs always maintain the broken set correctly. We leave the formal correctness of the above program with the quantifier-free specifications to the reader. 

\smallskip
\noindent
\textbf{\textit{Putting it All Together.}} Let us call the above program $P_{\Gg,\Br}$. Since it is well-behaved and satisfies the contract $\langle\,\sortedll(x)\land\Br=\emptyset\,\rangle\; P_{\Gg,\Br}\; \langle\,\sortedll(x)\land\Br=\emptyset\,\rangle$, we can conclude that it satisfies the contract $\langle\,(\forall z \notin \Br.\,\rho) \land\sortedll(x)\land\Br=\emptyset\,\rangle\; P_{\Gg,\Br}\; \langle\,(\forall z \notin \Br.\,\rho) \land\sortedll(x)\land\Br=\emptyset\,\rangle$. By Proposition~\ref{prop:ghost-code}, this in turn means that the triple with the user's original program given in Example~\ref{ex:slist-insert-ghostcode} with existential quantification over monadic maps and universal quantification over objects in the specifications is valid! Therefore, via FWYB we can reason about the correctness of programs with respect to intrinsic specifications by checking the correctness of (augmented) programs with respect to the quantifier-free specifications, which can be discharged efficiently in practice using SMT solvers which support decision procedures for combinations of theories~\cite{Z3,cvc4}.\qed

\end{example}

\smallskip
\noindent
We now develop the general theory of well-behaved programs.

\subsection*{Rules for Constructing Well-Behaved Programs}

We define the class of well-behaved programs using a set of rules. We first introduce some notation. 

We distinguish the triples consisting of augmented programs and quantifier-free annotations by $\{\psi_{\mathit{pre}}\}\; P\; \{\psi_{\mathit{post}}\}$, with $\{\}$ brackets rather than $\langle\,\rangle$. %
We denote that such a triple is provable by $\vdash\{\psi_{\mathit{pre}}\}\; P\; \{\psi_{\mathit{post}}\}$. Our theory is agnostic to the mechanism for proving such Hoare triples correct (and in evaluations, we use the off-the-shelf verification tool {\sc Dafny}). %
However, we assume that the underlying proof mechanism is sound with respect to the operational semantics, checking in particular that dereferences are memory safe. We denote that a code snippet $P$ is well-behaved by $\wb P$. We also denote that local conditions hold on an object $x$ by $\LC(x)$ for clarity (rather than the technically correct $\rho(x)$ where $\rho$ is the matrix of $\LC$).

\smallskip
The rules for constructing well-behaved programs are given in Figure~\ref{fig:well-behaved}. %

Skip, assignment, lookup, and return statements are vacuously well-behaved. They do not change the heap and therefore preserve the local condition on any object where they already held.

\textsc{Mutation}.\hspace{0.25em}  Mutation statements potentially break local conditions and must therefore grow the broken set. Let $A$ be a finite set of location terms over $x$ such that for any $z \notin A$, if $\LC(z)$ held before the mutation, then it continues to hold after the mutation. We refer to such a set $A$ as an \emph{impact set} for the mutation, and we update the broken set after a mutation with its impact set (see mutation statements in Example~\ref{ex:slist-insert-ghostcode} for instances of this rule). %
Of course, the impact set may not always be expressible as a finite set of terms, but this is indeed the case for all intrinsic definitions of data structures that we define in this paper. We show how to construct provably correct impact sets for field mutations in practice in Section~\ref{sec:cs-slist-insert}.

\textsc{Allocation}.\hspace{0.25em} An allocation statement does not modify the heap on any existing object. Therefore, we simply update the broken set by adding the newly created object $x$ (this was also the case in Example~\ref{ex:slist-insert-wellbehaved}).

\textsc{Function Call}.\hspace{0.25em} We state axiomatically that function call statements are well-behaved, leaving it to the verification engineer to capture the contract of the called function accurately, including its effect on $\Br$.

\textsc{Assert LC and Remove}.\hspace{0.25em} The rules described above grow the broken set. We also want to shrink the broken set when the verification engineer fixes the local conditions on some broken locations. The \textsc{Assert LC and Remove} rule enables this. This rule is used in Example~\ref{ex:slist-insert-wellbehaved} as the sequence of statements \texttt{\autobr{assert\,LC(x);\,}\ghost{Br\,:=\,Br\ensuremath{\setminus}\{x\}}}. Informally, the verification engineer is required to show that $\LC(x)$ holds before removing $x$ from $\Br$.

\textsc{Infer LC Outside Br}.\hspace{0.25em} We also add a rule for instantiating $\LC$ on objects outside the broken set. Since our methodology makes the conjunct $\forall x \notin \Br.\, \rho(x)$ implicit, we lose some proving power. For example, if a program modifies a location $x$ in a red black tree, we may want to update the values of the ghost monadic maps on $x$ depending on whether its children are red or black, and using the information given by the local constraints on its children (since the children are not broken and therefore satisfy the local constraints). The \textsc{Infer LC Outside Br} rule enables this reasoning mechanism. We employ this rule in Example~\ref{ex:slist-insert-wellbehaved} using the sequence of statements \texttt{\ghost{assert\,x\ensuremath{\notin}Br;\,}\autobr{assume\,LC(x)}}.

Finally, compositions, conditionals, and loops involving well-formed subprograms are well-formed. Note that $\Br$ is a ghost variable and therefore cannot appear in the conditions for conditional statements or loops.

In the above presentation we use only one broken set for simplicity of exposition. However, our general framework allows for finer-grained broken sets that can track breaks over a partition on the local conditions. For example, Section~\ref{sec:cs-overlaid} illustrates verifying deletion in an overlaid data structure (consisting of a linked list and a binary search tree) using two broken sets: one each corresponding to the local conditions of the two component data structures.

\begin{figure}
\begin{mathpar}
\inferrule[Skip/Assignment/Lookup/Return]{ \\ }{\wb s \textrm{   where $s$ is of the form} \\\\ \textrm{$\mathsf{skip}$,~\texttt{x:=y},~  \texttt{x:=y.f},~ or~ $\mathsf{return}$}} 
\and
\inferrule[Mutation]{ \\\\ \vdash \{\, z \notin A \land LC(z) \land x \neq \nil \,\}\; x.f := v\; \{\, LC(z) \,\} }{ \wb\; x.f := v\,;\, \Br := \Br \cup A \\\\ \textrm{where $A$ is a finite set of location terms over $x$} }
\and
\inferrule[Allocation]{ \\ }{\wb\; x := \mathsf{new}\, C()\,;\, \Br := \Br \cup \{x\}}
\and
\inferrule[Function Call]{ \\ }{\wb\; \overline{y}, \Br := \mathit{Function}(\overline{x},\Br)}
\and
\inferrule[Infer LC Outside Br]{ \\ }{ \wb\; \mathsf{if}\; (x \neq \nil \land x \notin \Br)\; \mathsf{then}\; \mathsf{assume}\,LC(x)%
}
\and
\inferrule[Assert LC and Remove]{ \\ }{ \wb\; \mathsf{if}\; LC(x)\; \mathsf{then}\; \Br := \Br \setminus \{x\}%
}
\and
\inferrule[Composition]{ \\\\ \wb\; P \\ \wb Q }{ \wb\; P\,;\, Q }
\and
\inferrule[If-Then-Else]{ \\\\ \wb\; P \\ \wb Q }{ \wb\; \mathsf{if}\; \mathit{cond}\; P\; \mathsf{else}\; Q \\\\ \textrm{where $\mathit{cond}$ does not mention $\Br$} }
\and
\inferrule[While]{ \\\\ \wb\; P }{ \wb\; \mathsf{while}\; \mathit{cond}\; \mathsf{do}\; P \\\\ \textrm{where $\mathit{cond}$ does not mention $\Br$} }
\end{mathpar}
\caption{Rules for constructing well-behaved programs. Local condition formula instantiated at $x$ is denoted by $\LC(x)$. The statement $(\mathsf{if}\, \mathit{cond}\, \mathsf{then}\, S)$ is sugar for $(\mathsf{if}\, \mathit{cond}\, \mathsf{then}\, S\, \mathsf{else}\, \mathsf{skip})$.}
\label{fig:well-behaved}
\end{figure}

\subsection{Soundness of FWYB}
\label{sec:soundness}

In this section we state the soundness of the FWYB methodology. We first define some terminology.

Fix a main method $M$ with input variables $\overline{x}, \Br$ and output variables $\overline{y},\Br$. Let the body of $M$ be $P$. Let $N_i, 1 \leq i \leq k$ be a set of auxiliary methods that $P$ calls with bodies $Q_i$, where the methods $N_i$ also have similar signatures with $\Br$ as the last input parameter and last output parameter. Note that the bodies $P$ and $Q_i$ contain updates of ghost fields. Let us denote by a program the collection of methods $[(M: P); (N_1: Q_1)\ldots (N_k: Q_k)]$. We define the projection of $M$ to a user-level program:

\begin{definition}[Projection of Augmented Code to User Code]
\label{defn:code-projection}
The projection of the augmented program $[(M:P); (N_1:Q_1)\ldots (N_k:Q_k)]$ is the user-level program $[(M':P'); (N_1':Q_1')\ldots (N_k':Q_k')]$ such that:
\begin{enumerate}
    \item If $M$ has signature $(\overline{x},\Br,\, \mathit{ret}\!:\overline{y},\Br)$, then $M'$ has signature $(\overline{x},\,\mathit{ret}\!:\overline{y})$. Similarly for each $N_i, 1 \leq i \leq k$, the corresponding method $N_i'$ removes $\Br$ from the signature.
    \item $P'$ is derived from $P$ by: (a) eliminating all ghost code, and (b) replacing each function call statement of the form $\overline{r},\Br := N_j(\overline{z},\Br)$ with the statement $\overline{r} := N_j(\overline{z})$. Similarly each $Q_i'$ is derived from the corresponding $Q_i$ by a similar transformation.
\end{enumerate}
\end{definition}

We now state the soundness theorem.

\begin{theorem}[FWYB Soundness]
\label{thm:soundness}
Let $(\Gg, \LC, \varphi)$ be an intrinsic definition with $\Gg = \{g_1,g_2\ldots, g_l\}$. Let $[(M:P); (N_1:Q_1)\ldots, (N_k:Q_k)]$ be an augmented program constructed using the FWYB methodology such that $\,\wb P$ and $\,\wb Q_i,\, 1\leq i \leq k$, i.e., the programs $P$ and $Q_i$ are well-behaved (according to the rules in Figure~\ref{fig:well-behaved}). Let $\psi_{\mathit{pre}}$ and $\psi_{\mathit{post}}$ be quantifier-free formulae that do not mention $\Br$ (but can mention the maps in $\Gg$). Finally, let $[(M':P'); (N_1':Q_1')\ldots, (N_k':Q_k')]$ be the projected user-level program according to Definition~\ref{defn:code-projection}. Then:

\smallskip
\noindent
If the triple

\begin{center}
$\{\varphi \land \psi_{\mathit{pre}} \land \Br = \emptyset \}\; P\; \{\varphi \land \psi_{\mathit{post}} \land \Br = \emptyset\}$
\end{center}

\noindent
is valid, then the triple

\begin{center}
$\langle\,\soexists\, g_1,g_2\ldots, g_l.\, (\LC \land \varphi \land \psi_{\mathit{pre}}) \,\rangle\; P'\; \langle\,\soexists\, g_1,g_2\ldots, g_l.\, (\LC \land \varphi \land \psi_{\mathit{post}}) \,\rangle$
\end{center}

\noindent
is valid (according to Definition~\ref{defn:ids-triple-validity}).\qed
\end{theorem}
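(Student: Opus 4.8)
The plan is to prove the soundness theorem by composing the three reductions that the paper has set up in Stages 1, 2, and 3, and the main new content is the Stage 3 step, namely that well-behavedness lets us reinstate the quantified conjunct $\forall z \notin \Br.\, \rho(z)$ for free. Concretely, I would first establish the key invariant: \emph{for any execution of a well-behaved program $P$ (or $Q_i$), starting from a configuration in which $\forall z \notin \Br.\, \rho(z)$ holds, this formula continues to hold in every intermediate configuration and in the final configuration}. This is proved by structural induction on the derivation of $\wb P$ using the rules of Figure~\ref{fig:well-behaved}. The base cases are: \textsc{Skip/Assignment/Lookup/Return} do not change the heap and do not shrink $\Br$, so the invariant is trivially preserved (here one uses that assignments to non-ghost variables and lookups do not touch any field in $\Ff \cup \Gg$); \textsc{Mutation} is the crucial case — after the mutation the only objects whose local condition may have changed lie in the impact set $A$ (this is exactly the premise $\vdash \{z \notin A \land LC(z) \land x \neq \nil\}\; x.f := v\; \{LC(z)\}$, which is assumed sound w.r.t.\ the operational semantics), and the appended statement $\Br := \Br \cup A$ puts all of them into $\Br$, so every $z$ remaining outside $\Br$ still satisfies $\rho(z)$; \textsc{Allocation} adds the fresh object to $\Br$ and leaves every existing object's neighborhood untouched; \textsc{Function Call} is handled by appealing to the contract of the called method, which by the modular hypothesis has a body $Q_i$ that is itself well-behaved, so the inductive hypothesis (applied to the finite set of method bodies, formally an induction on the call tree / a standard fixpoint argument over the contracts) gives preservation; \textsc{Assert LC and Remove} removes $x$ from $\Br$ only in the branch where $LC(x)$ holds, so the invariant is maintained; \textsc{Infer LC Outside Br} only \emph{assumes} $LC(x)$ when $x \notin \Br$, which is sound precisely because the invariant already guarantees $\rho(x)$ there — so this rule never lets us assume anything false. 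The inductive cases \textsc{Composition}, \textsc{If-Then-Else}, \textsc{While} are immediate since $\Br$ does not appear in any condition, so control flow is unaffected and we just chain the inductive hypotheses.

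Given this invariant, I would next discharge the Stage 3 reduction: if $\{\varphi \land \psi_{\mathit{pre}} \land \Br = \emptyset\}\; P\; \{\varphi \land \psi_{\mathit{post}} \land \Br = \emptyset\}$ is valid and $\wb P$, then $\{(\forall z \notin \Br.\, \rho) \land \varphi \land \psi_{\mathit{pre}} \land \Br = \emptyset\}\; P\; \{(\forall z \notin \Br.\, \rho) \land \varphi \land \psi_{\mathit{post}} \land \Br = \emptyset\}$ is valid. For this, take any configuration satisfying the stronger precondition; it in particular satisfies the weaker one, so the first triple gives that execution is memory-safe and any resulting configuration satisfies $\varphi \land \psi_{\mathit{post}} \land \Br = \emptyset$; and the invariant (applied with the precondition's $\forall z \notin \Br.\, \rho$, which holds vacuously-or-not at the start) supplies $\forall z \notin \Br.\, \rho$ at the end. (In fact since $\Br = \emptyset$ at the start and the precondition's quantified conjunct reduces to $\forall z.\, \rho$, and similarly at the end $\Br = \emptyset$ forces $\forall z.\, \rho$ — so the conjunct is exactly the original local condition $\LC$.) Then I would invoke Stage 2: the triple just obtained, specialized with $\Br = \emptyset$ at both ends, is exactly the triple $\langle (\forall z.\, \rho) \land \varphi \land \psi_{\mathit{pre}} \rangle\; P_\Gg\; \langle (\forall z.\, \rho) \land \varphi \land \psi_{\mathit{post}} \rangle$ where $P_\Gg$ is $P$ with the $\Br$ bookkeeping projected away (this projection is sound by the Stage 2 argument, "similar to the correctness of Proposition~\ref{prop:ghost-code}", applied here to the whole collection of methods). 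Finally I would apply Proposition~\ref{prop:ghost-code} to the program collection: projecting out the remaining ghost code (the $\Gg$-field updates, and the $\Br$-removal/ghost-assert statements, as well as replacing augmented calls by plain calls, per Definition~\ref{defn:code-projection}) turns $P_\Gg$ into $P'$ and turns the specifications $(\forall z.\, \rho) \land \varphi \land \psi$, i.e.\ $\LC \land \varphi \land \psi$, into their second-order existential closures $\soexists\, g_1,\ldots,g_l.\, (\LC \land \varphi \land \psi)$, yielding exactly the target triple with validity as in Definition~\ref{defn:ids-triple-validity}.

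One technical point to treat carefully is that the theorem concerns a \emph{collection} of mutually-recursive methods $[(M:P);(N_1:Q_1)\ldots]$, not a single command, so the invariant-preservation statement and each of the three reductions must be proved for the collection simultaneously. I would handle this in the standard way: define validity of the collection's triples coinductively/relative to the method contracts (a contract environment $\Gamma$), prove "if every $Q_i$'s body preserves the invariant \emph{assuming} each call obeys its contract, then every method call obeys it" — which is exactly the \textsc{Function Call} case discharged via the environment — and conclude by the usual fixpoint argument that the invariant holds unconditionally for all methods. Likewise the ghost-code projection of Proposition~\ref{prop:ghost-code} and the $\Br$-projection must be lifted from single commands to the collection, but both are entirely structural and follow the pattern already sketched for Proposition~\ref{prop:ghost-code}.

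I expect the main obstacle to be the \textsc{Mutation} case of the invariant-preservation lemma, specifically making rigorous the claim that the impact set $A$ genuinely captures \emph{all} objects whose local condition can be disturbed. The paper finesses this by \emph{assuming} as a premise of the \textsc{Mutation} rule the Hoare triple $\vdash \{z \notin A \land LC(z) \land x \neq \nil\}\; x.f := v\; \{LC(z)\}$, i.e.\ correctness of $A$ is an obligation discharged by the (sound) underlying prover rather than something we must reprove; with that premise in hand the case is short. The other subtlety worth being explicit about is the soundness of \textsc{Infer LC Outside Br} — it is the one rule that \emph{introduces} an \texttt{assume}, so one must point out that the added assumption $LC(x)$ is licensed precisely by the invariant we are carrying along, which is why the invariant and the reduction have to be proved together rather than the invariant first and the reduction as a corollary.
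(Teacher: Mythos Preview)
Your proposal is correct and follows essentially the same approach as the paper: compose the three stage reductions (Propositions~\ref{prop:ghost-code}, \ref{prop:ghost-code-Br}, and \ref{prop:well-behaved-sound}), with the Stage~3 step discharged by a structural induction on the well-behavedness derivation together with an outer induction on call-nesting depth. Your case analysis for the well-behavedness rules matches the paper's appendix proof of Proposition~\ref{prop:well-behaved-sound} almost verbatim; in fact your treatment of \textsc{Infer LC Outside Br} is slightly more careful than the paper's, which dismisses that case with ``does not alter $\Br$'' without explicitly noting that the introduced \texttt{assume} is sound only because the invariant already guarantees it.
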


Informally, the soundness theorem says that given a user-written program, if we augment it with updates to ghost fields and the broken set only using the discipline for well-behaved programs and show that if the broken set is empty at the beginning of the program it will be empty at the end, then the original user-written program, with contracts on preservation of the data structure, is correct. We provide a proof of the theorem in Appendix~\ref{app:soundness}.

\subsection{Generating Quantifier-Free Verification Conditions}
\label{app:qfree-vcs}

The FWYB methodology described in previous sections shows that we can soundly reduce the problem of verifying programs with intrinsic specifications to the problem of verifying programs (with ghost code) with quantifier-free contracts. We then argue that we can reason with the latter using combinations of various quantifier-free theories including sets and maps with pointwise updates. In this section we detail some subtleties involved in the argument.

It is well-known that we can reason with scalar programs with quantifier-free contracts by generating quantifier-free verification conditions (which in turn can be handled by SMT solvers). However, this is not immediately clear for programs that dynamically manipulate heaps. In particular, commands such as allocation and function calls pose challenges in formulating quantifier-free verification conditions.

At a high level, our solution transforms the given heap program into a scalar program that explicitly encodes changes to the heap. Specifically, we show an encoding for the field mutation, allocation, and function call statements. 

\mypara{Modeling Field Mutation} As described earlier, we model the monadic maps and fields as updatable maps~\cite{pointwisearrays}. Formally, we introduce a map $M_f$ (also called an \emph{array} in SMT solvers like Z3~\cite{Z3}) for every field/monadic map $f$. We then encode the commands for field lookup and mutation as map operations. For example, the mutation $x.f \coloneqq y$ is encoded as $M_f[x] := y$.

\mypara{Modeling Allocation} 
We model programs in a safe garbage-collected programming language. %
We introduce a ghost global variable $\Alloc$ to model the allocated set of objects in the program. We then add several assumptions (i.e., $\mathsf{assume}$ statements) throughout the program. Specifically, we assume for every program parameter of type Object, the parameter itself as well as the values of the monadic maps of type Object/Set-of-Objects on the parameter are all contained in $\Alloc$. For example, in the case of our running example (Example~\ref{ex:slist-insert-wellbehaved}), we add the assumptions $x \in \Alloc$ and $\nxt(x) \neq \nil \Rightarrow \nxt(x) \in \Alloc$. If we had a monadic map $\hslist$ corresponding to the heaplet of the sorted list, we would also add the assumption $\hslist(x) \subseteq \Alloc$. Similarly, whenever an object is dereferenced on a field of type Object/Set-of-Objects in the program, we add an assumption that the resulting value is contained in $\Alloc$. Note that these are quantifier-free assumptions. They can be added soundly since they are valid under the semantics of the underlying language.

We then model allocation by introducing a new object to $\Alloc$ and ensure that the default values of the various fields on the newly allocated object belong to $\Alloc$. These constraints can be expressed using a quantifier-free formula over maps.

\mypara{Modeling Heap Change Across Function Calls} The main challenge in modeling function calls is to ensure the ability to do frame reasoning. To do this, we extend the programming language with a \emph{modified set} annotation for methods. We require the modified set to be a term of type Set-of-Objects that is constructed using object variables in the current scope and monadic maps over them. In the case of our running example (Example~\ref{ex:slist-insert-triple}), we would add a monadic map $\hslist$ of type Set-of-Objects corresponding to the heaplet of the sorted list and annotate the program with $\hslist(x)$ as the modified set. Figure~\ref{fig:slist-insert-code} shows the full version of sorted list insertion with the modified set annotation.

Given a modified set $\Mod$, we model changes to the heap across a function call by introducing new maps corresponding to the various fields (including monadic maps) after the call. We then add assumptions that the values of the new maps are equal to the values of the maps before the call on all locations that do not belong to the modified set $\Mod$. Although this constrains the maps on unboundedly many objects, it can be written without quantifiers by using pointwise operators on maps~\cite{pointwisearrays}. %
Formally, for a field $f$ modeled as a map $M_f$, we introduce a new map $M_f'$ and update $M_f$ as:
\begin{center}
$M_f[x] := \ite(x \in \Mod,\, M_f'[x],\, M_f[x])$
\end{center}
The above update can be expressed using pointwise operators as $M_f \coloneqq \ite(\Mod,M_f',M_f)$, where the $\ite$ operator is applied pointwise over the maps $\Mod$, $M_f$, and $M_f'$. The value of the field $f$ on an object $x$ after the call will then be equal to $x.f$ before the call if $x$ was not modified, and a \emph{havoc}-ed value given by $M_f'$ otherwise. Pointwise operators are supported by the generalized array theory~\cite{pointwisearrays} whose quantifier-free fragment is decidable.

\smallskip
Program verifiers like Boogie~\cite{boogie3} offer VC generation frameworks that are amenable to the modeling described in this section. Indeed, our implementation of the IDS/FWYB methodology described in Section~\ref{sec:implementation} uses Boogie.

\else
\section{Fix What You Break (FWYB) Verification Methodology}
\label{sec:proglogic}

In this section we present the second main contribution of this paper: the Fix-What-You-Break (FWYB) methodology. %
We begin by describing a while programming language and defining the verification problem we study. We fix a class $C = (\Ss, \Ff)$ throughout this section.

\subsection{Programs, Contracts, and Correctness}
\label{sec:triples}

\mypara{Programs} Figure~\ref{fig:prog-lang} shows the programming language used in this work. Note that we can use variables and expressions over non-object sorts. Functions can return multiple outputs. We assume that method signatures contain designated output variables and therefore the return statement does not mention values. 

Our language is safe (i.e., allocated locations cannot point to un-allocated locations) and garbage-collected. Formally we consider configurations $\theta$ consisting of a store (map from variables to values) and a heap along with an error state $\bot$ to model error on a null dereference. We denote that a formula $\alpha$ is satisfied on a configuration $\theta$ by writing $\theta \models \alpha$.

\begin{figure}
\begin{align*}
P \coloneqq &\;\; x\, :=\, \nil\;\; |\;\; x\, :=\, y\;\; |\;\; v\, :=\, be \;\;|\;\; y\, :=\, x.f\;\; |\;\; v\, :=\, x.d \\[-3pt]
&\;\; |\;\; x.f\, :=\, y\;\;|\;\; x.d\, :=\, v \;\;|\; \;x\, :=\, \mathsf{new}\; C() \;\;|\; \;\overline{r}\, :=\, \mathit{Function}(\overline{t}) \\[-3pt]
&\;\; |\;\; \mathsf{skip}\;\;|\;\; \mathsf{assume}\;\mathit{cond}\;\;|\;\; \mathsf{return}\;\;|\;\; P\, ; \, P \;\;|\;\; \mathsf{if}\; \mathit{cond}\; \mathsf{then}\; P\; \mathsf{else}\; P\;\; |\;\; \mathsf{while}\; \mathsf{cond}\; \mathsf{do}\; P\; \\[-1pt]
\mathit{cond} \coloneqq &\;\; x = y\;\;|\;\; x \neq y \;\;|\;\; \mathit{be} \;\;\;\;\textrm{(Condition Expressions)}
\end{align*}
\vspace{-2em}
\caption{Grammar of while programs with recursion. $x,y$ are variables denoting objects of class $C?$ (i.e., $C$ objects or $\nil$), $v, w$ are a background sort(s) variables, $r, t$ denote variables of any sort, $f$ is a pointer field, $d$ is a data field, and $be$ is a expression of the background sort(s).}
\label{fig:prog-lang}
\end{figure}

\mypara{Intrinsic Hoare Triples} The verification problem we study in this paper is \emph{maintenance} of data structure properties. Fix an intrinsic definition $(\Gg, \LC, \varphi(\overline{y}))$ where $\Gg = \{g_1, g_2\ldots, g_k\}$. Let $\overline{z}$ be the input/output variables for a program that we want to verify. We consider pre and post conditions of the form 
\begin{center}
$\soexists\, g_1, g_2\ldots,g_k.\, (\LC \land \varphi(\overline{w}) \land \psi(\overline{z}))$
\end{center}

\noindent
where each $g_i$ is a ghost monadic map (unary function over locations), $\psi$ is a quantifier-free formula over $\overline{z}$ that can use the ghost monadic maps $g_i$, and $\overline{w}$ is a tuple of variables from $\overline{z}$ whose arity is equal to $\overline{y}$. Note that the above has a second-order existential quantification ($\soexists$) over function symbols $g_1, \ldots, g_k$, and $\LC$ has first-order universal quantification over a single location variable. Read in plain English, ``$\,\overline{w}$ points to a data structure $\mathit{IDS}$ such that the (quantifier-free) property $\psi(\overline{z})$ holds''. %

\noindent
We study the validity of the following Hoare Triples:

\begin{center}
$\langle\,\alpha(\overline{x})\,\rangle\;\; \mathrm{P}(\overline{x}, \,\mathit{ret}\!:\, \overline{r})\;\; \langle\,\beta(\overline{x}, \overline{r})\,\rangle$
\end{center}

\noindent
where $\alpha$ and $\beta$ are pre and post conditions of the above form, $\mathrm{P}$ is a program, and $\overline{x}, \overline{r}$ are input and output variables for $\mathrm{P}$ respectively.

\begin{example}[Running Example: Insertion into a Sorted List]
\label{ex:slist-insert-triple}
Let $\mathit{SortedLL}(y) = (\Gg, \LC, \mathit{sorted}(y))$ as in Example~\ref{ex:ids-sorted-list} where $\Gg = \{\sortedll,\rank\}$. The following Hoare triple says that insertion into a sorted list returns a sorted list:
\begin{center}
$\langle\,\soexists\, \sortedll,\rank.\, \LC \land \sortedll(x) \,\rangle\; \mathit{sorted\!\!-\!\! insert}(x,k,\, \mathit{ret}\!: x)\; \langle\,\soexists\, \sortedll,\rank.\, \LC \land \sortedll(x)\,\rangle$
\end{center}

\noindent
where $x,r$ are variables of type $C$, $k$ is of type $\mathit{Int}$ and $\mathit{sorted\!\!-\!\! insert}$ is the usual recursive method. %

\end{example}

\mypara{Validity of Intrinsic Hoare Triples} We now define the validity of Hoare Triples.

\begin{definition}[Validity of Intrinsic Hoare Triples]
\label{defn:ids-triple-validity}
An intrinsic triple $\langle\, \alpha\, \rangle\, P\, \langle\, \beta\, \rangle$ is \emph{valid} if for every configuration $\theta$ such that $\theta \models \alpha$, transitioning according to $P$ starting from $\theta$ does not encounter the error state $\bot$, %
and furthermore, if $\theta$ transitions to $\theta'$ under $P$, then $\theta' \models \beta$.
\end{definition}

\subsection{Ghost Code}
\label{sec:ghost-code-defn}

In this work we consider the augmentation of procedures with \emph{ghost} or non-executed code. Ghost code involves the manipulation of a set of distinct \emph{ghost variables} and \emph{ghost fields}, distinguished from regular or `user' variables and fields. In program verification, ghost code provides a programmatic way of constructing values/functions that witness a particular property.

We defer a formal definition of ghost code to the Appendix of our supplementary material\footnote{Our supplementary material is available in either our technical report \cite{idstechreport} or the permanent DOI record at \url{https://doi.org/10.1145/3656450}.} %
and only provide intuition here. Intuitively, ghost variables/fields cannot influence the computation of non-ghost variables/fields. Therefore, ghost variables and maps can be assigned values from user variables and maps, but the reverse is not allowed. Similarly, when conditional statements or loops use ghost variables in the condition, the body of the statement must also consist entirely of ghost code. Simply, ghost code cannot control the flow of the user program. These conditions can be checked statically. Finally, we also require that ghost loops and functions always terminate since nonterminating ghost code can change the meaning of the original program. Our definition is agnostic to the technique used to establish termination, however, we use ranking functions to establish termination in our implementation in \Dafny. 

We formalize the above into a grammar that extends the original programming language in Figure~\ref{fig:prog-lang} into a ghost code-augmented language in Figure~\ref{fig:lang-with-ghost} in Appendix~\ref{app:ghost-code-defn} of our supplementary material. The language of ghost programs is similar to $P$ in Figure~\ref{fig:prog-lang}, except that we do not have allocation or assume statements, and loops/functions must always terminate. See prior literature for a more detailed formal treatment of ghost code~\cite{Jonesghostcode,lucasghostcode,spiritofghostcode,ReynoldsCraftOfProgramming}. 

\mypara{Projection that Eliminates Ghost Code} We can define the notion of `projecting out' ghost code, which takes a program that contains ghost code and yields a pure user program with all ghost code simply eliminated. Intuitively, the fact that ghost code does not affect the execution of the underlying user program makes the projection operation sensible.

Fix a main method $M$ with body $P$. Let $N_i, 1 \leq i \leq k$ be a set of auxiliary methods with bodies $Q_i$ that $P$ can call. Note that the bodies $P$ and $Q_i$ contain ghost code. Let us denote a program containing these methods by $[(M: P); (N_1: Q_1)\ldots (N_k: Q_k)]$. We then define projection as follows:

\begin{definition}[Projection of Ghost-Augmented Code to User Code]
\label{defn:code-projection}
The projection of the ghost-augmented program $[(M:P); (N_1:Q_1)\ldots (N_k:Q_k)]$ is the user program $[(\hat{M}:\hat{P}); (\hat{N_1}:\hat{Q_1})\ldots (\hat{N_k}:\hat{Q_k})]$ such that:
\begin{enumerate}
    \item The input (resp. output) signature of $\hat{M}$ is that of $M$ with the ghost input (resp. output) parameters removed. 
    \item $\hat{P}$ is derived from $P$ by: (a) eliminating all ghost code, and (b) replacing each non-ghost function call statement of the form $\overline{r} := N_j(\overline{t})$ with the statement $\overline{s} := \hat{N_j}(\overline{u})$, where $\overline{u}$ is the non-contiguous subsequence of $\overline{t}$ with the elements corresponding to ghost input parameters removed and $\overline{s}$ is obtained from $\overline{r}$ similarly. Each $\hat{Q_i}$ is derived from the corresponding $Q_i$ by a similar transformation.
\end{enumerate}
\end{definition}

We provide an expanded version of this definition in our supplementary material in Appendix~\ref{app:ghost-code-defn}.

\smallskip
\subsection*{An Overview of FWYB}

We develop the Fix-What-You-Break (FWYB) methodology in three stages, in the following subsections. We give here an overview of the methodology and the stages.

Recall that intrinsic triples are of the form $\langle\,\soexists\, g_1, g_2\ldots,g_k.\, (\LC \land \varphi \land \alpha)\,\rangle\; P\; \langle\,\soexists\, g_1, g_2\ldots,g_k.\, (\LC \land \varphi \land \beta)\,\rangle$. In Stage 1 (Section~\ref{sec:ghost-code}) we remove the second-order quantification. We do this by requiring the verification engineer to explicitly construct the $g_i$ maps in the post state from the maps in the pre state using \emph{ghost code}. We then obtain triples of the form $\langle\,\LC \land \varphi \land \alpha\,\rangle\; P_\Gg\; \langle\,\LC \land \varphi \land \beta\,\rangle$ where $P_\Gg$ is an augmentation of $P$ with ghost code that updates the $\Gg$ maps.

Note that the $\LC$ in the contract universally quantifies over objects. In Stages 2 (Section~\ref{sec:broken-sets}) and 3 (Section~\ref{sec:well-behaved}) we remove the quantification by explicitly tracking the objects where the local conditions do not hold and treating them as implicitly true on all other objects. We call this set $\Br$ the \emph{broken set}. Intuitively, the broken set grows when the program mutates pointers or makes other changes to the heap, and shrinks when the verification engineer repairs the $\Gg$ maps using ghost code to satisfy the $\LC$ on the broken objects. The specifications assume an empty broken set at the beginning of the program and the engineer must ensure that it is empty again at the end of the program. However, they do not have to track the objects manually. We develop in Stage 3 (Section~\ref{sec:well-behaved}) a discipline for writing only \emph{well-behaved} manipulations of the broken set. This reduces the problem to triples of the form $\langle\,\varphi \land \alpha\,\rangle\; P_{\Gg,\Br}\; \langle\,\varphi \land \beta\,\rangle$, where $P_{\Gg,\Br}$ contains ghost code for updating both $\Gg$ and $\Br$. Note that these specifications are quantifier-free, and checking them can be effectively automated using SMT solvers~\cite{Z3,cvc4}.

\subsection{Stage 1: Removing Existential Quantification over Monadic Maps using Ghost Code}
\label{sec:ghost-code}

Consider an intrinsic Hoare Triple $\langle\,\soexists\, g_1, g_2\ldots,g_k.\, (\LC \land \varphi \land \alpha)\,\rangle\; P\; \langle\,\soexists\, g_1, g_2\ldots,g_k.\, (\LC \land \varphi \land \beta)\,\rangle$. Read simply, the precondition says that \emph{there exist} maps $\{g_i\}$ satisfying some properties, and the postcondition says that we must \emph{show the existence} of maps $\{g_i\}$ satisfying the post state properties. 

We remove existential quantification from the problem by re-formulating it as follows: we assume that we are \emph{given} the maps $\{g_i\}$ as part of the pre state such that they satisfy $\LC \land \varphi \land \alpha$, and we require the verification engineer to \emph{compute} the $\{g_i\}$ maps in the post state satisfying $\LC \land \varphi \land \beta$. The engineer computes the post state maps by taking the given pre state maps and `repairing' them on an object whenever the program breaks local conditions on that object. %
The repairs are done using ghost code, which is a common technique in verification literature~\cite{Jonesghostcode,lucasghostcode,spiritofghostcode,ReynoldsCraftOfProgramming}. 

\medskip
Formally, fix an intrinsically defined data structure $(\Gg, \LC, \varphi)$. We extend the class signature $C = (\Ss, \Ff)$ (and consequently the programming language) to $C_\Gg = (\Ss, \Ff \cup \Gg)$ and treat the symbols in $\Gg$ as \emph{ghost fields} of objects of class $C$ in the program semantics. Performing the transformation described above %
reduces the verification problem to proving triples of the form $\langle\, \LC \land \varphi \land \alpha \,\rangle\; P_{\Gg}\; \langle\, \LC \land \varphi \land \beta \,\rangle$, where there is no existential quantification over $\Gg$ and $P_\Gg$ is an augmentation of $P$ with ghost code that updates the $\Gg$ maps. The following proposition captures the correctness of this reduction:

\begin{proposition}
\label{prop:ghost-code}
Let $\psi_{\mathit{pre}}$ and $\psi_{\mathit{post}}$ be quantifier-free formulae over $\Ff \cup \Gg$. If $\langle\, \LC \land \psi_{\mathit{pre}} \,\rangle\; P_{\Gg}\; \langle\, \LC \land \psi_{\mathit{post}} \,\rangle$ is valid then $\langle\, \exists g_1, g_2\ldots , g_k.\, \LC \land \psi_{\mathit{pre}} \,\rangle\; P\; \langle\, \exists g_1, g_2\ldots , g_k.\,\LC \land \psi_{\mathit{post}} \,\rangle$ is valid~\footnote{Here the notion of validity for both triples is given by Definition~\ref{defn:ids-triple-validity}, where configurations are interpreted appropriately with or without the ghost fields.}, where $P$ is the projection of $P_\Gg$ obtained by eliminating ghost code. 
\end{proposition}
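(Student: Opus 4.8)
The plan is to prove Proposition~\ref{prop:ghost-code} by a direct simulation argument between executions of the ghost-augmented program $P_\Gg$ and executions of its projection $P$. The core idea is that because ghost code cannot influence the control flow or the values of user variables/fields (this is exactly what the definition of ghost code in Section~\ref{sec:ghost-code-defn} guarantees), every execution of $P$ from a user-level configuration can be ``lifted'' to an execution of $P_\Gg$ by choosing the ghost fields appropriately, and conversely every execution of $P_\Gg$ ``projects'' to an execution of $P$ on the underlying user configuration. First I would make precise the relationship between configurations: a user configuration $\theta = (s, O, I)$ over $\Ff$ and a ghost-augmented configuration $\theta_\Gg = (s_\Gg, O, I_\Gg)$ over $\Ff \cup \Gg$ are \emph{compatible} if they have the same object set $O$, the same interpretation of $\Ff$-fields, and $s$ agrees with $s_\Gg$ on all non-ghost variables. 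The key simulation lemma I would establish is: (i) if $\theta_\Gg$ is compatible with $\theta$ and $\theta_\Gg \xrightarrow{P_\Gg} \theta'_\Gg$, then $\theta \xrightarrow{P} \theta'$ for some $\theta'$ compatible with $\theta'_\Gg$, and no error occurs in the $P$-execution iff none occurs in the $P_\Gg$-execution; and (ii) conversely, if $\theta \xrightarrow{P} \theta'$ and $\theta_\Gg$ is any configuration compatible with $\theta$, then $\theta_\Gg \xrightarrow{P_\Gg} \theta'_\Gg$ for some $\theta'_\Gg$ compatible with $\theta'$ — here I would need to invoke the guaranteed termination of ghost loops so that the augmented execution does not diverge where the original one terminates.

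Given the simulation lemma, the proposition follows by unwinding the definition of validity (Definition~\ref{defn:ids-triple-validity}). Suppose $\langle\, \LC \land \psi_{\mathit{pre}} \,\rangle\; P_\Gg\; \langle\, \LC \land \psi_{\mathit{post}} \,\rangle$ is valid. Take any user configuration $\theta \models \exists g_1,\ldots,g_k.\, \LC \land \psi_{\mathit{pre}}$. By definition of the existential over monadic maps, there is an extension $J$ of $I$ to the symbols in $\Gg$ such that the resulting ghost configuration $\theta_\Gg$ satisfies $\LC \land \psi_{\mathit{pre}}$; note $\theta_\Gg$ is compatible with $\theta$. By validity of the augmented triple, the $P_\Gg$-execution from $\theta_\Gg$ does not reach $\bot$, and if it reaches $\theta'_\Gg$ then $\theta'_\Gg \models \LC \land \psi_{\mathit{post}}$. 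By the simulation lemma (direction (ii) to get the lifted execution, then the error-preservation part), the $P$-execution from $\theta$ also avoids $\bot$; and if $\theta \xrightarrow{P} \theta'$, then by direction (i) the corresponding $\theta'_\Gg$ is compatible with $\theta'$, so $\theta'$ (equipped with the ghost interpretation from $\theta'_\Gg$) witnesses $\exists g_1,\ldots,g_k.\, \LC \land \psi_{\mathit{post}}$, i.e., $\theta' \models \exists g_1,\ldots,g_k.\, \LC \land \psi_{\mathit{post}}$. This is precisely validity of the projected triple.

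The main obstacle I anticipate is the careful treatment of the simulation lemma across the constructs where ghost and user code interact nontrivially: in particular (a) conditionals and loops whose guards mention ghost variables, where the definition forces the body to be entirely ghost so that the user projection is a $\mathsf{skip}$ — one must check the branching in $P_\Gg$ does not produce user-observable effects absent in $P$; (b) ghost loops, where I must use the mandated ranking function / termination requirement to rule out the degenerate case that $P_\Gg$ diverges while $P$ terminates, which would break direction (ii); and (c) function calls, where the projection replaces $\overline{r},\Br := N_j(\ldots)$ style calls by their ghost-free counterparts, so the lemma must be proved by a mutual induction over the call graph (simultaneously for $P$ and all the $Q_i$), assuming the statement holds for callees. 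Everything else — assignments, lookups, mutations, allocation, $\mathsf{assume}$, sequencing — is routine case analysis showing that the user-visible part of the state evolves identically. I would structure the proof as an induction on the structure of the program (equivalently, on the length of the execution), with the function-call case handled by the mutual induction over methods described above, and I would relegate the fully formal definition of ghost code and the detailed case analysis to the appendix, citing the standard ghost-code literature for the parts that are routine.
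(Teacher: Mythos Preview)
Your proposal is correct and follows essentially the same approach as the paper: the paper defines a projection on configurations (your ``compatibility''), proves lemmas that pure ghost code terminates and does not affect the user state, lifts these to arbitrary ghost-augmented programs by structural induction (your simulation lemma, direction (ii) together with termination), and then unwinds Definition~\ref{defn:ids-triple-validity} exactly as you do. Two minor remarks: the paper only needs one direction of simulation plus a separate termination lemma, so your direction (i) is not strictly required here; and your mention of $\Br$ in function calls belongs to Proposition~\ref{prop:ghost-code-Br}, not this proposition, where only the $\Gg$-maps are ghost.
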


\begin{proof}[Proof Gist]
The full proof is in our supplementary material. The first Hoare triple shows that if we are given \emph{any} maps $g_i$ (implicitly encoded as values of ghost fields) that satisfy $LC$ in the pre-state, then the program with ghost code computes a modified version of these maps such that the $LC$ is holds in the post-state. Surely then, if there was a set of maps $g_i$ that satisfied $LC$ in the pre-state, there will exists a set of maps $g_i'$ that satisfy $LC$ in the post-state.
\end{proof}

\noindent
We note a point of subtlety about the reduction in this stage here: the simplified triple eliminates existential quantification over $\Gg$ by claiming something stronger than the original specification, namely that for \emph{any} maps $\{g_i\}$ such that $\psi_{\mathit{pre}}$ is satisfied in the pre state, there is a \emph{computation} that yields corresponding maps in the post state such that $\psi_{\mathit{post}}$ holds. The onus of coming up with such a computation is placed on the verification engineer.

\subsection{Stage 2: Relaxing Universal Quantification using Broken Sets}
\label{sec:broken-sets}

We turn to verifying programs whose pre and post conditions are of the form $\LC \land \gamma$, where $\LC \equiv \forall z.\, \rho(z)$ is the local condition. %
Consider a program $P$ that maintains the data structure. The local conditions are satisfied everywhere in both the pre and post state of $P$. However, they need not hold everywhere in the intermediate states. In particular, $P$ may call a method $N$ which may neither receive nor return a proper data structure. %
To reason about $P$ modularly we must be able to express contracts for methods like $N$. %
To do this we must be able to talk about program states where only some objects may satisfy the local conditions.

\smallskip
\mypara{Broken Sets} We introduce in programs a ghost set variable $\Br$ that represents the set of (potentially) broken objects. Intuitively, at any point in the program the local conditions must always be satisfied on every object that is \emph{not} in the broken set. Formally, for a program $P$ we extend the signature of $P$ with $\Br$ as an additional input and an additional output. We also write pre and post conditions of the form $(\forall z \notin \Br.\, \rho(z)) \land \gamma$ to denote that local conditions are satisfied everywhere outside the broken set, where $\gamma$ can now use $\Br$. In particular, given the Hoare triple 
\begin{center}
$\langle\, (\forall z.\, \rho(z)) \land \alpha\, \rangle\; P_\Gg(\overline{x},\,\mathit{ret}\!:\,\overline{y})\; \langle\, (\forall z.\, \rho(z)) \land \beta\, \rangle$
\end{center}

\noindent
from Stage 1, we instead prove the following Hoare triple (whose validity implies the validity of the triple above):

\begin{center}
$\langle\, (\forall z \notin \Br.\, \rho(z)) \land \alpha \land \Br = \emptyset\, \rangle\; P_{\Gg,\Br}(\overline{x}, \Br,\,\mathit{ret}\!:\,\overline{y}, \Br)\; \langle\, (\forall z \notin \Br.\, \rho(z)) \land \beta \land \Br = \emptyset\, \rangle$
\end{center}

\noindent
where $\Br$ is a ghost input variable of the type of set of objects and $P_{\Gg,\Br}$ is an augmentation of $P$ with ghost code that computes the $\Gg$ maps as well as the $\Br$ set satisfying the postcondition. %

$P$ may also call other methods $N$ with bodies $Q$. We similarly extend the input and output signatures of the called methods and use the broken set to write appropriate contracts for the methods, introducing triples of the form $\langle\, (\forall z \notin\Br.\, \rho(z)) \land \alpha_N\,\rangle\; Q_\Br(\overline{s},\Br,\, \mathit{ret}\!:\, \overline{r},\Br)\; \langle\,(\forall z \notin\Br.\, \rho(z)) \land \beta_N\,\rangle$. Again, $Q_{\Gg,\Br}$ is an augmentation of $Q$ with ghost code that updates $\Gg$ and $\Br$. %

For the main method that preserves the data structure property, the broken set is empty at the beginning and end of the program. %
However, called methods or loop invariants can talk about states with nonempty broken sets. %
We require the verification engineer to write ghost code that maintains the broken set accurately. The soundness of this reduction is captured by the following Proposition:

\begin{proposition}
\label{prop:ghost-code-Br}
Let $\alpha$ and $\beta$ be quantifier-free formulae over $\Ff \cup \Gg$ (they cannot mention $\Br$). If $\langle\, (\forall z \notin \Br.\, \rho(z)) \land \alpha \land \Br = \emptyset\, \rangle\; P_{\Gg,\Br}(\overline{x}, \Br,\,\mathit{ret}\!:\,\overline{y}, \Br)\; \langle\, (\forall z \notin \Br.\, \rho(z)) \land \beta \land \Br = \emptyset\, \rangle$ is valid then $\langle\, (\forall z.\, \rho(z)) \land \alpha\, \rangle\; P_\Gg(\overline{x},\,\mathit{ret}\!:\,\overline{y})\; \langle\, (\forall z.\, \rho(z)) \land \beta\, \rangle$ is valid, where $P_\Gg$ is the projection of $P_{\Gg,\Br}$ obtained by eliminating the statements that manipulate $\Br$. 
\end{proposition}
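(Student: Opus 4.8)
The plan is to reuse the structure of the proof of Proposition~\ref{prop:ghost-code}, the only new ingredient being that $\Br$ is a ghost variable: the code manipulating $\Br$ cannot influence the non-ghost (or $\Gg$-) part of the computation, so stripping it out preserves operational behaviour, and the hypothesis $\Br = \emptyset$ in the pre/post of the stronger triple makes $(\forall z \notin \Br.\,\rho(z))$ collapse to $\LC \equiv (\forall z.\,\rho(z))$.

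\emph{Setup.} First I would take an arbitrary configuration $\theta$ of $P_\Gg$ with $\theta \models (\forall z.\,\rho(z)) \land \alpha$, and extend it to a configuration $\hat\theta$ of $P_{\Gg,\Br}$ by interpreting the extra ghost variable $\Br$ as $\emptyset$. Because $\Br = \emptyset$ in $\hat\theta$, the formula $(\forall z \notin \Br.\,\rho(z))$ is equivalent on $\hat\theta$ to $(\forall z.\,\rho(z))$; and since $\alpha$ does not mention $\Br$ and $\hat\theta$ agrees with $\theta$ on all other symbols, $\hat\theta \models (\forall z \notin \Br.\,\rho(z)) \land \alpha \land \Br = \emptyset$ — i.e. $\hat\theta$ satisfies the precondition of the triple assumed valid.

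\emph{Execution correspondence.} The core step is a lemma, proved by structural induction on $P_{\Gg,\Br}$ (with the usual treatment of recursive calls via bounded-depth executions, mutually over the bodies $Q_i$ of auxiliary methods), stating: whenever configurations $\theta'$ and $\hat\theta'$ agree on everything except the value of $\Br$, we have (i) $\theta' \xrightarrow{P_\Gg} \bot$ iff $\hat\theta' \xrightarrow{P_{\Gg,\Br}} \bot$, and (ii) $\hat\theta' \xrightarrow{P_{\Gg,\Br}} \hat\theta''$ iff $\theta' \xrightarrow{P_\Gg} \theta''$ for a pair $\theta'', \hat\theta''$ that again differ only on $\Br$. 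The statements removed by the projection are exactly those that read or write $\Br$ (the assignments $\Br := \Br \cup A$ and $\Br := \Br \setminus \{x\}$, the passing of $\Br$ through call signatures, and any ghost conditional/loop whose guard reads $\Br$, whose body the ghost-code discipline forces to be itself ghost $\Br$-code); by the ghost-code discipline no user or $\Gg$-updating statement reads $\Br$, and no guard of a user conditional or loop mentions $\Br$, so the two executions stay in lockstep on the shared part of the store and heap. In particular they dereference the same (null-or-not) locations at the same moments, which gives (i), and they produce configurations differing only on $\Br$, which gives (ii). Termination of ghost $\Br$-loops (from the discipline) ensures the projection does not turn a terminating run into a nonterminating one or vice versa.

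\emph{Conclusion.} Applying validity of the stronger triple to $\hat\theta$ yields $\hat\theta \not\xrightarrow{P_{\Gg,\Br}} \bot$, so by (i) $\theta \not\xrightarrow{P_\Gg} \bot$; and if $\theta \xrightarrow{P_\Gg} \theta'$, then by (ii) $\hat\theta \xrightarrow{P_{\Gg,\Br}} \hat\theta'$ with $\hat\theta'$ agreeing with $\theta'$ off $\Br$, whence $\hat\theta' \models (\forall z \notin \Br.\,\rho(z)) \land \beta \land \Br = \emptyset$. Since $\Br = \emptyset$ in $\hat\theta'$ this is $(\forall z.\,\rho(z)) \land \beta$, and since neither $\rho$ nor $\beta$ mentions $\Br$, also $\theta' \models (\forall z.\,\rho(z)) \land \beta$. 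As $\theta$ was arbitrary, the projected triple is valid (per Definition~\ref{defn:ids-triple-validity}). I expect the only real work to be the bookkeeping in the correspondence lemma — stating the projection precisely, clause by clause of the grammar, and especially handling function calls carrying $\Br$ and ghost conditionals whose guard reads $\Br$ — but this is routine given the ghost-code discipline; the rest is unfolding definitions and instantiating $\Br = \emptyset$.
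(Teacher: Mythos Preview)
Your proposal is correct and follows essentially the same approach as the paper: treat $\Br$ as the (sole) ghost variable, reuse the ghost-code correspondence argument from Proposition~\ref{prop:ghost-code}, and observe that $\Br = \emptyset$ collapses $(\forall z \notin \Br.\,\rho(z))$ to $(\forall z.\,\rho(z))$. The paper packages the last step slightly differently---it first derives the triple with $\exists \Br.\,((\forall z \notin \Br.\,\rho(z)) \land \cdots \land \Br = \emptyset)$ in pre/post and then simplifies that formula---whereas you instantiate $\Br = \emptyset$ directly in the extended configuration; but this is the same argument unfolded.
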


The proof of this proposition is similar to the proof of Proposition~\ref{prop:ghost-code}, except that projections only eliminate $\Br$. We provide a detailed argument in our supplementary material in Appendix~\ref{app:stages-soundness}.

\subsection{Stage 3: Eliminating the Universal Quantifier for Well-Behaved Programs}
\label{sec:well-behaved}

We consider triples of the form 
\begin{center}
$\langle\, (\forall z \notin \Br.\, \rho(z)) \land \alpha\, \rangle\; P_{\Gg,\Br}(\overline{x}, \Br,\,\mathit{ret}\!:\,\overline{y}, \Br)\; \langle\, (\forall z \notin \Br.\, \rho(z)) \land \beta\, \rangle$
\end{center} 

\noindent
where $P_{\Gg,\Br}$ is a program augmented with ghost updates to the $\Gg$-fields as well as the $\Br$ set, and $\alpha,\beta$ are quantifier-free formulae that can also mention the fields in $\Gg$ and the $\Br$ set. In this stage we would like to eliminate the quantified conjunct entirely and instead ask the engineer to prove the validity of the triple 
\begin{center}
$\{\alpha\}\; P_{\Gg,\Br}(\overline{x}, \Br,\,\mathit{ret}\!:\,\overline{y}, \Br)\; \{\beta\}$
\end{center}

However, the above two triples are not, in general, equivalent (as broken sets can be manipulated wildly). In this section we define a syntactic class of \emph{well-behaved} programs that force the verification engineer to maintain broken sets correctly, and for such programs the above triple are indeed equivalent. For example, for a field mutation, well-behaved programs require the engineer to determine the set of \emph{impacted objects} where local conditions may be broken by the mutation. The well-behavedness paradigm then mandates that the engineer add the set of impacted objects to the broken set immediately following the mutation statement. Similarly, well-behaved programs do not allow the engineer to remove an object from the broken set unless they show that the local conditions hold on that object. The imposition of this discipline ensures that programmers carefully preserve the meaning of the broken set (i.e., objects outside the broken set must satisfy local conditions). This allows for the quantified conjunct in the triple obtained from Stage 2 to be dropped since it always holds for a well-behaved program. Let us look at such a program:

\begin{example}[Well-Behaved Sorted List Insertion]
\label{ex:slist-insert-wellbehaved}
We use the running example (Example~\ref{ex:slist-insert-triple}) of insertion into a sorted list. We consider a snippet where the key $k$ to be inserted lies between the keys of $x$ and $\nxt(x)$ (which we assume is not $\nil$). We ignore the conditionals that determine $\nxt(x) \neq \nil$ and $\key(x) \leq k \leq \key(\nxt(x))$ for brevity.

We first relax the universal quantification as described in Stage 2 (Section~\ref{sec:broken-sets}) and rewrite the pre and post conditions to ($\forall\,z\notin\Br.\,\LC(z))\land\sortedll(x)\land\Br=\emptyset$. Making the first conjunct implicit, we write the following program that manipulates the broken set in a well-behaved manner. We show the value of the broken set through the program in comments on the right:

\vspace{0.5em}
\hspace{2em}
\begin{minipage}[t]{0.3\textwidth}\footnotesize
\begin{alltt}
\annotation{pre:} \ensuremath{\sortedll(x)\land\Br=\emptyset}
\annotation{post:} \ensuremath{\sortedll(x)\land\Br=\emptyset}
 \ghost{assert x \ensuremath{\notin} Br;}
 \autobr{assume LC(x);}
 y := x.next;    \codecomment{// \{\}}
 z := \keyword{new} C();
 \autobr{Br := Br \ensuremath{\cup} \{z\};} \codecomment{// \{z\}}
 z.key := k;
 \autobr{Br := Br \ensuremath{\cup} \{z\};} \codecomment{// \{z\}}
 z.next := y;
 \autobr{Br := Br \ensuremath{\cup} \{z\};} \codecomment{// \{z\}}
 \hfill
\end{alltt}
\end{minipage}
\hspace{5em}
\begin{minipage}[t]{0.6\textwidth}
\footnotesize
\begin{alltt}
 \ghost{z.sortedll := True;}
 \autobr{Br := Br \ensuremath{\cup} \{z\};} \codecomment{// \{z\}}
 x.next := z;
 \autobr{Br := Br \ensuremath{\cup} \{x\};} \codecomment{// \{x,z\}}
 \ghost{z.rank := (x.rank + y.rank)/2;}
 \autobr{Br := Br \ensuremath{\cup} \{z\};} \codecomment{// \{x,z\}}
 \codecomment{// x and z satisfy LC}
 \autobr{assert LC(z);}
 \ghost{Br := Br \ensuremath{\setminus} \{z\};} \codecomment{// \{x\}}
 \autobr{assert LC(x);}
 \ghost{Br := Br \ensuremath{\setminus} \{x\};} \codecomment{// \{\}}
\end{alltt}
\end{minipage}

\smallskip
We depict the statements enforced by the well-behavedness paradigm in \textcolor{nicepink}{pink} and the ghost updates written by the verification engineer in \textcolor{blue}{blue}. Observe that the paradigm adds the impacted objects to the broken set after each mutation and allocation. Determining the impact set of a mutation is nontrivial; we show how to construct them in Section~\ref{sec:cs-slist-insert}. Note also that to remove $x$ from the broken set we must show $\LC(x)$ holds (assert followed by removal from $\Br$). Finally, we see at the beginning of the snippet that if we show $x \notin \Br$ then we can infer that $\LC(x)$ holds. This follows from the meaning of the broken set. %

\smallskip
\noindent
\textbf{\textit{Putting it All Together.}} The above program corresponds to the program $P_{\Gg,\Br}$ obtained from the Stage 3 reduction, consisting of ghost updates to the $\Gg$ maps and $\Br$. Since it is well-behaved and satisfies the contract $\langle\,\sortedll(x)\land\Br=\emptyset\,\rangle\; P_{\Gg,\Br}\; \langle\,\sortedll(x)\land\Br=\emptyset\,\rangle$ we can conclude that it satisfies the contract $\langle\,(\forall z \notin \Br.\,\rho(z)) \land\sortedll(x)\land\Br=\emptyset\,\rangle\; P_{\Gg,\Br}\; \langle\,(\forall z \notin \Br.\,\rho(z)) \land\sortedll(x)\land\Br=\emptyset\,\rangle$. Using Propositions~\ref{prop:ghost-code} and~\ref{prop:ghost-code-Br} we can project out all augmented code and conclude that the triple given in Example~\ref{ex:slist-insert-triple} with the user's original program and intrinsic specifications is valid! In this way, using FWYB we can verify programs with respect to intrinsic specifications by verifying augmented programs with respect to quantifier-free specifications. The latter can be discharged efficiently in practice using SMT solvers~\cite{Z3,cvc4} (see Section~\ref{sec:qfree-vcs}).\qed

\end{example}

\medskip
\noindent
We dedicate the rest of this section to developing the general theory of well-behaved programs.

\subsection*{Rules for Constructing Well-Behaved Programs}

We define the class of well-behaved programs using a set of rules. We first introduce some notation. 

We distinguish the triples over the augmented programs and quantifier-free annotations by $\{\psi_{\mathit{pre}}\}\, P\, \{\psi_{\mathit{post}}\}$, with $\{\}$ brackets rather than $\langle\,\rangle$. %
$\vdash\{\psi_{\mathit{pre}}\}\, P\, \{\psi_{\mathit{post}}\}$ denotes that a triple is provable. %
Our theory is agnostic to the underlying mechanism for proving triples correct (we use the off-the-shelf verification tool \Boogie in our evaluation). %
However, we assume that the mechanism is sound with respect to the operational semantics. %
We denote that a snippet $P$ is well-behaved by $\wb P$. We also denote that local conditions hold on an object $x$ by $\LC(x)$. 

\smallskip
Figure~\ref{fig:well-behaved} shows the rules for writing well-behaved programs. We only explain the interesting cases here. %

\textsc{Mutation}.\hspace{0.25em} Since mutations can break local conditions, we must grow the broken set. Let $A$ be a finite set of object-type terms over $x$ such that for any $z \notin A$, if $\LC(z)$ held before the mutation, then it continues to hold after the mutation. We refer to such a set $A$ as an \emph{impact set} for the mutation, and we update $\Br$ after a mutation with its impact set. %
The impact set may not always be expressible as a finite set of terms, but this is indeed the case for all the intrinsically defined data structures we use in this paper. We show how to construct impact sets in Section~\ref{sec:cs-slist-insert}.

\textsc{Allocation}.\hspace{0.25em} Allocation does not modify the heap on any existing object. Therefore, we simply update the broken set by adding the newly created object $x$ (this was also the case in Example~\ref{ex:slist-insert-wellbehaved}).

\textsc{Assert LC and Remove}.\hspace{0.25em} This rule allows us to shrink the broken set once the verification engineer fixes the local conditions on a broken location. The snippet \texttt{\autobr{assert\,LC(x);\,}\ghost{Br\,:=\,Br\ensuremath{\setminus}\{x\}}} in Example~\ref{ex:slist-insert-wellbehaved} uses this rule. Informally, the verification engineer is required to show that $\LC(x)$ holds before removing $x$ from $\Br$.

\textsc{Infer LC Outside Br}.\hspace{0.25em} Recall that for well-behaved programs we know implicitly that $\forall x \notin \Br.\, \rho(x)$ holds. This rule allows us to instantiate this implicit fact on objects that we can show lie outside the broken set. The snippet \texttt{\ghost{assert\,x\ensuremath{\notin}Br;\,}\autobr{assume\,LC(x)}} in Example~\ref{ex:slist-insert-wellbehaved} uses this rule.

\smallskip
\noindent
We show that the above rules are sound for the elimination of the universal quantifier in Stage 3:

\begin{proposition}
\label{prop:well-behaved-sound}
Let $[(M:P); (N_1:Q_1)\ldots, (N_k:Q_k)]$ be a program (which can use $\Gg$ and $\Br$) such that $\wb P$ and $\wb Q_i, 1 \leq i \leq k$. Let $\alpha$ and $\beta$ be quantifier-free formulae over $\Ff \cup \Gg$ which can use $\Br$. If $\{\alpha\}\; P(\overline{x}, \Br,\,\mathit{ret}\!:\,\overline{y}, \Br)\; \{\beta\}$ is valid, then $\langle\, (\forall z \notin \Br.\, \rho(z)) \land \alpha\, \rangle\; P(\overline{x}, \Br,\,\mathit{ret}\!:\,\overline{y}, \Br)\; \langle\, (\forall z \notin \Br.\, \rho(z)) \land \beta\, \rangle$ is valid.
\end{proposition}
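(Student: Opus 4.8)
The plan is to extract the whole content of the proposition into a single \emph{invariant lemma}. Write $I$ for the formula $\forall z \notin \Br.\, \rho(z)$. I would prove: for every well-behaved snippet $R$ over the fixed method collection (so that every method body is well-behaved) and every configuration $\theta$ with $\theta \models I$, if $\theta \xrightarrow{R} \theta'$ then $\theta' \models I$. Here ``$\theta \xrightarrow{R} \theta'$'' tacitly means a successful, non-erroring, terminating transition, so there is nothing to prove when an execution errors or diverges. Granting this lemma, the proposition is immediate: suppose $\{\alpha\}\, P\, \{\beta\}$ is valid and $\theta \models I \land \alpha$; since $\theta \models \alpha$, validity of $\{\alpha\}\, P\, \{\beta\}$ gives that $P$ does not reach $\bot$ from $\theta$ and that $\theta \xrightarrow{P} \theta'$ implies $\theta' \models \beta$, while the invariant lemma (using $\wb P$ and $\theta \models I$) gives $\theta' \models I$. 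Hence $\theta'\models I\land\beta$ and $\langle\, I \land \alpha\,\rangle\, P\, \langle\, I \land \beta\,\rangle$ is valid.

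For the invariant lemma I would induct on the height of the (big-step) operational-semantics derivation of $\theta \xrightarrow{R} \theta'$, and inside the inductive step invert the last rule used to establish $\wb R$, treating the two-statement chunks produced by \textsc{Mutation} and \textsc{Allocation} and the conditionals produced by \textsc{Infer LC Outside Br} and \textsc{Assert LC and Remove} as atomic cases. For \textsc{Skip}/\textsc{Assignment}/\textsc{Lookup}/\textsc{Return}, neither the heap nor $\Br$ changes, and since $\rho$ is quantifier-free over $\Ff \cup \Gg$ it reads only the heap, so $I$ is preserved trivially. For \textsc{Mutation} ($x.f := v;\, \Br := \Br \cup A$): from the $\vdash$-premise of the rule and soundness of the proof mechanism, $A$ is a genuine impact set; and since the execution did not error, $x \neq \nil$ at the mutation, matching the side condition the premise needs; any $z$ outside the new $\Br$ was outside the old $\Br$ (hence satisfied $\rho$) and outside $A$, so it still satisfies $\rho$ afterwards. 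For \textsc{Allocation} ($x := \mathsf{new}\, C();\, \Br := \Br \cup \{x\}$): the fresh object goes into $\Br$, and any $z$ outside the new $\Br$ is an old object not in the old $\Br$; because $\rho(z)$ is a Boolean combination of atoms built from finitely many field/ghost applications starting at $z$, none of which can reach the fresh object (old objects do not point to it), $\rho(z)$ evaluates identically before and after, hence still holds. For \textsc{Assert LC and Remove} ($\mathsf{if}\; LC(x)\; \mathsf{then}\; \Br := \Br \setminus \{x\}$): if the guard is false nothing changes; if it holds then $\rho(x)$ is true in a state whose heap is unchanged, so even $x$ (now possibly outside $\Br$) satisfies $\rho$, and every other object outside the new $\Br$ was outside the old one. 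For \textsc{Infer LC Outside Br} ($\mathsf{if}\; (x \neq \nil \land x \notin \Br)\; \mathsf{then}\; \mathsf{assume}\; LC(x)$): this is the one rule that \emph{uses} $I$ --- when the guard holds, $x$ is a non-$\nil$ object outside $\Br$, so $\theta \models I$ already entails $\rho(x)$, the $\mathsf{assume}$ passes without blocking, and the configuration is unchanged. For \textsc{Function Call} ($\overline{y}, \Br := N_j(\ldots)$): the call runs the body $Q_j$ on the same heap with the passed-in $\Br$; $\wb Q_j$ holds by hypothesis and the sub-execution of $Q_j$ is a strictly smaller operational derivation, so the inductive hypothesis applies to it and yields $I$ in the returned state. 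Finally \textsc{Composition}, \textsc{If-Then-Else}, and \textsc{While} follow by applying the inductive hypothesis to the strictly smaller sub-executions and chaining through the intermediate configurations.

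I expect the main obstacle to be the \textsc{Function Call} case: well-behavedness of $P$ is certified by one $\wb$-derivation, while the callees $Q_j$ have their own, so a plain induction on a $\wb$-derivation never descends into callees --- which is precisely why the induction is organized around the operational-semantics derivation (whose callee sub-derivations are genuinely smaller) with the $\wb$-rule used only for inversion at each step. Two further points need care: first, the exact semantics of $\rho$ must be pinned down --- that it reads only the heap and only finitely many field/ghost steps from its argument --- since this is what makes the \textsc{Allocation} and frame-style arguments go through; and second, in the \textsc{Mutation} rule one must be careful about the evaluation point of the impact-set term $A$, i.e.\ that the $A$ actually added to $\Br$ (evaluated after the mutation) still soundly over-approximates the objects whose local condition was disturbed, which holds for the impact sets used in the paper because they do not dereference the mutated field.
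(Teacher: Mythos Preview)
Your proposal is correct and follows essentially the same approach as the paper: isolate the invariant $I \equiv \forall z \notin \Br.\, \rho(z)$, show it is preserved by each well-behaved construct via the same case analysis on the rules in Figure~\ref{fig:well-behaved}, and then combine with validity of $\{\alpha\}\,P\,\{\beta\}$ to get the conclusion. The only difference is organizational: the paper uses a two-level induction (outer on nesting depth of method calls, inner structural on the $\wb$-derivation), whereas you fold both into a single induction on the height of the big-step derivation, which is slightly cleaner and handles the \textsc{Function Call} case more uniformly; your flagged subtleties about the evaluation point of $A$ and what $\rho$ reads are real but are handled (implicitly) in the paper the same way you suggest.
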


We prove the above proposition by structural induction on the rules in Figure~\ref{fig:well-behaved}. We provide the proof in Appendix~\ref{app:stages-soundness} of our supplementary material.

\smallskip
In the above presentation we use only one broken set for simplicity of exposition. Our general framework allows for finer-grained broken sets that can track breaks over a partition on the local conditions. For example, in Section~\ref{sec:cs-overlaid} we verify deletion in an overlaid data structure consisting of a linked list and a binary search tree using two broken sets: one each for the local conditions of the two component data structures.

\begin{figure}
\begin{mathpar}\footnotesize
\inferrule[Skip/Assignment/Lookup/Return]{ \\ }{\wb s \textrm{   where $s$ is of the form} \\\\ \textrm{$\mathsf{skip}$,~\texttt{x:=y},~  \texttt{x:=y.f},~ or~ $\mathsf{return}$}} 
\and
\inferrule[Mutation]{ \\\\ \vdash \{\, z \notin A \land LC(z) \land x \neq \nil \,\}\; x.f := v\; \{\, LC(z) \,\} }{ \wb\; x.f := v\,;\, \Br := \Br \cup A \\\\ \textrm{where $A$ is a finite set of location terms over $x$} }
\and
\inferrule[Allocation]{ \\ }{\wb\; x := \mathsf{new}\, C()\,;\, \Br := \Br \cup \{x\}}
\and
\inferrule[Function Call]{ \\ }{\wb\; \overline{y}, \Br := \mathit{Function}(\overline{x},\Br)}
\and
\inferrule[Infer LC Outside Br]{ \\ }{ \wb\; \mathsf{if}\; (x \neq \nil \land x \notin \Br)\; \mathsf{then}\; \mathsf{assume}\,LC(x)%
}
\and
\inferrule[Assert LC and Remove]{ \\ }{ \wb\; \mathsf{if}\; LC(x)\; \mathsf{then}\; \Br := \Br \setminus \{x\}%
}
\and
\inferrule[Composition]{ \\\\ \wb\; P \\ \wb Q }{ \wb\; P\,;\, Q }
\and
\inferrule[If-Then-Else]{ \\\\ \wb\; P \\ \wb Q }{ \wb\; \mathsf{if}\; \mathit{cond}\; P\; \mathsf{else}\; Q \\\\ \textrm{where $\mathit{cond}$ does not mention $\Br$} }
\and
\inferrule[While]{ \\\\ \wb\; P }{ \wb\; \mathsf{while}\; \mathit{cond}\; \mathsf{do}\; P \\\\ \textrm{where $\mathit{cond}$ does not mention $\Br$} }
\end{mathpar}
\caption{Rules for constructing well-behaved programs. Local condition formula instantiated at $x$ is denoted by $\LC(x)$. The statement $(\mathsf{if}\, \mathit{cond}\, \mathsf{then}\, S)$ is sugar for $(\mathsf{if}\, \mathit{cond}\, \mathsf{then}\, S\, \mathsf{else}\, \mathsf{skip})$.}
\label{fig:well-behaved}
\end{figure}

\subsection{Soundness of FWYB}
\label{sec:soundness}

In this section we state the soundness of the FWYB methodology. 

\begin{theorem}[FWYB Soundness]
\label{thm:soundness}
Let $(\Gg, \LC, \varphi)$ be an intrinsic definition with $\Gg = \{g_1,g_2\ldots, g_l\}$. Let $[(M:P); (N_1:Q_1)\ldots, (N_k:Q_k)]$ be an augmented program constructed using the FWYB methodology such that $\,\wb P$ and $\,\wb Q_i,\, 1\leq i \leq k$, i.e., the programs $P$ and $Q_i$ are well-behaved (according to the rules in Figure~\ref{fig:well-behaved}). Let $\varphi$, $\psi_{\mathit{pre}}$, and $\psi_{\mathit{post}}$ be quantifier-free formulae that do not mention $\Br$ (but can mention the maps in $\Gg$). Finally, let $[(\hat{M}:\hat{P}); (\hat{N_1}:\hat{Q_1})\ldots, (\hat{N_k}:\hat{Q_k})]$ be the projected user-level program according to Definition~\ref{defn:code-projection}. Then, if the triple:
\begin{center}
$\{\varphi \land \psi_{\mathit{pre}} \land \Br = \emptyset \}\; P\; \{\varphi \land \psi_{\mathit{post}} \land \Br = \emptyset\}$
\end{center}

\noindent
is valid, then the triple

\begin{center}
$\langle\,\soexists\, g_1,g_2\ldots, g_l.\, (\LC \land \varphi \land \psi_{\mathit{pre}}) \,\rangle\; \hat{P}\; \langle\,\soexists\, g_1,g_2\ldots, g_l.\, (\LC \land \varphi \land \psi_{\mathit{post}}) \,\rangle$
\end{center}

\noindent
is valid (according to Definition~\ref{defn:ids-triple-validity}).
\end{theorem}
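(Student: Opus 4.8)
The plan is to derive the theorem by chaining, in reverse stage order, the three soundness reductions already proved --- Proposition~\ref{prop:well-behaved-sound} (Stage~3), Proposition~\ref{prop:ghost-code-Br} (Stage~2), and Proposition~\ref{prop:ghost-code} (Stage~1). Each of these strips off exactly one layer: first the implicit universal conjunct $\forall z \notin \Br.\,\rho(z)$, then the $\Br$ bookkeeping, then the second-order existential over $\Gg$. So the argument is a short three-link chain; its only substance is to check that the hypotheses line up at each link --- which amounts to reassociating conjunctions and picking the right instantiations --- and that the two successive projection operations compose to the single projection of Definition~\ref{defn:code-projection}. Note that ``valid'' in the theorem's hypothesis is the same semantic notion (Definition~\ref{defn:ids-triple-validity}) used in all three propositions, so the hypotheses genuinely match.

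\textbf{Step 1 (Stage~3).} Apply Proposition~\ref{prop:well-behaved-sound} to the program $[(M:P); (N_1:Q_1)\ldots (N_k:Q_k)]$, instantiating its $\alpha,\beta$ as $\varphi \land \psi_{\mathit{pre}} \land \Br = \emptyset$ and $\varphi \land \psi_{\mathit{post}} \land \Br = \emptyset$ (both quantifier-free over $\Ff \cup \Gg$, and permitted to mention $\Br$). The theorem's hypothesis is exactly that $\{\varphi \land \psi_{\mathit{pre}} \land \Br = \emptyset\}\,P\,\{\varphi \land \psi_{\mathit{post}} \land \Br = \emptyset\}$ is valid, and $\wb P$ and $\wb Q_i$ hold by assumption, so the proposition yields the validity of
\[
\langle\, (\forall z \notin \Br.\,\rho(z)) \land \varphi \land \psi_{\mathit{pre}} \land \Br = \emptyset \,\rangle\; P\; \langle\, (\forall z \notin \Br.\,\rho(z)) \land \varphi \land \psi_{\mathit{post}} \land \Br = \emptyset \,\rangle .
\]
\textbf{Step 2 (Stage~2).} Reassociating the conjuncts, this triple is verbatim the premise of Proposition~\ref{prop:ghost-code-Br} with its $\alpha,\beta$ taken as $\varphi \land \psi_{\mathit{pre}}$ and $\varphi \land \psi_{\mathit{post}}$ --- which do not mention $\Br$ since $\varphi,\psi_{\mathit{pre}},\psi_{\mathit{post}}$ do not. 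The proposition then gives the validity of $\langle\, \LC \land \varphi \land \psi_{\mathit{pre}} \,\rangle\; P_\Gg\; \langle\, \LC \land \varphi \land \psi_{\mathit{post}} \,\rangle$ (using $\LC \equiv \forall z.\,\rho(z)$), where $P_\Gg$ is $P$ with the $\Br$-parameters and the $\Br$-manipulating statements removed; crucially $P_\Gg$ is still a well-formed ghost-augmented program. \textbf{Step 3 (Stage~1).} Apply Proposition~\ref{prop:ghost-code} with its $\psi_{\mathit{pre}},\psi_{\mathit{post}}$ taken as $\varphi \land \psi_{\mathit{pre}}$ and $\varphi \land \psi_{\mathit{post}}$; this produces the validity of
\[
\langle\, \soexists\, g_1,\ldots,g_l.\, (\LC \land \varphi \land \psi_{\mathit{pre}}) \,\rangle\; \tilde{P}\; \langle\, \soexists\, g_1,\ldots,g_l.\, (\LC \land \varphi \land \psi_{\mathit{post}}) \,\rangle ,
\]
where $\tilde{P}$ is $P_\Gg$ with its remaining ghost code (the $\Gg$-updates and the $\mathsf{assert}/\mathsf{assume}$ statements about $\LC$) eliminated.

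The remaining --- and most fiddly --- task is to identify $\tilde{P}$ with the projection $\hat{P}$ of Definition~\ref{defn:code-projection} (and each $\tilde{Q_i}$ with $\hat{Q_i}$). The content here is that eliminating ghost code in two passes (first the $\Br$-layer, then the $\Gg$-layer) deletes exactly the statements and signature components that Definition~\ref{defn:code-projection} deletes in a single pass. I would prove this by a routine structural induction on program syntax; the only mildly delicate clause is the function-call statement, where one must check that dropping the $\Br$ argument and afterwards the remaining ghost arguments leaves the same residual call as dropping all ghost arguments together. One should also observe that the three propositions are each stated for a main method together with its auxiliary methods, so the three applications above act uniformly on $M$ and on every $N_i$, keeping the common contract shape at each stage. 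Once $\tilde{P} = \hat{P}$ is established, the displayed triple of Step~3 is verbatim the conclusion of the theorem. I expect no genuine obstacle beyond this bookkeeping: all the mathematical content sits in Propositions~\ref{prop:well-behaved-sound}, \ref{prop:ghost-code-Br}, and~\ref{prop:ghost-code}, which the theorem may invoke.
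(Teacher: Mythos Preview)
Your proposal is correct and follows essentially the same route as the paper: chain Proposition~\ref{prop:well-behaved-sound}, then Proposition~\ref{prop:ghost-code-Br}, then Proposition~\ref{prop:ghost-code}, and finally identify the two-pass projection with the single-pass projection $\hat{P}$. You are in fact slightly more careful than the paper's own proof on two points of bookkeeping: you retain the $\Br=\emptyset$ conjunct through Step~1 so that the hypothesis of Proposition~\ref{prop:ghost-code-Br} matches verbatim (the paper's text drops it a line early), and you make explicit the need to argue that eliminating the $\Br$-layer and then the $\Gg$-layer yields the same program as Definition~\ref{defn:code-projection}'s single projection, which the paper simply asserts as ``$\hat{P_\Gg}$ is the same as $\hat{P}$.''
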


Informally, the soundness theorem says that given a user-written program, if we (a) augment it with updates to ghost fields and the broken set only using the discipline for well-behaved programs, and (b) show that if the broken set is empty at the beginning of the program it will be empty at the end, then the original user-written program satisfies the intrinsic specifications on preservation of the data structure. 

The proof of the theorem trivially follows from the soundness of the three stages. Let us write $P$ as $P_{\Gg,\Br}$ to emphasize that the program contains ghost code that manipulates both the $\Gg$ maps and $\Br$. We begin with the fact that $\{\varphi \land \psi_{\mathit{pre}} \land \Br = \emptyset \}\; P_{\Gg,\Br}\; \{\varphi \land \psi_{\mathit{post}} \land \Br = \emptyset\}$ is valid. Since $P$ and its auxiliary functions are well-behaved we have from Proposition~\ref{prop:well-behaved-sound} that $\langle\, (\forall z \notin \Br.\, \rho(z)) \land \varphi \land  \psi_{\mathit{pre}}\, \rangle\; P_{\Gg,\Br}\; \langle\, (\forall z \notin \Br.\, \rho(z)) \land \varphi \land  \psi_{\mathit{post}}\, \rangle$ is valid.

Next, we use Proposition~\ref{prop:ghost-code-Br} to conclude that $\langle\, (\forall z.\, \rho(z)) \land \varphi \land  \psi_{\mathit{pre}}\, \rangle\; P_\Gg\; \langle\, (\forall z.\, \rho(z)) \land \varphi \land  \psi_{\mathit{post}}\, \rangle$ is valid, where $P_\Gg$ is the projection of $P_{\Gg,\Br}$ obtained by eliminating the statements that manipulate $\Br$. Finally, we use Proposition~\ref{prop:ghost-code}, along with the fact that $\forall z.\, \rho(z)$ is $\LC$ and $\hat{P_\Gg}$ is the same as $\hat{P}$ to conclude that $\langle\,\soexists\, g_1,g_2\ldots, g_l.\, (\LC \land \varphi \land \psi_{\mathit{pre}}) \,\rangle\; \hat{P}\; \langle\,\soexists\, g_1,g_2\ldots, g_l.\, (\LC \land \varphi \land \psi_{\mathit{post}}) \,\rangle$ is valid\footnotemark.\qed

\footnotetext{The presentation of FWYB augments the original program $P$ with manipulations to $\Gg$ and $\Br$ in separate stages. This is done for clarity of exposition. This may not be possible in general since we may write ghost code with expressions that use both the $\Gg$ maps and $\Br$. However, we can combine the proofs of Propositions~\ref{prop:ghost-code} and ~\ref{prop:ghost-code-Br} to show the soundness of projecting out all ghost code in a single stage, and Theorem~\ref{thm:soundness} continues to hold in the general case.}

\subsection{Generating Quantifier-Free Verification Conditions}
\label{sec:qfree-vcs}
We state at several points in this paper that verifying augmented programs with quantifier-free specifications reduces to validity over combinations of quantifier-free theories. However, this is not obvious. Unlike scalar programs, quantifier-free contracts do not guarantee quantifier-free verification conditions (VCs) for heap programs. In particular, commands such as allocation and function calls pose challenges. However, we show that in our case it is indeed possible to obtain quantifier-free VCs. We do this by transforming a given heap program into a scalar program that explicitly models changes to the heap. We model allocation using a ghost set $\Alloc$ corresponding to the allocated objects and update it when a new object is allocated. We reason about arbitrary changes to the heap across a function call by requiring a `modifies' annotation from the user and adding assumptions that the fields of objects outside the modified set of a function call remain the same across the call. %
We express these assumptions using parameterized map updates which are supported by the generalized array theory {\cite{pointwisearrays}}. We detail this reduction in our supplementary material in Appendix~{\ref{app:qfree-vcs}}.

\fi

\section{Illustrative Data Structures and Verification}
\label{sec:case-studies}

Intrinsic definitions and the fix-what-you-break verification methodology are new concepts that require thinking afresh about data structures and annotating methods that operate over them. In this section, we present several classical data structures and methods over them, and illustrate how the verification engineer can write intrinsic definitions (which maps to choose, and what the local conditions ensure) and how they can fix broken sets to prove programs correct. 

\subsection{Insertion into a Sorted List}
\label{sec:cs-slist-insert}

In this section we present the verification of insertion into a sorted list implemented in the FWYB methodology in its entirety. Our running example in Section~\ref{sec:proglogic} illustrates the key technical ideas involved in verifying the program. In this section we present an end-to-end picture that mirrors the verification experience in practice.

\mypara{Data Structure Definition} We first revise the definition of a sorted list (Example~\ref{ex:ids-sorted-list}) with a different set of monadic maps. %
We have the following monadic maps $\Gg$--- $\prev: C \rightarrow C?$, $\len: C \rightarrow \mathbb{N}$, $\keys: C \rightarrow \mathit{Set(Int)}$, $\hslist: C \rightarrow \mathit{Set(C)}$ that model the \emph{previous} pointer (inverse of next), length of the sorted list, the set of keys stored in it, and its heaplet (set of locations that form the sorted list) respectively. We use the length, keys, and heaplet maps to state full functional specifications of methods. The local conditions are:
\begin{equation}
\label{eq:slist-lc-dafny}
\begin{aligned}
\forall x.\, \nxt(x) \neq \nil \Rightarrow\; & (~ \key(x) \leq \key(\nxt(x)) ~\land~ \prev(\nxt(x)) = x\\
&\land~ \len(x) = 1+ \len(\nxt(x)) \,\land\, \keys(x) = \{ \key(x) \} \cup \keys(\nxt(x))\\
&\land~ \hslist(x) = \{ x \} \uplus \hslist(\nxt(x))~) \textrm{\hspace{4em}($\uplus$: disjoint union)}\\
~~~~~\land~ \prev(x) \neq \nil \Rightarrow\; & \nxt(\prev(x)) = x\\ %
~~~~~\land~ \nxt(x) = \nil \Rightarrow\; & (~\len(x) = 1 \,\land\, \keys(x) = \{ \key(x) \} \,\land\, \hslist(x) = \{ x \}~)
\end{aligned}
\end{equation}

The above definition is slightly different from the one given in Example~\ref{ex:ids-sorted-list}. %
The $\len$ map replaces the $\rank$ map, requiring additionally that lengths of adjacent nodes differ by $1$. 

The $\prev$ map is a gadget we find useful in many intrinsic definitions. The constraints on $\prev$ ensure that the $C$-heaps satisfying the definition only contain non-merging lists. To see why this is the case, consider for the sake of contradiction distinct objects $o_1, o_2, o_3$ such that $\nxt(o_1) = \nxt(o_2) = o_3$. Then, we can see from the local conditions that we must simultaneously have $\prev(o_3) = o_1$ and $\prev(o_3) = o_2$, which is impossible. %
Finally, the $\hslist$ and $\keys$ maps represent the heaplet and the set of keys stored in the sorted list (respectively).

The heads of all sorted lists in the $C$-heap is then defined by the following correlation formula:
\begin{center}
$\varphi(y) \equiv \prev(y) = \nil.$
\end{center}

\mypara{Constructing Provably Correct Impact Sets for Mutations} %
We now instantiate the rules developed in Section~\ref{sec:well-behaved} for sorted lists. Recall that well-behaved programs must update the broken set with the impact set of a mutation. %
Table~\ref{tbl:slist-mutations} captures the impact set for each field mutation. Note that the terms denoting the impacted objects belong to $A_f$ only if they do not evaluate to $\nil$. %

\begin{figure*}%
\centering
\begin{minipage}{0.45\textwidth}
\begin{tikzpicture}[node distance={25mm}, thick, main/.style = {draw, circle}]
\node[main,draw=brown,fill=lightgray] (x) {$x$};
\node[main,draw=brown,fill=lightgray] (y) [right of=x] {$y$}; 
\node (ytext) [above right=1mm and -8mm of y,text width = 7mm] {$y =$ $\old(\nxt(x))$};
\node[main] (z) [below=6mm of y] {$z$};
\node[main] (w) [left of=x] {$w$};
\draw[->] (w) -- (x) node[below,midway] {$\nxt$};
\draw[<-] (w) to [bend left] node[above,midway] {$\prev$} (x);
\draw[dashed,->] (x) -- (y); %
\draw[<-] (x) to [bend left] node[above,midway] {$\prev$} (y) ;
\draw[->] (x) -- (z) node[below left,midway] {$\nxt$}; 
\end{tikzpicture}
\caption{\footnotesize Reasoning about the set of objects broken by \texttt{x.next := z}. The dashed arrow represents the old $\nxt$ pointer before the mutation. The grey nodes denote objects where local conditions can be broken by the mutation. We see that only $x$ and $y$ may violate $\nxt$ and $\prev$ being inverses.} 
\label{fig:next-broken}
\end{minipage}
\hspace{15pt}
\begin{minipage}{0.45\textwidth}
\begin{table}[H]
\caption{\footnotesize Table of impact sets corresponding to field mutations for sorted lists (See~\ref{eq:slist-lc-dafny} in Section~\ref{sec:cs-slist-insert}). $\old(t)$ refers to the value of the term $t$ before the mutation. Terms only belong to the sets if not equal to $\nil$.}
{\small
\begin{tabular}{|c|c|}
 \hline
 Mutated Field $f$ & Impacted Objects $A_f$ \\ 
 \hline
 $x.\nxt$ & $\{x, \old(\nxt(x))\}$ \\ 
 $x.\key$ & $\{x, \prev(x)\}$\\
 $x.\prev$ & $\{x, \old(\prev(x))\}$\\
 $x.\hslist$ & $\{x, \prev(x)\}$ \\
 $x.\len$ & $\{x, \prev(x)\}$ \\
 $x.\keys$ & $\{x, \prev(x)\}$ \\
 \hline
\end{tabular}
}
\vspace{1em}
\label{tbl:slist-mutations}
\end{table}
\end{minipage}
\end{figure*}

Let us consider the correctness of  Table~\ref{tbl:slist-mutations}, focusing on the mutation of $\nxt$ as an example. Figure~\ref{fig:next-broken} illustrates the heap after the mutation \texttt{x.next := z}. %
We make the following key observation: the local constraints $\LC(v)$ for an object $v$ refer only to the properties of objects $v$, $\nxt(v)$, and $\prev(v)$ (see~\ref{eq:slist-lc-dafny}), i.e., objects that are at most ``one step'' away on the heap. %
Therefore, the only objects that can be broken by the mutation \texttt{x.next := z} are those that are one step away from $x$ either via an incoming or an outgoing edge via pointers $\nxt$ and $\prev$. This is a general property of intrinsic definitions: \emph{mutations cannot immediately affect objects that are far away on the heap}. %
\footnote{Note that a mutation can necessitate changes to monadic maps for an unbounded number of nodes \emph{eventually}; however, these are not necessary immediately. As we fix monadic maps on a broken object, its neighbors could get broken and need to be fixed, leading to their neighbors breaking, etc. This can lead to a ripple effect that would eventually require an unbounded number of locations to be fixed.}

In our case, we claim that the impact set contains at most $x$ and $\old(\nxt(x))$. 
Here's a proof (see Fig~\ref{fig:next-broken}): %
Consider $z$ such that $z \neq \old(\nxt(x))$ (as there is no real mutation otherwise). If $z$ was not broken before the mutation, then it cannot be the case that $\prev(z)=x$. %
Looking at the local conditions, it is clear that such a $z$ will remain unbroken after the mutation. Now consider a $w$ not broken before the mutation such that $\nxt(w)=x$. Then it follows from the local conditions that %
there can only be one such (unbroken) $w$, and further $w \neq x$. $w$'s fields are not mutated, and by examining $\LC$, it is easy to see that $w$ will not get broken (as $\LC(v)$ does not refer to $\nxt(\nxt(v))$). The argument is the same for $w$ such that $\prev(x)=w$. Finally, consider a $y$ not broken before the mutation such that $\prev(y)=x$. We can then see from the local conditions that $y=\old(\nxt(x))$, which is already in the impact set.

The above argument is subtle, but we can automatically check whether impact sets declared by a verification engineer are correct. The \textsc{Mutation} rule in Figure~\ref{fig:well-behaved} characterizes the impact set $A_{next}$ for mutation of the field $next$ as follows:
\begin{center}
$\vdash\; \{u \neq x \land u \neq \nxt(x) \land \LC(u) \land x \neq \nil\}\; x.\nxt := z\; \{\LC(u)\}$
\end{center}

\noindent
The above says that any location $u$ that is not in the impact set which satisfied the local conditions before the mutation must continue to satisfy them after the mutation. We present the formulation for the general case in our supplementary material in Appendix~\ref{app:wellbehaved}. Finally, note that the validity of the above triple is decidable. In our realization of the FWYB methodology we prove our impact sets correct by encoding the triple in \Boogie (see Section~\ref{sec:evaluation}).

\mypara{Macros that Ensure Well-Behaved Programs} In Section~\ref{sec:well-behaved} we characterized well-behaved programs as a set of syntactic rules (Figure~\ref{fig:well-behaved}). 
We can realize these restrictions using macros:

\begin{enumerate}
    \item \texttt{Mut(x,f,v,Br)} for each $f \in \Ff \cup \Gg$, which represents the sequence of statements~\texttt{x.f := v;\, Br := Br U A$_{\texttt{f}}$(x)}. Here \texttt{A$_{\texttt{f}}$(x)} is the impact set corresponding to the mutation on $f$ on $x$ as given by the table above. 
    This macro is used instead of \texttt{x.f := v}
    and automatically ensures that the impact set is added to the broken set.
    \item \texttt{NewObj(x,Br)}, which represents the statements~\texttt{x := new C();\, Br := Br U \{ x \}}. This macro is used instead of \texttt{x := new C()} and ensures that any newly allocated object is automatically added to the broken set.
    \item \texttt{AssertLCAndRemove(x,Br)}, which represents the statements~\texttt{assert LC(x);\, Br := Br\! \textbackslash\!\, \{ x \}}. 
    This macro is allowed anytime the engineer wants to assert that $x$ satisfies the local condition, and then remove it from the broken set.\footnote{We extend our basic programming language defined in Figure~\ref{fig:prog-lang} with an $\mathsf{assert}$ statement and give it the usual semantics (program reaches an error state if the assertion is not satisfied, but is equivalent to $\mathsf{skip}$ otherwise).}
    \item \texttt{InferLCOutsideBr(x, Br)}, which represents the statements~\texttt{assert (x\! $\neq$\! nil $\land$ x\! $\notin$\! Br);\, assume LC(x)}. This allows the engineer at any time to assert that $x$ is not in the broken set and assume it satisfies the local condition.  
\end{enumerate}

The above macros correspond to the rules \textsc{Mutation}, \textsc{Allocation}, \textsc{Assert LC and Remove}, and \textsc{Infer LC Outside Br} respectively. 
Restricting to the syntactic fragment that  contains the above macros and disallows mutation and allocation otherwise enforces the \emph{programming discipline} that ensures well-behaved programs.

\iflong %

\begin{figure}
\scriptsize
\raggedright
\begin{subfigure}{0.45\textwidth}
\centering
\begin{alltt}
 \annotation{pre:} \ensuremath{\Br=\emptyset}
 \annotation{post:} \ensuremath{\LC(r)\,\land\,\prev(r)=\nil}
\hspace{20pt}\ensuremath{\land \Br=\,\ite(\old(\prev(x))=\nil,\,\emptyset,\,\{\old(\prev(x))\})}
\hspace{20pt}\ensuremath{\land \len(r)=\,\old(\len(x))+1}
\hspace{20pt}\ensuremath{\land \keys(r)=\,\old(\keys(x))\cup\{k\}}
\hspace{20pt}\ensuremath{\land \old(\hslist(x))\subset\hslist(r)}
 \annotation{modifies:} \ensuremath{\hslist(x)}
 \funcname{sorted_list_insert}(x: \type{C}, k: \type{Int}, Br: \type{Set(C)}) 
 \keyword{returns} r: \type{C}, Br: \type{Set(C)}
 \{
   \ghost{InferLCOutsideBr(x, Br);}
   \keyword{if} (x.key \ensuremath{\geq} k) \keyword{then} \{ \codecomment{// k inserted before x}
     NewObj(z, Br);         \codecomment{// \{z\}}
\hspace{18pt}Mut(z, key, k, Br);    \codecomment{// \{z\} since z.prev = nil}
\hspace{18pt}Mut(z, next, x, Br);   \codecomment{// \{z\} since z.next = nil}
\hspace{18pt}\ghost{Mut(z, hslist, \{z\} \ensuremath{\cup} x.hslist, Br);} \codecomment{// \{z\}}
\hspace{18pt}\ghost{Mut(z, length, 1 + x.length, Br);}   \,\codecomment{// \{z\}}
\hspace{18pt}\ghost{Mut(z, keys,  \{k\} \ensuremath{\cup} x.keys, Br);}    \codecomment{// \{z\}}
\hspace{18pt}\ghost{Mut(x, prev, z, Br);}     \codecomment{// \{z, x, old(prev(x))\}}
\hspace{18pt}\ghost{AssertLCAndRemove(z, Br);}  \codecomment{// \{x, old(prev(x))\}}
\hspace{18pt}\ghost{AssertLCAndRemove(x, Br);} \codecomment{// \{old(prev(x))\}}
\hspace{18pt}r := z;
   \}
   \keyword{else} \{ 
   \keyword{if} (x.next = nil) \keyword{then} \{ \codecomment{// one-element list}
\hspace{18pt}NewObj(z, Br);
\hspace{18pt}Mut(z, key, k, Br);
\hspace{18pt}Mut(z, next, nil, Br);
\hspace{18pt}\ghost{Mut(z, hslist, \{z\}, Br);}
\hspace{18pt}\ghost{Mut(z, length, 1, Br);}
\hspace{18pt}\ghost{Mut(z, keys, \{k\}, Br);}
\hspace{18pt}Mut(x, next, z, Br);
\end{alltt} 
\end{subfigure}
\hspace{10pt}
\begin{subfigure}{0.45\textwidth}
\raggedright
\begin{alltt}
\hspace{18pt}\ghost{Mut(z, prev, x, Br);}
\hspace{18pt}\ghost{AssertLCAndRemove(z, Br);}
\hspace{18pt}\ghost{Mut(x, prev, nil, Br);}
\hspace{18pt}\ghost{Mut(x, hslist, \{x\} \ensuremath{\cup} \{z\}, Br);}
\hspace{18pt}\ghost{Mut(x, length, 2, Br);}
\hspace{18pt}\ghost{Mut(x, keys, \{x.key\} \ensuremath{\cup} \{k\}, Br);}
\hspace{18pt}\ghost{AssertLCAndRemove(x, Br);}
\hspace{18pt}r := x;
   \}
   \keyword{else} \{ \codecomment{// recursive case}
\hspace{18pt}y := x.next;
\hspace{18pt}\ghost{InferLCOutsideBr(y, Br);}
\hspace{18pt}tmp, Br := \funcname{sorted_list_insert}(y, k, Br);   \codecomment{// \{x\}}
\hspace{18pt}\ghost{InferLCOutsideBr(y, Br);}
\hspace{18pt}\ghost{if (y.prev = x) then \{
\hspace{18pt}  Mut(y, prev, nil, Br);    \codecomment{// \{y, x\}}
\hspace{18pt}\}}
\hspace{18pt}Mut(x, next, tmp, Br);      \codecomment{// \{y, x\}}
\hspace{18pt}\ghost{AssertLCAndRemove(y, Br);}   \codecomment{// \{x\}}
\hspace{18pt}\ghost{Mut(tmp, prev, x, Br);}      \codecomment{// \{tmp, x\}}
\hspace{18pt}\ghost{AssertLCAndRemove(tmp, Br);} \codecomment{// \{x\}}
\hspace{18pt}\ghost{Mut(x, hslist, \{x\} \ensuremath{\cup} tmp.hslist, Br);}  \codecomment{// \{x, prev(x)\}}
\hspace{18pt}\ghost{Mut(x, length, 1 + tmp.length, Br);}     \,\codecomment{// \{x, prev(x)\}}
\hspace{18pt}\ghost{Mut(x, keys, \{x.key\} \ensuremath{\cup} tmp.keys, Br);}  \codecomment{// \{x, prev(x)\}}
\hspace{18pt}\ghost{Mut(x, prev, nil, Br);}           \codecomment{// \{x, old(prev(x))\}}
\hspace{18pt}\ghost{AssertLCAndRemove(x, Br);}   \codecomment{// \{old(prev(x))\}}
\hspace{18pt}r := x;
   \}\}
 \}
\end{alltt}
\end{subfigure}
\caption{Code for insertion into a sorted list written in the syntactic fragment for well-behaved programs(Section~\ref{sec:cs-slist-insert}). Black lines denote code written by the user, and blue lines denote lines written by the verification engineer. The comments on the right show the state of the broken set $\Br$ after the statement on the corresponding line.}
\label{fig:slist-insert-code}
\end{figure}

\iflong
\mypara{Verifying Sorted List Insertion} 
\fi
We provide the specifications and the code augmented with ghost annotations in Figure~\ref{fig:slist-insert-code}. 

\textit{Specifications.} The precondition states that the broken set is empty at the beginning of the program. The postcondition states that the returned object $r$ satisfies the local conditions and satisfies the correlation formula for a sorted list (i.e., $\prev(r)= \nil$). However, the broken set is only empty if the input object $x$ was the head of a sorted list, and it is $\{\prev(x)\}$ otherwise. %
The other conjuncts express functional specifications for insertion in terms of the length, heaplet, and set of keys. We also add a `modifies' clause which enables program verifiers for heap manipulating programs to utilize frame reasoning across function calls.

\textit{Summary.} The proof works at a high-level as follows: we recurse down the list, reaching the appropriate object $x$ before which the new key must be inserted. This is the first branch in Figure~\ref{fig:slist-insert-code}, and we show the broken set at each point in the comments to the right. We create the new object $z$ with the appropriate key and point $z.\nxt$ to $x$. We then fix the local conditions on $x$ and $z$. However, these fixes break the $\LC$ on $\old(\prev(x))$. We maintain this property up the recursion, at each point fixing $\LC$ on $x$ and breaking it on $\old(\prev(x))$ in the process. This is shown in the last branch in the code. We eventually reach the head of the sorted list, whose $\prev$ in the pre state is $\nil$, and at that point the fixes do not break anything else, i.e., the broken set is empty (as desired). 

The verification engineer adds ghost code to perform these fixes as shown in \textcolor{blue}{blue} in Figure~\ref{fig:slist-insert-code}. We can also see that there are essentially as many lines of ghost code as there are lines of user code; we compare these values across our benchmark suite (see Table~\ref{tbl:results}) and find that this is typical for many methods. However, the verification conditions for the (augmented) program are \emph{decidable} because they can be stated using quantifier-free formulas over decidable combinations of theories including maps, map updates, and sets.

\else

\smallskip
\noindent
We present the full well-behaved code written using the above macros and discuss it in our supplementary material in Appendix~{\ref{app:cs-slist-insert}}.

\fi

\iflong %
\subsection{BST Right-Rotation}
\label{sec:cs-bst-rotate}
We now turn to another data structure and method that illustrates intrinsic definitions for trees, namely verifying a right rotate on a binary search tree. Such an operation is a common tree operation, and rotations are used widely in maintaining balanced search trees, such as AVL and Red-Black Trees, on which several of our benchmarks operate. 

We augment the definition of binary trees discussed in Section \ref{sec:intro} to include the $min: BST \rightarrow Real$ and $max: BST \rightarrow Real$ maps, which capture the minimum and maximum keys stored in the tree rooted at a node, to help enforce binary search tree properties locally. The local condition and the impact sets are as below:

\begin{figure}[H]
    \centering
    \hspace{-0.5em}
    \begin{subfigure}[b]{0.6\linewidth}
        {\scriptsize
        \begin{equation*}
        \begin{aligned}
        \mathit{LC} \equiv \forall x.\, min(x) & \leq key(x) \leq max(x) \\
        \land~ (p(x) \neq \nil \Rightarrow\; & l(p(x)) = x \lor r(p(x)) = x) \\
        \land~ (l(x) = nil \Rightarrow\; & min(x) = key(x)) \\
        \land~ (l(x) \neq \nil \Rightarrow\; & p(l(x)) = x \land rank(l(x)) < rank(x) \\
        &\!\!\!\!\!\land~  max(l(x)) < key(x)
        ~\land~ min(x) = min(l(x)))\\
        \land~ (r(x) = nil \Rightarrow\; & max(x) = key(x)) \\
        \land~ (r(x) \neq \nil \Rightarrow\; & p(r(x)) = x \land rank(r(x)) < rank(x) \\
        &\!\!\!\!\!\land~  min(r(x)) > key(x)
        ~\land~  max(x) = max(r(x)))\\
        \end{aligned}
        \end{equation*}
        }
    \end{subfigure}%
    \hspace{0.1em}
    \begin{subfigure}[b]{0.35\linewidth}
        \begin{center}
        {\scriptsize
        \begin{tabular}{ |c|c| } 
         \hline
         Mutated Field $f$ & Impacted Objects $A_f$ \\ 
         \hline
         $l$ & $\{x, \old(l(x))\}$ \\ 
         $r$ & $\{x, \old(r(x))\}$\\
         $p$ & $\{x, \old(p(x))\}$\\
         $\textit{key}$ & $\{x\}$ \\
         $\textit{min}$ & $\{x, p(x)\}$ \\
         $\textit{max}$ & $\{x, p(x)\}$ \\
         $\textit{rank}$ & $\{x, p(x)\}$ \\
         \hline
        \end{tabular}
        }
        \end{center}
    \end{subfigure}%
\end{figure}

\iflong
We present the gist of how the data structure is repaired here and leave the fully annotated program to the Appendix~\ref{app:cs-bst-rotate}. 
\else
We first describe the gist of how the data structure is repaired and provide the fully annotated program below. 
\fi
Recall that in a BST right rotation, that there are two nodes $x$ and $y$ such that $y$ is $x$'s left child. After the rotation is performed, $y$ becomes the new root of the subtree, while $x$ becomes $y$'s right child. Several routine updates of the  monadic map $p$ (parent) will have to be made. The most interesting update is that of the $rank: BST \rightarrow Real$ map. Since $y$ is now the root of the affected subtree, its rank must be greater than all its children. One way of doing this is to increase $y$'s rank to something greater than $x$'s rank. This works if $y$ has no parent, but not in general. To solve this issue, we use the density of the Reals to set the rank of $y$ to $(rank(x) + rank(p(y)))/2$. 
Note that there are a fixed number of ghost map updates, as the various monadic maps for distant ancestors and descendents of $x,y$ do not change (the min/max of subtrees of such nodes do not change).

\fi

\subsection{Reversing a Sorted List}
\label{sec:cs-slist-reverse}
We return to lists for another case study: reversing a sorted list. The purpose of this example is to demonstrate how the fix-what-you-break philosophy works with iteration/loops. We augment the definition of sorted linked lists from Case Study~\ref{sec:cs-slist-insert} to make sortedness optional and determined by predicates that capture sortedness in non-descending order, with $\sorted: C \rightarrow Bool$, and sortedness
with non-ascending order, with $\revsorted: C \rightarrow Bool$. The relevant additions to the local condition and the impact sets for these monadic maps can be seen below:
\vspace*{-0.3cm}
\begin{figure}[H]
    \centering\footnotesize
    \hspace{-0.15\linewidth}
    \begin{subfigure}{0.5\linewidth}
        {\footnotesize
        \begin{equation*}
        \begin{aligned}
        (&\nxt(x) \neq \nil \Rightarrow\\
        &\sorted(x) \Rightarrow (\key(x) \leq \key(\nxt(x))
        \land \sorted(x) = \sorted(\nxt(x)))\\
        &\land~ \revsorted(x) \Rightarrow 
        (\key(x) \geq \key(\nxt(x))\\
        & \qquad\qquad\qquad\qquad\land~ \revsorted(x) = \revsorted(\nxt(x))))\\
        \end{aligned}
        \end{equation*}
        }
    \end{subfigure}%
    \hspace{0.12\linewidth}
    \begin{subfigure}{0.2\linewidth}
        \begin{center}
        {\scriptsize
        \begin{tabular}{ |c|c| } 
         \hline
         Mutated Field $f$ & Impacted Objects $A_f$ \\ 
         \hline
         $\sorted$& $\{x, \prev(x)\}$ \\
         $\revsorted$& $\{x, \prev(x)\}$ \\
         \hline
        \end{tabular}
        }
        \end{center}
    \end{subfigure}%
\end{figure}
\vspace*{-0.2cm}

We present the full local condition and code in our supplementary material in Appendix \ref{app:cs-slist-reverse}. However, the gist of the method is that we are popping $C$ nodes off of the front of a temporary list $cur$, and pushing them to the front of a new reversed list $ret$ repeatedly using a loop. A technique we use to verify loops using FWYB is to maintain that the broken set contains no nodes or only a finite number of nodes for which we specify how they are broken. In the case of this method, $Br$ remains empty, as the loop maintains $cur$ and $ret$ as two valid lists, not modifying any other nodes. When popping $x$ from $cur$ and adding it to $ret$, in addition to repairing the new $cur$ by setting its parent pointer to $\nil$, we also need to update fields such as $\len$ and $\keys$ on $x$, so it satisfies the relevant local conditions as the new head of the $ret$ list.

\iflong %
\subsection{Merging Sorted Lists}
\label{sec:cs-slist-merge}
We demonstrate the ability of intrinsic definitions to handle multiple data structures at once, using
the example of in-place  merging of two sorted lists. %
The method merges the two lists by reusing the two lists' elements, which is a natural pattern for imperative code. Once again, we extend the definition of sorted lists from Case Study \ref{sec:cs-slist-insert}. We add the predicates $list1: C \rightarrow Bool$, $list2: C \rightarrow Bool$, and $list3: C \rightarrow Bool$, to indicate disjoint classes of lists. The relevant local condition and impact sets are:
\begin{figure}[H]
    \centering
    \hspace{-0.15\linewidth}
    \begin{subfigure}{0.5\linewidth}
        {\scriptsize
        \begin{equation*}
        \begin{aligned}
        &(list1(x) \lor list2(x) \lor list3(x)) \\
        \land~ &\neg(list1(x) \land list2(x)) \land \neg(list2(x) \land list3(x)) \\
        \land~ &\neg(list1(x) \land list3(x)) \\
        \land~ &(list1(x) \Rightarrow (\nxt(x)\neq\nil \Rightarrow list1(\nxt(x)))) \\
        \land~ &(list2(x) \Rightarrow (\nxt(x)\neq\nil \Rightarrow list2(\nxt(x)))) \\
        \land~ &(list3(x) \Rightarrow (\nxt(x)\neq\nil \Rightarrow list3(\nxt(x)))) \\
        \end{aligned}
        \end{equation*}
        }
    \end{subfigure}%
    \hspace{0.1\linewidth}
    \begin{subfigure}{0.2\linewidth}
        \begin{center}
        {\scriptsize
        \begin{tabular}{ |c|c| } 
         \hline
         Mutated Field $f$ & Impacted Objects $A_f$ \\ 
         \hline
         $list1$& $\{x, \prev(x)\}$ \\
         $list2$& $\{x, \prev(x)\}$ \\
         $list3$& $\{x, \prev(x)\}$ \\
         \hline
        \end{tabular}
        }
        \end{center}
    \end{subfigure}%
\end{figure}

Disjointness of the three lists is ensured by insisting that every object has at most one of the three list predicates hold.

We give a gist of the proof of the merge method. 
The recursive program compares the keys at the heads of the first and second sorted lists, and adds the appropriate node to the front of the third list. It turns out that we can easily update the ghost maps for this node (making it belong to the third list, and updating its parent pointer and key set) as well as updating the parent pointer of the head of the list where the node is removed from. When one of the lists is empty, we append the third list to the non-empty list using a single pointer mutation and then, using a ghost loop, we update the nodes in the appended list to make $list3$ true (this needs a  loop invariant involving the broken set).

\fi

\subsection{Circular Lists}
\label{sec:cs-cl-insertback}
Our next example is circular lists. This example illustrates a neat trick in FWYB that where we assert that we can reach a special node known as a \emph{scaffolding} node, and that in addition to asserting properties on the node that is given to the method, one can also assert properties on this scaffolding node.
In order to make verification of properties on this scaffolding node easier, the scaffolding node remains unchanged in the data structure, and is never deleted. We start with a data structure containing a pointer $next: C \rightarrow C$ and a monadic map $prev: C \rightarrow C$. We build on this data structure to define circular lists by adding a monadic map $last: C \rightarrow C$ where $last(u)$ for any location $u$  points to the last item in the list, which is the scaffolding node in this case. The scaffolding node $x$ must in turn point to another node whose $last$ map points to $x$ itself: this ensures cyclicity. We also define monadic maps $length: C \rightarrow Nat$ and $rev\_length: C \rightarrow Nat$ to denote the distance to the $last$ node by following $prev$ or $next$ pointers. The partial local conditions for $x$ are as below:
        {\small
        \begin{equation*}
        \begin{aligned}
        &(x=last(x) \Rightarrow last(\nxt(x))=x \land length(x)=0 \land rev\_length(x)=0) \\
        \land~ &(x\neq last(x) \Rightarrow last(\nxt(x))=last(x) \land length(x)=length(\nxt(x))+1 \\
        &\qquad\qquad\quad \land rev\_length(x)=rev\_length(\prev(x))+1) \\
        \end{aligned}
        \end{equation*}
        }

Here is the gist of inserting a node at the back of a circular list. We are given a node $x$ such that $\nxt(x) = last(x)$ (at the end of a cycle). We insert a newly allocated node after $x$, making local repairs there. Then, in a ghost loop similar to the one in Case Study~\ref{sec:cs-slist-reverse}, we make appropriate updates to the $\len$ and $\keys$ maps, which are not fully described here, following the $prev$ map until we reach $last(x)$. 
Like in the previous case study, we present the full local condition and code in our supplementary material in Appendix {\ref{app:cs-cl-insertback}}.

\subsection{Overlaid Data Structure of List and BST}
\label{sec:cs-overlaid}

One of the settings where intrinsic definitions shine is in defining and manipulating an \emph{overlaid data structure} that overlays a linked list and a binary search tree. The list and tree share the same locations, and the $\textit{next}$ pointer threads them into a linked list while the $\textit{left, right}$ pointers on them defines a BST. Such structures are often used in systems code (such as the Linux kernel) to save space~\cite{overlaidpaper}. 
For example, I/O schedulers use an overlaid structure as above, where the list/queue stores requests in FIFO order while the bst enables faster searching according requests with respect to a key. 
While there has been work in verification of memory safety of such structures~\cite{overlaidpaper}, we aim here to check preservation of such data structures.

Intrinsic definition over such an overlaid data structure is pleasantly \emph{compositional}. We simply take intrinsic definitions for lists and trees, and take the union of the monadic maps and the conjunction of their local conditions.
The only thing that is left is then to ensure that they contain the same set of locations. 
We introduce a monadic map $\textit{bst}\_\textit{root}$ that maps every node to its root in the bst, and introduce a monadic map $\textit{list}\_\textit{head}$ that maps every node to the head of the list it belong to (using appropriate local conditions). 
We then demand that all locations in a list have the same $\textit{bst}\_\textit{root}$ and all locations in a tree have the same $\textit{list}\_\textit{head}$, using local conditions. We also define monadic maps that define the bst-heaplet for tree nodes and list-heaplet for list nodes (the locations that belong to the tree under the node or the list from that node, respectively) using local conditions.
We define a correlation predicate $\textit{Valid}$ that relates the head $h$ of the list and root $r$ of the tree
by demanding that the bst-root of $h$ is $r$ and the list-head of $r$ is $h$, and furthermore, the list-heaplet of $h$ and tree-heaplet of $r$ are equal. This predicate can be seen here:
\begin{center}
$\textit{Valid} \equiv\; \textit{bst}\_\textit{root}(h) = r \wedge
    \textit{list}\_\textit{root}(r) = h 
   \land~ \textit{list}\_\textit{heaplet}(h) = \textit{bst}\_\textit{heaplet}(r)$
\end{center}

We prove certain methods manipulating this overlaid structure correct (such as deleting the first element of the list and removing it both from the list as well as the BST). These ghost annotations are mostly compositional--- with exceptions for fields whose mutation impacts the local condition of multiple components, we fix monadic maps for the BST component in the same way we fix them for stand-alone BSTs and fix monadic maps for the list component in the same way we fix them for stand-alone lists. In fact, we maintain two broken sets, one for BST and one for list, as updating a pointer for BST often does not break the local property for lists, and vice versa.

\medskip
\mypara{Limitations} In modeling the data structures above, we crucially used the fact that for any location, there is at most one location (or a bounded number of locations) that has a field pointing to this location. We used this fact to define an inverse pointer (\emph{prev} or \emph{parent}/\emph{p}), which allows us to capture the impact set when a location's fields are mutated. Consequently, we do not know how to model structures where locations can have unbounded indegree. We could model these inverse pointers using a sequence/array of pointers, but verification may get more challenging. Data structures with unbounded outdegree can however be modeled using just a linked-list of pointers and hence seen as a structure with bounded outdegree.

\section{Implementation and Evaluation}

\subsection{Implementation Strategy of IDS and FWYB in \bfBoogie}
\label{sec:implementation}

We implement the technique of intrinsically defined data structures and FWYB verification in the program verifier \Boogie~\cite{boogie}.
{\sc Boogie} is a low-level imperative programming language which supports systematic generation of verification conditions that are checked using SMT solvers.

We choose {\sc Boogie} as it is a low-level verification condition generator. We expect that scalar programs with quantifier-free specifications, annotations, and invariants, and given our careful modeling of the heap and its modification across function calls (Section~{\ref{sec:qfree-vcs}}), reduces to quantifier-free verification conditions that fall into decidable logics. We further cross-check that our encodings indeed generate decidable queries by checking the generated SMT files. Furthermore, a plethora of higher-level languages compile to {\sc Boogie} (e.g., VCC and Havoc for C~\cite{vcc,havoc}, {\sc Dafny}~\cite{dafny} with compilation to .NET, Civl for concurrent programs~\cite{civl}, Move for smart contracts~\cite{move}, etc.). Implementing a technique in {\sc Boogie} hence shows a pathway for implementing IDS and FWYB for higher-level languages as well.

\mypara{Modeling Fix-What-You-Break Verification in Boogie}
We model heaps in {\sc Boogie} by having a sort $\mathit{Loc}$ of locations and modeling pointers as maps from $\mathit{Loc}$ to sorts. We implement monadic maps also as maps from locations to field values. We implement our benchmarks using the \emph{macros} for well-behaved programming defined in Section~\ref{sec:cs-slist-insert}. %
We implement allocation with an $\Alloc$ set and heap change across function calls using parameterized map updates as described in
\iflong
Section~\ref{sec:qfree-vcs}. 
\else
Section~\ref{sec:qfree-vcs} and our supplementary material in Appendix~\ref{app:qfree-vcs}. 
\fi

We ensure that the VCs generated by {\sc Boogie} fall into decidable fragments, and there are several components that ensure this.
First, note that all specifications (contracts and invariants) are quantifier-free. Second, pure functions (used to implement local conditions) are typically encoded using quantification, %
but we ensure {\sc Boogie} inlines them to avoid quantification. Third, heap updates that are the effect of procedures and set operations for set-valued monadic maps %
are modeled using parameterized map updates~{\cite{pointwisearrays}}, which {\sc Boogie} supports natively. Finally, we cross-check that the generated SMT query is quantifier-free and decidable by checking the absence of statements that introduce quantified reasoning, including {\texttt{exists}}, {\texttt{forall}}, and {\texttt{lambda}}.

\subsection{Benchmarks}
We evaluate our technique on a variety of data structures and methods that manipulate them. Our benchmark suite consists of 
data structure manipulation methods for a variety of different list and tree data structures, including sorted lists, circular lists, binary search trees, and balanced binary search trees such as Red-Black trees and AVL trees. Methods include 
core functionality such as search, insertion and deletion. The suite includes an \emph{overlaid} data structure
that overlays a binary search tree and a linked list, implementing methods needed by a simplified version of the Linux deadline IO scheduler~\cite{overlaidpaper}. %
The contracts for these functions are complete functional specifications that not only ask for maintenance of the data structure, but correctness properties involving the returned values, the keys stored in the container, and the heaplet of the data structure.

\subsection{Evaluation}
\begin{table}[t]
\caption{Implementation and verification of {\sc Boogie} programs on the benchmarks. The columns give data structure, size of local conditions for capturing the datatructure as number of conjuncts, method, lines of executable code in the method, lines of specification (pre/post), lines of ghost code annotations (invariants/monadic map updates), and verification time in seconds.~\label{tbl:results}}
\vspace{-0.5em}
\begin{center}
{\footnotesize
\begin{tabular}{ rr | rrr | rrr }
 \multirow{2}{*}{Data Structure} & $LC$ & \multirow{2}{*}{Method} & LOC+Spec & Verif. & \multirow{2}{*}{Method} & LOC+Spec & Verif.\\
 & Size &  & +Ann & Time(s) & & +Ann & Time(s)\\
\hline
\multirow{4}{*}{Singly-Linked List} & \multirow{4}{*}{8} & Append & 4+11+10 & 2.0 & Insert-Back & 6+13+12 & 2.0 \\
& & Copy-All & 7+8+9 & 2.0 & Insert-Front & 3+13+7 & 2.0\\
& & Delete-All & 10+9+16 & 2.0 & Insert & 9+13+23 & 2.0\\
& & Find & 4+4+2 & 1.9 & Reverse & 6+8+18 & 2.1\\
\hline
\multirow{3}{*}{Sorted List} & \multirow{3}{*}{14} & Delete-All & 10+9+16 & 2.1 & Merge & 11+9+20 & 2.1 \\
& & Find & 4+4+2 & 1.9 & Reverse & 5+14+22 & 2.1 \\
& & Insert & 9+16+27 & 2.1 & &  & \\
\hline
\multirow{2}{*}{\shortstack[r]{Sorted List\\(w. $min, max$ maps)}} & \multirow{2}{*}{20} & \multirow{2}{*}{Concatenate} & \multirow{2}{*}{6+10+13} & \multirow{2}{*}{2.2} & \multirow{2}{*}{Find-Last} & \multirow{2}{*}{5+10+9} & \multirow{2}{*}{2.0} \\
& & & & & & & \\
\hline
\multirow{2}{*}{Circular List} & \multirow{2}{*}{27} & Insert-Front & 4+12+41 & 2.3 & Delete-Front & 3+12+39 & 2.4 \\
& & Insert-Back & 5+14+45 & 2.4 & Delete-Back & 3+13+55 & 2.4\\
\hline
\multirow{2}{*}{Binary Search Tree} & \multirow{2}{*}{35} & Find & 4+3+5 & 2.0 & Delete & 10+13+30 & 2.8 \\
& & Insert & 9+12+37 & 2.7 & Remove-Root & 17+15+47 & 3.8\\
\hline
\multirow{2}{*}{Treap} & \multirow{2}{*}{37} & Find & 4+3+5 & 2.0 & Delete & 10+13+30 & 3.1 \\
& & Insert & 19+12+74 & 10.2 & Remove-Root & 24+15+74 & 5.4\\
\hline
\multirow{2}{*}{AVL Tree} & \multirow{2}{*}{45} & Insert & 12+12+36 & 5.1 & Find-Min & 5+5+8 & 2.1 \\
& & Delete & 43+13+62 & 5.3 & Balance & 40+17+95 & 5.0\\
\hline
\multirow{3}{*}{Red-Black Tree} & \multirow{3}{*}{48} & Insert & 76+12+203 & 74.1 & Del-L-Fixup & 33+20+93 & 8.9 \\
& & Delete & 56+13+76 & 5.8 & Del-R-Fixup & 33+20+93 & 7.4 \\
& & Find-Min & 5+5+8 & 2.1 & &  & \\
\hline
\multirow{1}{*}{BST+Scaffolding} & \multirow{1}{*}{59} & Delete-Inside & 1+24+51 & 4.8 & Remove-Root & 44+31+61 & 10.2 \\
\hline
\multirow{2}{*}{\shortstack[r]{Scheduler Queue\\(overlaid SLL+BST)}} & \multirow{2}{*}{72} & Move-Request & 4+10+8 & 2.9 & BST-Delete-Inside & 1+29+55 & 4.9 \\
& & List-Remove-First & 5+13+10 & 2.7 & BST-Remove-Root & 44+36+65 & 15.0\\
\hline
\end{tabular}

\iflong
\begin{tabular}{ rrrrrrrr } 
\multirow{2}{*}{Data Structure} & Size of & \multirow{2}{*}{Method} & Lines of & Lines of & Lines of & Verification\\ 
& $LC$s & & Code & Specification & Annotation &  Time (s) \\ 
\hline
\multirow{8}{*}{Singly-Linked List} & \multirow{8}{*}{9} & Append & 4 & 11 & 7 &  1.4\\
 & & Copy-All & 5 & 8 & 7 & 1.3\\
 & & Delete-All & 8 & 9 & 13 & 1.4\\
 & & Find & 3 & 4 & 1 & 1.2\\
 & & Insert-Back & 5 & 13 & 10 & 1.3\\
 & & Insert-Front & 3 & 13 & 5 & 1.3\\
 & & Insert & 8 & 13 & 19 & 1.4 \\
 & & Reverse & 5 & 8 & 16 & 1.4 \\
\hline
\multirow{5}{*}{Sorted List} & \multirow{5}{*}{14} &  Delete-All & 8 & 9 & 14 & 1.5\\
 & & Find & 3 & 4 & 1 & 1.3\\
 & & Insert & 8 & 16 & 23 & 1.5\\
 & & Merge & 10 & 9 & 16 & 1.6\\
 & & Reverse & 5 & 14 & 8 & 1.6\\
\hline
\multirow{2}{*}{\shortstack[r]{Sorted List\\(w. $min, max$ maps)}} & \multirow{2}{*}{20} &  Concatenate & 6 & 10 & 10 & 1.7\\
 & & Find-Last & 3 & 10 & 8 & 1.4\\
\hline
\multirow{4}{*}{Circular List} & \multirow{4}{*}{27} & Insert-Front & 3 & 12 & 28 & 1.8 \\
 & & Insert-Back & 4 & 14 & 32 & 1.8 \\
 & & Delete-Front & 2 & 12 & 29 & 1.9 \\
 & & Delete-Back & 2 & 13 & 42 & 1.9 \\
\hline
\multirow{4}{*}{Binary Search Tree} & \multirow{4}{*}{35} & Find & 3 & 3 & 4 & 1.4 \\
 & & Insert & 8 & 12 & 31 & 2.0 \\
 & & Delete & 9 & 13 & 24 & 2.2\\
 & & Remove-Root & 15 & 15 & 39 & 2.9 \\
\hline
\multirow{4}{*}{Treap} & \multirow{4}{*}{37} & Find & 3 & 3 & 4 & 1.4 \\
 & & Insert & 14 & 12 & 61 & 6.4\\
 & & Delete & 9 & 13 & 24 & 2.3\\
 & & Remove-Root & 20 & 15 & 64 & 3.7 \\
\hline
\multirow{4}{*}{AVL Tree} & \multirow{4}{*}{45} & Insert & 11 & 12 & 29 & 3.0 \\
 & & Delete & 39 & 13 & 54 & 3.6 \\
 & & Find-Min & 2 & 5 & 7 & 1.5 \\
 & & Balance & 34 & 17 & 84 & 3.5 \\
\hline
\multirow{5}{*}{Red-Black Tree} & \multirow{5}{*}{48} & Insert & 46 & 12 & 155 & 27.6 \\
 & & Delete & 52 & 13 & 63 & 4.0 \\
 & & Find-Min & 2 & 5 & 7 & 1.6 \\
 & & Del-L-Fixup & 26 & 20 & 75 & 4.9 \\
 & & Del-R-Fixup & 26 & 20 & 75 & 4.8 \\
\hline
\multirow{2}{*}{BST+Scaffolding} & \multirow{2}{*}{59} & Delete-Inside & 1 & 24 & 35 & 3.3 \\
 & & Remove-Root & 32 & 31 & 53 & 6.6 \\
\hline
\multirow{4}{*}{\shortstack{Scheduler Queue\\(overlaid SLL+BST)}} & \multirow{4}{*}{72} & Move-Request & 4 & 10 & 5 & 2.7 \\
 & & List-Remove-First & 5 & 13 & 7 & 2.4 \\
 & & BST-Delete-Inside & 1 & 29 & 39 & 4.3 \\
 & & BST-Remove-Root & 32 & 36 & 55 & 9.9 \\
\hline
\end{tabular}
\fi
}
\end{center}
\vspace{-1em}
\end{table}

We first evaluate the following two research questions:

\noindent
{\bf RQ1: Can the data structures be expressed using IDS, and can the FWYB methodology for methods on these structures be expressed in \bfBoogie?}

\noindent
{\bf RQ2: Is \bfBoogie with decidable verification condition generation dispatched to SMT solvers effective in verifying these methods?}

As we have articulated earlier, intrinsic definitions and monadic map updates require a new way of thinking about programs and repairs. We implement the specifications using monadic maps and local conditions, and the benchmarks using the well-behavedness macros and ghost updates. We were able to express all data structures and FWYB annotations for the methods on these structures for our benchmarks in \Boogie (RQ1). Importantly, we were able to write quantifier-free modular contracts for the auxiliary methods and loop invariants using the monadic maps and strengthening the contracts using quantifier-free assertions on broken sets (which may not be empty for auxiliary methods). We do not prove termination for these methods except for ghost loops and ghost recursive procedures (termination for latter is required for soundness). We provide the benchmarks with annotations in a public repository \cite{idsartifact}. %

Our annotation measures and verification results are detailed in the table in Table~\ref{tbl:results}, for $42$ methods across $10$ data structure definitions. These measurements were taken from a machine with an Intel\texttrademark \ Core i5-4460 processor at 3.20 GHz. We found the verification performance excellent overall (RQ2): all the methods verify in under 2 minutes, and all but four verify in under 10 seconds. We used the option that sets the maximum number of VC splits to $8$ in {\sc Boogie}. The times reported for each method are the sum of times taken for the following steps: verifying that the impact sets are correct (<3s for all data structures), generating verification conditions with {\sc Boogie}, injecting parametric update implementations, and solving the SMT queries.

Notice that the lines of ghost code written is nontrivial, but these are typically simple, involving programmatically repairing monadic maps and manipulating broken sets. 
In fact, a large fraction ($\sim 60\%$) of ghost updates in our benchmarks were \emph{definitional updates} that simply update a field according to its definition in the local condition. An example is updating $x.\textit{length}$ to $x.\textit{next}.\textit{length}+1$ for lists. We believe that the annotation burden can be significantly lowered in future work by automating such updates. 
More importantly, note that none of the programs required further annotations like instantiations, triggers, inductive lemmas, etc. in order to prove them correct.

\medskip
\noindent
{\bf RQ3: What is the performance impact of generating decidable verification conditions?}

In order to study this, we implemented the entire benchmark suite described in Table {\ref{tbl:results}} in {\sc Dafny}, a higher-level programming language that uses {\sc Boogie} to perform its verification. We implemented the data structures and the FWYB methodology identically in {\sc Dafny} to the {\sc Boogie} version.

\begin{wrapfigure}{r}{0.45\textwidth}
\begin{center}
\vspace{-1.5em}
\includegraphics[width=0.45\textwidth]{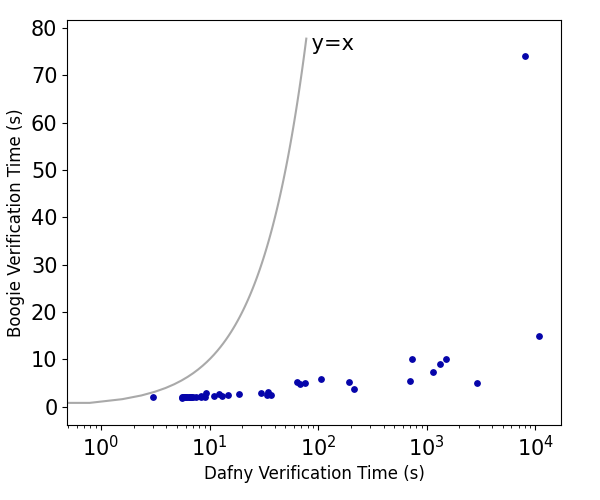}
\vspace{-2em}
\end{center}
\end{wrapfigure}

Even though our annotations are all quantifier-free, {\sc Dafny} generates {\sc Boogie} code where several aspects of the language, in particular allocation and heap change across function calls, are modeled using \emph{quantifiers}, resulting in quantified queries to SMT solvers. The scatter plot on the right shows the performance of {\sc Boogie} and {\sc Dafny} on the benchmarks. The plot clearly strongly suggests that even though {\sc Dafny} is able to prove the FWYB-annotated programs correct, using decidable verification conditions results in much better performance. We hence believe that implementing program verifiers (such as {\sc Dafny}) that exploit the fact that FWYB annotations can be compiled to annotations in {\sc Boogie} that result in decidable VCs is a promising future direction to achieve faster high-level IDS+FWYB frameworks.

\label{sec:evaluation}

\section{Related Work}
\label{sec:relwork}
There have been mainly two paradigms to automated verification of programs annotated with rich contracts written in logic. The first is to restrict the specification logic so that verification conditions fall into a decidable logic. The second allows validity of verification conditions to fall into an undecidable or even an incomplete logic (where validity is not even recursively enumerable), but support effective strategies nevertheless, using heuristics, lemma synthesis, and further annotations from the programmer~\cite{seplogicprimer,reynolds02,vcdryad,nguyen08,verifast,PiskacWiesZufferey2014,ChinDavidNguyen2007,ChuAutomaticInductionProofs,fossil,sls,BerdineCalcagnoOHearn2005,RegionLogic,RegionLogic1,RegionLogic2,boogie-meets-regions,Smallfoot,Distefano2006,CalcagnoBiAbductive2011}. In this paper, we have proposed a new paradigm of predictable verification that calls for programmers to write a reasonable amount of extra annotations under which validity of verification conditions becomes decidable. To the best of our knowledge, we do not know of any other work of this style (where validity of verification conditions is undecidable but an upfront set of annotations renders it decidable). 

\mypara{Decidable Verification}
There is a rich body of research on decidable logics for heap verification: 
first-order logics with reachability~\cite{levami}, the logic {\sc Lisbq} in the {\sc Havoc} tool~\cite{backtothefuture}, several decidable fragments of separation logic known~\cite{decseplogicberdine,PiskacWiesZufferey2013} as well as fragments that admit a decidable entailment problem~\cite{echenim}. Decidable logics based on interpreting bounded treewidth data structures on trees have also been studied, for separation logics as well as other logics~\cite{strand1,strand2,bddtwseplogic}. In general, these logics are heavily restricted--- the magic wand in separation logic quickly leads to undecidability~\cite{almightywand}, the general entailment problem for separation logic with inductive predicates is undecidable~\cite{undecseplogic}, and validity of first-order logic with recursive definitions is undecidable and not even recursively enumerable and does not admit complete proof procedures.

\mypara{Validity Checking of Undecidable and Incomplete Logics}
 Heap verification using undecidable and incomplete logics has been extensively in the literature. The work on natural proofs~\cite{vcdryad,loding18} for imperative programs and work on Liquid Types \cite{liquidtypes08} for functional programs propose such approaches that utilize SMT solvers, but require extra user help in the form of inductive lemmas to verify programs. Users need to understand the underlying heuristic SMT encoding mechanisms and their shortcomings, as well as theoretical shortcomings (the difference between fixed point and least fixed point semantics of recursive definitions) in order to provide these lemmas~(see \cite{loding18,fossil,fluid}). In contrast, the user help we seek in this work is upfront ghost code that updates monadic maps to satisfy local conditions independently of the heuristics the solvers use. Furthermore, for programs with such annotations, we assure decidable validation of the associated verification conditions.

\mypara{Monadic Maps} 
Monadic maps have been exploited in earlier work in other forms for simplifying verification of properties of global structures. 
In shape analysis~\cite{shapeanalysis}, 
monadic predicates are often used to express inductively defined properties of single locations on the heap. %
In separation logic, the \emph{iterated separating conjunction operator}, introduced already by Reynolds in 2002~\cite{ReynoldsSepLogic02},  expresses local properties of each location, and is akin to monadic maps. Iterated separation conjunction has been used in verification, for both arrays as well as for  data structures, in various forms~\cite{Muller16ISC,jstar}.
The work on verification using flows~\cite{gowiththeflow,localreasoningforglobalproperties,flows-pldi20,flows-oopsla21,flows-oopsla22,flows-tacas23} introduces predicates based on flows, and utilizes such predicates in iterated separation formulas to express global properties of data structures and to verify algorithms such as the concurrent Harris list. In these works, local properties of locations and proof systems based on them are explored, but we do not know of any work exploiting monadic maps for decidable reasoning, which is crucial for predictable verification.
The work by Gyori et al~\cite{Gyori17} exploits a class of monadic maps called \emph{linear measures} that satisfy certain algebraic properties in order to incrementally maintain and check properties of linked lists at runtime.

\mypara{Ghost Code} The methodology of writing ghost code is a common paradigm in deductive program verification~\cite{Jonesghostcode,lucasghostcode,spiritofghostcode,ReynoldsCraftOfProgramming} and supported by verification tools such as {\sc Boogie} and {\sc Dafny}~\cite{boogie,dafny}. 
Ghost code involves code that manipulates auxiliary variables to perform a parallel computation with the original code without affecting it.
Our use of ghost code establishes the required monadic maps that satisfy local conditions by allowing the programmer to construct the maps and verify the local conditions using a disciplined programming methodology. Furthermore, we assure that the original code with the ghost code results in decidable verification problems, which is a salient feature not found typically in other contexts where ghost code is used.

\section{Conclusions}
\label{sec:conclusions}
We introduced intrinsic definitions
that eschew recursion/induction and instead define data structures using monadic maps and local conditions. Proving that a program maintains a valid data structure hence requires only maintaining monadic maps and verifying the local conditions %
on locations that get broken. 
Furthermore, verifying that engineer-provided ghost code annotations are indeed correct falls into decidable theories, leading to a predictable verification framework.

\mypara{Future Work} 
First, it would be useful to develop verification engines for higher-level languages (like Java~\cite{verifast}, Rust~\cite{verrus}, and Dafny~\cite{dafny}) that 
that have native support for intrinsic definitions and produce verification conditions in decidable theories that SMT solvers can handle (see RQ3 in Section~\ref{sec:evaluation}). Second, it would be interesting to see how intrinsic definitions with fix-what-you-break proof methodology can coexist and exchange information with traditional recursive definitions with induction-based proof methodology. %
Third, as mentioned in Section~\ref{sec:evaluation}, many updates of monadic maps are straightforward using definitions, and tools that automate this can reduce annotation burden significantly. 
Fourth, we are particularly intrigued with the ease with which intrinsic data structures capture more complex data structures such as overlaid data structures. Exploring intrinsic definitions for verifying concurrent and distributed programs that maintain data structures is particularly interesting. Fifth, intrinsic definitions opens up an entirely new approach to defining properties of structures that simplify reasoning. We believe that
exploiting intrinsic definitions in other verification contexts, like mathematical structures used in specifications (e.g., message queues in distributed programs), parameterized concurrent programs (configurations modeled as unbounded sequences of states), and programs that manipulate big data concurrently (like Apache Spark) are exciting future directions. Finally, it would be interesting to adapt IDS for functional programs. Since functional data structures are not mutable, ghost fields will always meet local conditions. However, we may need to \emph{(re-)establish} rather than \emph{repair} local conditions, which may require ghost code, e.g., establishing that the ghost map $\mathit{sorted}$ on a functional list $x$ is true.

\section*{Artifact Availability Statement}
We have prepared a publicly available artifact \cite{idsartifact} containing our benchmark suite and a Docker image for reproducing our evaluation in Section \ref{sec:evaluation}.

\section*{Acknowledgements}
We thank the anonymous reviewers for their valuable feedback. In particular, we thank one reviewer for pointing out that {\sc Boogie} supports parameterized map updates natively; using this feature simplifies the present version of our paper and artifact. This work is supported in part by a research grant from Amazon.

\bibliographystyle{ACM-Reference-Format}
\bibliography{refs}
\newpage
\appendix
\section{Details for Section~\ref{sec:proglogic}}
\label{app:proglogic}

\subsection{Operational Semantics}
\label{sec:opsem}

\begin{figure}[ht!]
\begin{align*}
\bot & \xrightarrow{*} \bot\\
(s, O, I) & \xrightarrow{\mathsf{skip}} (s, O, I)\\
(s, O, I) & \xrightarrow{x\, :=\, \nil} (s[x \mapsto \nil], O, I)\\
(s, O, I) & \xrightarrow{x\, :=\, y} (s[x \mapsto s(y)], O, I)\\
(s, O, I) & \xrightarrow{v\, :=\, be} (s[v \mapsto e], O, I) \hspace{3em} \text{where $be$ interprets to $e$ according to $s$ and $I$}\\
(s, O, I) & \xrightarrow{y\, :=\, x.f} (s[y \mapsto I(f,s(x))], O, I) \hspace{5pt}\text{ if } (f, s(x)) \in \mathit{dom}(I) \hspace{5em} \text{(similarly for $v\, :=\, x.d$)}\\
(s, O, I) & \xrightarrow{y\, :=\, x.f} \bot \hspace{5pt}\text{ if } (f, s(x)) \notin \mathit{dom}(I) \hspace{14em} \text{(similarly for $v\, :=\, x.d$)}\\
(s, O, I) & \xrightarrow{x.f\, :=\, y} (s, O, I[(f, s(x)) \mapsto s(y)]) \hspace{5pt}\text{ if } (f, s(x)) \in \mathit{dom}(I) \hspace{4em} \text{(similarly for $x.d\, :=\, v$)}\\
(s, O, I) & \xrightarrow{x.f\, :=\, y} \bot \hspace{5pt}\text{ if } (f, s(x)) \notin \mathit{dom}(I) \hspace{14em} \text{(similarly for $x.d\, :=\, v$)}\\
(s, O, I) & \xrightarrow{x\, :=\, \mathsf{new}\; C()} (s[x \mapsto o], O \cup \{o\}, I[(f, o) \mapsto \mathit{default}_f])\\
&\hspace{6em} \text{for some $o \in \mathbb{N}$ such that $o\notin O$}\\
(s, O, I) & \xrightarrow{\overline{r}\, :=\, \mathit{Function}(\overline{t})} (s[\overline{r} \mapsto s'(\overline{n})], O', I') \hspace{5pt}\text{ if } (\emptyset[\overline{m} \mapsto s(\overline{t})], O, I) \xrightarrow{Q(\overline{m}, \,\mathit{ret}:\, \overline{n})} (s', O', I')\\
& \hspace{6em} \text{ where $Q(\overline{m},\, \mathit{ret}:\, \overline{n})$ is the code of the method $\mathit{Function}$,}\\
&\hspace{6.3em} \text{with $\overline{m}$ and $\overline{n}$ being the formal input and output parameters for $Q$}\\
(s, O, I) & \xrightarrow{\mathsf{assume}\, \mathit{cond}} (s, O, I) \hspace{5pt}\text{ if }\text{$\mathit{cond}$ interprets to $\mathit{True}$ according to $s$ and $I$}\\
(s, O, I) & \xrightarrow{P_1\,;\, P_2} (s'', O'', I'') \hspace{5pt}\text{ if } (s, O, I) \xrightarrow{P_1} (s',O', I')\\
&\hspace{8.2em}\text{ and } (s', O', I') \xrightarrow{P_2} (s'',O'', I'') \text{ for some } (s', O', I')\\
(s, O, I) & \xrightarrow{\mathsf{if}\, \mathit{cond}\, \mathsf{then}\, P_1\, \mathsf{else}\, P_2} (s', O', I') \hspace{5pt}\text{ if } (s, O, I) \xrightarrow{\mathsf{assume}\, \mathit{cond}\,;\;P_1} (s', O', I')\\
(s, O, I) & \xrightarrow{\mathsf{if}\, \mathit{cond}\, \mathsf{then}\, P_1\, \mathsf{else}\, P_2} (s', O', I') \hspace{5pt}\text{ if } (s, O, I) \xrightarrow{\mathsf{assume}\, \neg\mathit{cond}\,;\;P_2} (s', O', I')\\
(s, O, I) & \xrightarrow{\mathsf{while}\, \mathsf{cond}\, \mathsf{do}\, P} (s', O', I') \hspace{5pt}\text{ if } (s, O, I) \xrightarrow{\mathsf{assume}\,\mathit{cond}\,;\,P\,;\,\mathsf{while}\, \mathsf{cond}\, \mathsf{do}\, P} (s', O', I')\\
(s, O, I) & \xrightarrow{\mathsf{while}\, \mathsf{cond}\, \mathsf{do}\, P} (s, O, I) \hspace{5pt}\text{ if } (s, O, I) \xrightarrow{\mathsf{assume}\,\neg\mathit{cond}} (s, O, I)\\
\end{align*}
\caption{Operational Semantics}
\label{fig:op-sem}
\end{figure}

We give the formal operational semantics for programs in our language (Figure~\ref{fig:prog-lang}) in Figure~\ref{fig:op-sem} below.

Configurations are of the form $(s, O , I)$ where $O \subset_\mathit{finite} \mathbb{N}$ represents the set of allocated objects, $s$ represents the store and interprets program variables, and $I$ represents the heap and interprets mutable fields in $\Ff$---including ghost fields $\Gg$ when they are used--- on $O$ (interpretations are total). Although formally $s$ and $I$ are a family of functions indexed by the sorts of the variables (resp. signatures of the maps), we abuse notation and use $s(x)$ to denote the interpretation of a variable $x$, and similarly $I(f, o)$ to denote the value of the field $f$ on the object $o$ in the configuration. We add a sink state $\bot$ to model error.

Our language is safe, (i.e., allocated locations cannot point to un-allocated locations) and garbage-collected. The operational semantics is the usual one for such programs. Figure~\ref{fig:op-sem} presents a simplified operational semantics without considering $\mathsf{return}$ statements. The full semantics adds a marker to signify completion of a procedure. Procedures can only end after $\mathsf{return}$ statements (we syntactically disallow statements after a $\mathsf{return}$) or at the end of a program.

The rules for assignments, skip, sequencing, conditionals, and loops are trivial. De-referencing a variable that does not point to an object (i.e., is $\nil$) leads to the error state $\bot$. Allocation ensures memory safety by assigning the value of a field $f$ on a newly allocated object to a constant $\mathit{default}_f$. For pointer fields this value is $\nil$. Finally, we define the operational semantics for function calls using summaries.

\subsection{Ghost Code}
\label{app:ghost-code-defn}

\begin{figure}[h!]
\begin{align*}
P \coloneqq &\;\; x\, :=\, \mathit{Expr}[\mathit{Var}_U, \Ff] \;\;|\;\; y\, :=\, x.f \;\;|\;\; x.f\, :=\, y \;\; |\; \;z\, :=\, \mathsf{new}\; C() \\[-3pt]
&\;\;|\; \;\overline{r}\, :=\, \mathit{Func}(\overline{t}) \hspace{3em} \overline{r},\overline{t} \text{ are variables in } \mathit{Var}_U \cup \mathit{Var}_G \\[-3pt]
&\;\;\; \text{(Functions can have ghost input/output parameters)} \\[-3pt]
&\;\;|\; \; GP \\[-3pt]
&\;\;\; \text{(GP are ``pure'' ghost programs)} \\[-3pt]
&\;\; |\;\; \mathsf{skip}\;\;|\;\; \mathsf{assume}\;\mathit{cond}\;\;|\;\; \mathsf{return} \\[-3pt]
&\;\;|\;\; P\, ; \, P \;\;|\;\; \mathsf{if}\; \mathit{cond}\; \mathsf{then}\; P\; \mathsf{else}\; P\;\; |\;\; \mathsf{while}\; \mathsf{cond}\; \mathsf{do}\; P\; \\[-1pt]
\mathit{cond} \coloneqq &\;\; \mathit{BoolExpr}[\mathit{Var}_U,\Ff] \\[3pt]
GP \coloneqq &\;\; a\, :=\, \mathit{Expr}[\mathit{Var}_U \cup \mathit{Var}_G, \Ff \cup \Gg] \;\;|\;\; b\, :=\, x.g \;\;|\;\; b\, :=\, x.f \\[-3pt]
&\;\;\; \text{(Ghost variables can read from both user and ghost variables/maps)} \\[-3pt]
&\;\;|\;\; x.g\, :=\, b \;\;|\;\; x.g\, :=\, y \\[-3pt]
&\;\;\; \text{(Ghost maps can only be assigned values from ghost variables)} \\[-3pt]
&\;\;|\; \;\overline{s}\, :=\, \mathit{GhostFunc}(\overline{v}) \hspace{2em} \overline{s},\overline{v} \text{ are variables in } \mathit{Var}_G,\, \mathit{GhostFunc} 
\text{ is \textbf{always terminating}}\\[-3pt]
&\;\; |\;\; \mathsf{skip}\;\;|\;\; GP\, ; \, GP \;\;|\;\; \mathsf{if}\; \mathit{Gcond}\; \mathsf{then}\; GP\; \mathsf{else}\; GP \\[-3pt]
&\;\; |\;\; \mathsf{while}\; \mathsf{Gcond}\; \mathsf{do}\; GP\; \hspace{1em} \text{loop is \textbf{always terminating}} \\[-1pt]
\mathit{Gcond} \coloneqq &\;\; \mathit{BoolExpr}[\mathit{Var}_U \cup \mathit{Var}_G,\Ff \cup \Gg]
\end{align*}
\caption{Grammar of programs with ghost code. $x,y, z$ are user variables $\mathit{Var}_U$, $a,b$ are ghost variables $\mathit{Var}_G$, $f \in \Ff$ is a user field, and $g \in \Gg$ is a ghost map. Notation $\mathit{Expr}[\mathit{Vars},\mathit{Maps}]$ denotes expressions over the vocabulary given by variables $\mathit{Vars}$ and maps $\mathit{Maps}$, similarly $\mathit{BoolExpr}[\mathit{Vars},\mathit{Maps}]$ denotes boolean expressions. Termination for ghost loops and functions can be established in any way.}
\label{fig:lang-with-ghost}
\end{figure}

In this section we formally define our programming language augmented with ghost code, as well as the projection of ghost-augmented code to `user' code.

Fix a set of user variables $\mathit{Var}_U$ and ghost variables $\mathit{Var}_G$. We already introduced user fields $\Ff$ and ghost fields/maps $\Gg$ in Section~\ref{sec:ids}. We define a programming language over this vocabulary in Figure~\ref{fig:lang-with-ghost} below. The main aspects to note are: (a) ghost variables can read from user variables/maps, but the reverse is not allowed, (b) ghost conditionals and loops must only contain bodies that are purely ghost, and (c) ghost loops and functions must always terminate. These choices ensure that ghost variables do not affect the execution of the user program. We can formalize this claim using the idea of `projecting out' ghost code and obtaining a pure user program

\mypara{Projection that Eliminates Ghost Code} Fix a main method $M$ with body $Q_0$. Let $N_i, 1 \leq i \leq k$ be a set of auxiliary methods with bodies $Q_i$ that $Q_0$ can call. Note that the bodies $Q_0$ and $Q_i$ contain ghost code. Let us denote a program containing these methods by $[(M: Q_0); (N_1: Q_1)\ldots (N_k: Q_k)]$. We then define projection as follows:

\begin{definition}[Projection of Ghost-Augmented Code to User Code]
\label{defn:code-projection-formal}
The projection of the ghost-augmented program $[(M:Q_0); (N_1:Q_1)\ldots (N_k:Q_k)]$ is the user program $[(\hat{M}:\hat{Q_0}); (\hat{N_1}:\hat{Q_1})\ldots (\hat{N_k}:\hat{Q_k})]$ such that:
\begin{enumerate}
    \item The input (resp. output) signature of $\hat{M}$ is that of $M$ with the ghost input (resp. output) parameters removed. Formally, given a sequence of parameters $\overline{x}$ with some elements in the sequence marked as $\mathsf{ghost}$, we can define the projection as the sequence formed by the non-continguous subsequence of parameters in $\overline{x}$ consisting of non-ghost parameters.
    \item $\hat{Q_0}$ is derived from $Q_0$ by: (a) eliminating all ghost code, i.e., replacing yields of the nonterminal $\mathit{GP}$ in Figure~\ref{fig:lang-with-ghost} with $\mathsf{skip}$, and (b) replacing each non-ghost function call statement of the form $\overline{r} := N_j(\overline{t})$ with the statement $\overline{s} := \hat{N_j}(\overline{u})$, where $\overline{u}, \overline{s}$ are obtained from $\overline{t}, \overline{r}$ by projecting out the elements corresponding to the ghost parameters in the signature of $N_j$. Each $\hat{Q_i}$ is derived from the corresponding $Q_i$ by a similar transformation.

    We define this formally as a recursive transformation on the structure of the grammar of $P$ (ghost-augmented programs) in Figure~\ref{fig:lang-with-ghost}:
    \begin{align*}
        \mathit{Projection}(GP) &=\; \mathsf{skip} \\
        \mathit{Projection}(\overline{r} \coloneqq \mathit{Func}(\overline{t})) &=\; \overline{s} \coloneqq \hat{\mathit{Func}}(\overline{u})\\
        &\hspace{1.5em}\text{$\overline{u}, \overline{s}$ are obtained from $\overline{t}, \overline{r}$ by projecting out}\\
        &\hspace{1.5em}\text{elements corresponding to ghost parameters}\\
        \mathit{Projection}(\mathsf{stmt}) &=\; \mathsf{stmt} \hspace{1em}\text{for all other statements}\\
        \mathit{Projection}(P_1; \, P_2) &=\; \mathit{Projection}(P_1)\,; \, \mathit{Projection}(P_2)\\
        \mathit{Projection}(\mathsf{if}\; \mathit{cond}\; \mathsf{then}\; P_1\; \mathsf{else}\; P_2) &=\; \mathsf{if}\; \mathit{cond}\; \mathsf{then}\; \mathit{Projection}(P_1)\; \mathsf{else}\; \mathit{Projection}(P_2)\\
        \mathit{Projection}(\mathsf{while}\; \mathsf{cond}\; \mathsf{do}\; P) &=\; \mathsf{while}\; \mathsf{cond}\; \mathsf{do}\; \mathit{Projection}(P)
    \end{align*}    
\end{enumerate}
\end{definition}

\iflong\else %

\fi %

\section{Proofs of Soundness for Stages 1, 2, and 3 of FWYB}
\label{app:stages-soundness}

In this section we detail the proofs of soundness for the various stages of the FWYB methodology. We first introduce some notation and show some preliminary lemmas.

\mypara{Projection for Configurations} The stages of FWYB deal with two kinds of triples, one whose validity is stated with respect to configurations that interpret ghost variables and maps, and one over configurations that only interpret user variables and fields. Given a configuration $C$ that interprets ghost variables/maps, we denote by $\hat{C}$ the projection of that configuration to user variables that simply eliminates all ghost interpretations. Conversely, given a configuration $c$ we say that $C$ extends $c$ with an interpretation for ghost variables/maps if $\hat{C} = c$. We define $\hat{bot} = \bot$.

\smallskip
\mypara{Lemmas About Projection that Eliminates Ghost Code} We show the following lemmas about projection that eliminates ghost code (Definition~\ref{defn:code-projection-formal}). We assume that there is only one procedure $M$ in the program for simplicity of presentation. Recall that $M$ can contain ghost code and $\hat{M}$ is the projection of $M$ that eliminates the ghost code (with appropriately modified input/output parameters).

\begin{lemma}
\label{lem:ghost-code-termination-basic}
Let $C_1$ be a configuration that interprets ghost variables/maps. If $M$ is a ``pure ghost'' program, (i.e., a yield of $\mathit{GP}$ in the grammar in Figure~\ref{fig:lang-with-ghost}), then $M$ always terminates starting from $C_1$. 
\end{lemma}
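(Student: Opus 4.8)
The plan is to prove Lemma~\ref{lem:ghost-code-termination-basic} by structural induction on the grammar of pure ghost programs $\mathit{GP}$ given in Figure~\ref{fig:lang-with-ghost}. The statement to establish is that for any configuration $C_1$ interpreting the ghost vocabulary, a pure ghost program $M$ terminates when executed from $C_1$ (i.e., the operational semantics drives $C_1$ to some configuration $C_2$, and does not diverge). Since this is a pure ghost program, it cannot contain allocation or $\mathsf{assume}$ statements (by the definition of $\mathit{GP}$), so the only sources of potential nontermination are ghost function calls and ghost loops.

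First I would handle the base cases: ghost assignments of the forms $a := \mathit{Expr}[\ldots]$, $b := x.g$, $b := x.f$, $x.g := b$, $x.g := y$, and $\mathsf{skip}$. Each of these corresponds to a single operational-semantics step (expression evaluation is total over the finite configuration, and field reads/writes on ghost maps are total just as in the operational semantics in Figure~\ref{fig:op-sem}), hence each terminates in one step. For the ghost function call $\overline{s} := \mathit{GhostFunc}(\overline{v})$, termination is immediate because the grammar explicitly stipulates that $\mathit{GhostFunc}$ is \textbf{always terminating} — this is a hypothesis baked into the well-formedness of ghost code, and the operational semantics of a call evaluates the callee body to completion, which exists by that hypothesis.

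Next, the inductive cases. For composition $GP_1; GP_2$: by the induction hypothesis $GP_1$ terminates from $C_1$, reaching some $C'$; then $GP_2$ terminates from $C'$ by the induction hypothesis again; hence the composition terminates. For the conditional $\mathsf{if}\;\mathit{Gcond}\;\mathsf{then}\;GP_1\;\mathsf{else}\;GP_2$: evaluating $\mathit{Gcond}$ terminates (total boolean expression evaluation on the finite configuration), and then exactly one of $GP_1$, $GP_2$ executes, which terminates by the induction hypothesis. The one genuinely non-trivial case is the ghost loop $\mathsf{while}\;\mathsf{Gcond}\;\mathsf{do}\;GP'$: here I would again appeal to the grammar's side condition that the loop is \textbf{always terminating}. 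Formally, the intended meaning of "always terminating" is that the number of iterations is finite for every starting configuration; combined with the induction hypothesis that each iteration of the body $GP'$ terminates, unrolling the loop finitely many times yields termination. (In the implementation this finiteness is witnessed by a ranking function, but the lemma only needs the abstract termination guarantee, as the paper notes the definition is agnostic to the termination-establishing technique.)

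The main obstacle — really the only subtlety — is that "always terminating" for ghost loops and ghost functions is a semantic side condition attached to the grammar rather than something proved from more primitive rules, so the proof must be careful to state that it is \emph{assuming} this side condition as part of the definition of a well-formed pure ghost program, and then show that termination of the whole follows by induction given termination of these atomic constructs. Everything else is a routine unwinding of the operational semantics. I would therefore structure the write-up as: (1) remark that $\mathit{GP}$ excludes allocation and $\mathsf{assume}$; (2) base cases, one line each; (3) the composition/conditional inductive steps; (4) the loop case, citing the termination side condition and the induction hypothesis on the body; and (5) the function-call case, citing the termination side condition on $\mathit{GhostFunc}$.
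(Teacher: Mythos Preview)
Your proposal is correct and takes essentially the same approach as the paper, though with considerably more detail. The paper's own proof is a one-line observation that termination ``follows directly from the definition of ghost code which requires pure ghost loops and functions to be terminating''; your structural induction on $\mathit{GP}$ unpacks exactly this, making explicit that assignments and $\mathsf{skip}$ are immediate, composition and conditionals are handled by the induction hypothesis, and the loop and function-call cases bottom out in the grammar's termination side conditions---which is precisely the crux the paper is pointing at.
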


The above lemma says that pure ghost programs always terminate. It follows directly from the definition of ghost code which requires pure ghost loops and functions to be terminating.\qed 

\begin{lemma}
\label{lem:ghost-code-termination-advanced}
Let $c$ be a configuration that does not interpret ghost variables/maps. If $\hat{M}$ (projected code that does not contain ghost code) terminates starting from $c$, then $M$ (which contains additional ghost code) must terminate starting from any configuration $C$ that extends $c$.
\end{lemma}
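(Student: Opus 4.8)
The plan is to prove a stronger simulation statement from which termination of $M$ follows immediately: every execution of $M$ from $C$ can be put in exact lock-step, at the user level, with an execution of $\hat{M}$ from $c=\hat{C}$, with only finite ``ghost segments'' inserted in between. The fact that makes this work is the non-interference discipline baked into the grammar of Figure~\ref{fig:lang-with-ghost}: ghost assignments write only to ghost variables/maps, ghost conditionals and ghost loops have purely ghost bodies, and all user-level guards are ghost-free boolean expressions over $\mathit{Var}_U$ and $\Ff$. Consequently the user-projection $\hat{C_i}$ of the $i$-th configuration reached by $M$ evolves exactly as the corresponding configuration of $\hat{M}$, and in particular $M$ makes the same branch choices and performs the same number of loop iterations, at the user level, as $\hat{M}$ does.

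First I would set up the helper statement to be proved by strong induction: for every statement $P$ of the ghost-augmented language, every ghost-free configuration $c$ such that running $\mathit{Projection}(P)$ from $c$ halts after $n$ steps (reaching $c'$, or $\bot$), and every configuration $C$ with $\hat{C}=c$, running $P$ from $C$ halts (reaching some $C'$ with $\hat{C'}=c'$, or reaching $\bot$; $M$ may additionally reach $\bot$ on a ghost dereference, which is still termination). The induction is on $n$, with a case analysis on the shape of $P$. Atomic user statements take one step and halt trivially. A purely ghost subprogram $GP$ halts by Lemma~\ref{lem:ghost-code-termination-basic}, and a ghost function call halts because ghost functions are required to be always terminating; in both cases the user-projection is unchanged, matching the projected code $\mathsf{skip}$. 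For $P_1;P_2$ the halting run of $\mathit{Projection}(P_1);\mathit{Projection}(P_2)$ splits into two shorter runs, so the IH applies to each in turn. For a user conditional, $M$ and $\hat{M}$ evaluate the same guard and take the same branch, to which the IH applies. For a user while-loop the guard evaluates identically in both, so $M$ performs exactly as many iterations as $\hat{M}$; unrolling one iteration, the body's projection is a strict prefix of $\hat{M}$'s remaining run, so the IH gives termination of the body, and iterating finitely often gives termination of the whole loop. For a user function call, the callee runs strictly fewer steps in $\hat{M}$'s trace, so the IH applied to the callee's body with its projected arguments (Definition~\ref{defn:code-projection-formal}) gives termination of the call in $M$; on return the user-projection again agrees with $\hat{M}$'s, and this covers recursion as well since a recursive call is a strictly shorter sub-run.

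The main obstacle is making the induction mesh cleanly with the invariant $\hat{C}=c$ that must be maintained across the interleaved ghost segments: one has to verify, statement by statement, that the ghost-code semantics never alters the user-projection and that every user-level decision in $M$ is determined by that projection, so that the step count $n$ from $\hat{M}$ really is an adequate well-founded measure for the run of $M$. Once that invariant is in hand, Lemma~\ref{lem:ghost-code-termination-basic} is precisely what eliminates the only remaining way $M$ could diverge while $\hat{M}$ halts, namely a pure-ghost loop spinning forever, and the syntactic restrictions of Figure~\ref{fig:lang-with-ghost} are what keep the two executions synchronized throughout.
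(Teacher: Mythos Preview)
Your proposal is correct and follows essentially the same overall structure as the paper's proof: case analysis on the program grammar, with the pure-ghost case discharged by Lemma~\ref{lem:ghost-code-termination-basic}. The paper's proof is a two-line sketch stating ``structural induction on the grammar for $P$'' and leaving the loop and function-call cases to the induction hypothesis without further comment.

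The one genuine difference is your choice of well-founded measure. The paper's pure structural induction on the grammar is delicate for \textsf{while} loops and recursive calls, since the unrolled loop (body followed by the same loop) and the callee's body are not syntactic subterms. You sidestep this by inducting on the step count $n$ of the projected execution $\hat{M}$, which makes the loop and recursion cases go through cleanly: each body iteration and each callee run is a strict sub-run of $\hat{M}$'s trace, so the strong induction hypothesis applies. This is a more robust packaging of the same argument, and your explicit simulation invariant $\hat{C}=c$ is exactly what is needed to justify that $M$ and $\hat{M}$ agree on all user-level guards. The paper presumably intends the same reasoning but leaves it implicit.
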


The above lemma says that the termination of the original user program is preserved by any augmentation with ghost code. In a certain sense, it `lifts' Lemma~\ref{lem:ghost-code-termination-basic} to programs that contain both user and ghost code.

\begin{proof}
The lemma follows from structural induction on the definition of projection, i.e., on the structure of the grammar for the nonterminal $P$ in Figure~\ref{fig:lang-with-ghost}. The argument for basic statements is trivial. For pure ghost programs the result follows from Lemma~\ref{lem:ghost-code-termination-basic}. The argument for all compositions (sequential, conditional, loop) and function calls follows from the induction hypothesis.
\end{proof}

\smallskip

\begin{lemma}
\label{lem:user-config-unaffected-basic}
Let $C_1$ be a configuration that interprets ghost variables/maps. If $M$ is a ``pure ghost'' program and $M$ starting from $C_1$ reaches some $C_2$ and $C_2 \neq \bot$, then $\hat{C_1} = \hat{C_2}$. 
\end{lemma}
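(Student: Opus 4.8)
The plan is to prove this by induction on the length of the big-step derivation $C_1 \xrightarrow{M} C_2$, ranging simultaneously over all pure ghost programs $M$ (i.e. yields of the nonterminal $\mathit{GP}$ in Figure~\ref{fig:lang-with-ghost}); equivalently, one does structural induction on $M$, dispatching the $\mathsf{while}$ case by a secondary induction on the number of loop iterations and the ghost-function-call case by invoking the induction hypothesis on the function body, which is itself a pure ghost program. The first observation to record is that, since $C_2 \neq \bot$ and $\bot \xrightarrow{*} \bot$, no configuration occurring anywhere in the derivation is $\bot$; in particular every ghost dereference that is executed is memory-safe, so no case needs to worry about error transitions.

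Next I would go through the cases. The base cases are $\mathsf{skip}$ (immediate); the ghost-variable assignments $a := \mathit{Expr}[\mathit{Var}_U\cup\mathit{Var}_G, \Ff\cup\Gg]$, $b := x.g$, $b := x.f$, which change only the value of a ghost variable, hence leave the user store and the interpretation of every field in $\Ff$ untouched; and the ghost-field writes $x.g := b$, $x.g := y$, which change only the interpretation of the ghost field $g \in \Gg$, and since $\Gg$ is disjoint from $\Ff$ this too leaves the user projection fixed. So $\hat{C_1} = \hat{C_2}$ in each base case. For the inductive cases: $GP_1; GP_2$ factors the run as $C_1 \xrightarrow{GP_1} C' \xrightarrow{GP_2} C_2$ with $C' \neq \bot$, and the induction hypothesis on each factor gives $\hat{C_1} = \hat{C'} = \hat{C_2}$; $\mathsf{if}\, \mathit{Gcond}\, \mathsf{then}\, GP_1\, \mathsf{else}\, GP_2$ is handled by noting that evaluating $\mathit{Gcond}$ does not alter the configuration and applying the induction hypothesis to the branch taken; $\mathsf{while}\, \mathsf{Gcond}\, \mathsf{do}\, GP$ terminates on this run, so the derivation unfolds into finitely many body executions followed by the exit step, and chaining the induction hypothesis across the bodies (the exit step changing nothing) gives the result; and the ghost function call $\overline{s} := \mathit{GhostFunc}(\overline{v})$ contains a strictly shorter sub-derivation running the (pure ghost) body on the initialized local store, which by the induction hypothesis modifies no field in $\Ff$, while the only effect on the caller is the update of the ghost outputs $\overline{s}$, so again $\hat{C_1} = \hat{C_2}$.

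The step I expect to be the main obstacle — more of a setup subtlety than a genuine difficulty — is making the induction well-founded across nested ghost function calls: this is exactly where one must appeal to the fact that ghost functions and ghost loops always terminate (built into the definition of ghost code, and for the particular run also guaranteed by the hypothesis that $M$ reaches $C_2$; cf. Lemma~\ref{lem:ghost-code-termination-basic}), so that induction on derivation length is legitimate and the body sub-derivations are genuinely smaller. Once that induction principle is in place, each case is just bookkeeping of which component of a configuration — user store, user fields $\Ff$, ghost store, or ghost fields $\Gg$ — the statement in question can touch, and the conclusion $\hat{C_1} = \hat{C_2}$ follows.
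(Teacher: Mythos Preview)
Your proposal is correct and takes essentially the same approach as the paper: structural induction on the $\mathit{GP}$ grammar, with the key observation that every statement form in $\mathit{GP}$ can only write to ghost variables or ghost fields in $\Gg$, never to user variables or fields in $\Ff$. The paper's own proof is a one-line sketch (``follows trivially by structural induction on the $\mathit{GP}$ grammar, using the definition of operational semantics''), and you have simply spelled out the cases it elides.
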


The above lemma says that ghost code does not affect the values of user (non-ghost) variables and maps. It follows trivially by structural induction on the $\mathit{GP}$ grammar, using the definition of operational semantics (Figure~\ref{fig:op-sem}). The key observation is that $\mathit{GP}$ syntactically disallows non-ghost variables/maps to read from ghost variables/maps.\qed

We can similarly `lift' the above lemma to programs that contain both user code and ghost code.

\begin{lemma}
\label{lem:user-config-unaffected-advanced}
Let $c$ be a configuration that does not interpret ghost variables/maps. If $\hat{M}$ starting from $c_1$ reaches some $c_2$, then $M$ starting from any configuration $C_1$ that extends $c_1$ must either reach $\bot$ or some $C_2$ that extends $c_2$.
\end{lemma}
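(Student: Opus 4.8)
The plan is to prove the claim by an induction mirroring the recursive definition of projection (Definition~\ref{defn:code-projection-formal}), reading the operational semantics of Figure~\ref{fig:op-sem} so that an execution of $M$ is obtained from an execution of $\hat M$ by ``re-inserting'' the ghost steps that projection deleted. The guiding observation is that every block of ghost code is a yield of $GP$ in Figure~\ref{fig:lang-with-ghost}, so by Lemma~\ref{lem:ghost-code-termination-basic} it always terminates, and by Lemma~\ref{lem:user-config-unaffected-basic} it either reaches $\bot$ or leaves the user part of the configuration unchanged. I would therefore show that running $M$ from any $C_1$ extending $c_1$ can follow exactly the user-level steps that $\hat M$ takes from $c_1$, with the inserted ghost blocks either aborting (the $\bot$ disjunct) or being transparent on user variables and fields.

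\textbf{Base cases.} First I would handle the atomic statements. For a pure ghost block $GP$, projection yields $\mathsf{skip}$ and hence $c_2 = c_1$; Lemma~\ref{lem:ghost-code-termination-basic} gives termination from $C_1$, and Lemma~\ref{lem:user-config-unaffected-basic} gives $\hat{C_2} = \hat{C_1} = c_1 = c_2$ whenever the reached $C_2 \neq \bot$. For every other atomic statement---assignment, field lookup, field mutation, allocation, $\mathsf{assume}$, $\mathsf{return}$, $\mathsf{skip}$---projection is the identity, and the matching rule in Figure~\ref{fig:op-sem} reads and writes only user variables and user fields; since $C_1$ extends $c_1$ they agree on those, so $M$'s step out of $C_1$ produces some $C_2$ with $\hat{C_2} = c_2$, or $\bot$ exactly when $\hat M$ reaches $\bot$. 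The one point to check here is allocation: $O$ is identical in $C_1$ and $c_1$ (ghost maps are interpreted over the existing objects and do not change $O$), so $M$ may pick the same fresh object that $\hat M$ picks, after which that object's ghost fields are fixed to their defaults, which still makes $C_2$ an extension of $c_2$.

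\textbf{Inductive cases.} Then I would handle composition, conditionals, loops, and calls. Sequential composition $\hat M = \hat{P_1};\hat{P_2}$ follows by splitting the run at its intermediate configuration and applying the induction hypothesis to $P_1$ and then to $P_2$ (stopping early if either yields $\bot$). For conditionals and loops the guard is a boolean expression over user variables and fields only, so it evaluates identically on $C_1$ and $c_1$: a conditional reduces to the taken branch via the hypothesis, and a loop is handled by an auxiliary induction on the number of iterations that $\hat M$ performs, using the body's hypothesis at each iteration to keep the two configurations in the ``extends'' relation and hence the guards synchronized. For a non-ghost call $\overline r := N_j(\overline t)$, projection gives $\overline s := \hat{N_j}(\overline u)$, and by the call rule this reduces to the lemma applied to the body $Q_j$ of $N_j$; to keep the induction well-founded under recursion, I would organize the whole argument as an induction on the derivation tree of the operational-semantics judgment for $\hat M$, so that the callee's execution is a strictly smaller sub-derivation.

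\textbf{Main obstacle.} The hard part will be the function-call case: plain induction on program syntax is not well-founded, since $N_j$'s body is not a syntactic subterm of the call, so the induction must instead run on the operational-semantics derivation (equivalently, over bounded-depth approximants of the call semantics), and one must check that the projection of a call agrees structurally with the call rule. Two further points need attention: ghost code may legitimately step to $\bot$---for instance a lookup inside a ghost block when the dereferenced variable is $\nil$---which is precisely why the conclusion allows the ``$\bot$'' alternative rather than claiming $C_2$ always extends $c_2$; and the nondeterminism of allocation must be tracked so that the $M$-run and the $\hat M$-run can be made to choose matching fresh objects.
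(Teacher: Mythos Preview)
Your proposal is correct and follows essentially the same approach as the paper: structural induction on the grammar of ghost-augmented programs, handling pure-ghost blocks via Lemmas~\ref{lem:ghost-code-termination-basic} and~\ref{lem:user-config-unaffected-basic}, atomic non-ghost statements directly from the operational semantics, and compound forms by the induction hypothesis.

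Where you differ is in rigor rather than strategy. The paper's proof simply says ``the argument for \ldots function calls follows from the induction hypothesis,'' which, as you correctly flag, is not well-founded under plain structural induction on program syntax since the callee's body is not a subterm of the call. Your fix---recasting the induction over the operational-semantics derivation (equivalently, over call-depth approximants)---is the right repair and is implicit in what the paper intends but does not spell out. You also attend to two points the paper leaves silent: the nondeterminism of allocation (aligning the fresh-object choice between the $M$- and $\hat M$-runs) and the possibility that a ghost block aborts to $\bot$, which is exactly why the lemma's conclusion is disjunctive. These are genuine technical details your write-up handles and the paper's sketch does not.
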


The above lemma says that augmentation with ghost code does not affect how the original program executes. 

\begin{proof}
As with Lemma~\ref{lem:ghost-code-termination-advanced}, we proceed by structural induction on the grammar for $P$ in Figure~\ref{fig:lang-with-ghost}. The argument for basic non-ghost statements follows trivially from the definition of operational semantics. They key observation is that non-ghost statements do not affect the values of ghost variables/maps (ensured by the syntactic restrictions). For pure ghost programs the result follows from Lemma~\ref{lem:user-config-unaffected-basic}. The argument for all compositions (sequential, conditional, loop) and function calls follows from the induction hypothesis.
\end{proof}

\section*{Proof of Proposition~\ref{prop:ghost-code}}

We can state the proposition simply as follows: if $\langle\, \LC \land \psi_{\mathit{pre}} \,\rangle\; M\; \langle\, \LC \land \psi_{\mathit{post}} \,\rangle$ is valid, then $\langle\, \exists g_1, g_2\ldots , g_k.\, \LC \land \psi_{\mathit{pre}} \,\rangle\; \hat{M}\; \langle\, \exists g_1, g_2\ldots , g_k.\,\LC \land \psi_{\mathit{post}} \,\rangle$ is valid.

Fix configurations (without ghost state) $c_1, c_2$ such that $c_1$ satisfies $\exists g_1, g_2\ldots , g_k.\, \LC \land \psi_{\mathit{pre}}$ and $\hat{M}$ starting from $c_1$ reaches $c_2$. To show that the given Hoare triple for $\hat{M}$ is valid, we must establish that $c_2$ is not $\bot$, and further that $c_2$ satisfies $\exists g_1, g_2\ldots , g_k.\, \LC \land \psi_{\mathit{post}}$.

Since $c_1 \models \exists g_1, g_2\ldots , g_k.\, \LC \land \psi_{\mathit{pre}}$, by the semantics of second order logic there exists a configuration (taken as a model) extending $c_1$, say $C_1$, such that $C_1 \models \LC \land \psi_{\mathit{pre}}$. First, using Lemma~\ref{lem:ghost-code-termination-advanced} we have that $M$ starting from $C_1$ must terminate. Further, since the triple $\langle\, \LC \land \psi_{\mathit{pre}} \,\rangle\; M\; \langle\, \LC \land \psi_{\mathit{post}} \,\rangle$ is valid, it must be the case that $M$ starting from $C_1$ reaches some $C_2$ such that $C_2 \neq \bot$ and $C_2 \models \LC \land \psi_{\mathit{post}}$. 

We now use Lemma~\ref{lem:user-config-unaffected-advanced} to conclude that $\hat{C_2} = c_2$. Since $C_2 \neq \bot$, we have that $c_2 \neq \bot$. Further, since $C_2 \models \LC \land \psi_{\mathit{post}}$, we have from the semantics of the logic that $C_2 \models \exists g_1, g_2\ldots , g_k.\, \LC \land \psi_{\mathit{post}}$. 

Observe that the formula $\exists g_1, g_2\ldots , g_k.\, \LC \land \psi_{\mathit{post}}$ is stated over the common vocabulary of $C_2$ and $c_2$, where the interpretations of the two configurations agree. Therefore, we can conclude that $c_2 \models \exists g_1, g_2\ldots , g_k.\, \LC \land \psi_{\mathit{post}}$. This concludes the proof.\qed

\subsection*{Proof of Proposition~\ref{prop:ghost-code-Br}}

The proof of Proposition~\ref{prop:ghost-code-Br} is similar to the above proof for Proposition~\ref{prop:ghost-code}, except that we must now consider a definition of ghost code (akin to the development in Section~\ref{app:ghost-code-defn}) that only considers the variable $\Br$ as ghost. 

Repeating the arguments in the proof of Proposition~\ref{prop:ghost-code} appropriately, we obtain that if 

\begin{center}
$\langle\, (\forall z \notin \Br.\, \rho(z)) \land \alpha \land \Br = \emptyset\, \rangle\; P_{\Gg,\Br}(\overline{x}, \Br,\,\mathit{ret}\!:\,\overline{y}, \Br)\; \langle\, (\forall z \notin \Br.\, \rho(z)) \land \beta \land \Br = \emptyset\, \rangle$
\end{center}

\noindent
is valid, then

$\langle\, \exists \Br.\,\big((\forall z \notin \Br.\, \rho(z)) \land \alpha \land \Br = \emptyset\big)\, \rangle\; P_{\Gg}(\overline{x},\,\mathit{ret}\!:\,\overline{y})\; \langle\, \exists \Br.\,\big((\forall z \notin \Br.\, \rho(z)) \land \beta \land \Br = \emptyset\big)\, \rangle$

\noindent
is valid. This triple can be simplified to $\langle\, (\forall z.\, \rho(z)) \land \alpha\, \rangle\; P_\Gg(\overline{x},\,\mathit{ret}\!:\,\overline{y})\; \langle\, (\forall z.\, \rho(z)) \land \beta\, \rangle$, which concludes the proof.

\section*{Proof of Proposition~\ref{prop:well-behaved-sound}}

Given a well-behaved program $P$ such that $\{\alpha\}\; P\; \{\beta\}$ is valid, we must show that $\langle\, (\forall z \notin \Br.\, \rho(z)) \land \alpha\, \rangle\; P\; \langle\, (\forall z \notin \Br.\, \rho(z)) \land \beta\, \rangle$ is valid.

The proof proceeds by an induction on the nesting depth of method calls in a trace of the program $P$. We elide this level of induction here because it is routine. Importantly, given a particular execution of the program $P$, we must show that the claim holds, assuming it holds for all method calls occurring in the execution. We show this by structural induction on the proof of well-behavedness of $P$.

There are several base cases. 

\textsc{Skip/Assignment/Lookup/Return}\hspace{1em} There is nothing to show for skip, assignment, lookup, or return statements. These do not change the heap at all and the rule does not update $\Br$ either, therefore if $\langle\,\alpha\,\rangle\;\mathsf{stmt}\;\langle\,\beta\,\rangle$ is valid then certainly $\langle\,(\forall z\notin\Br.\,\rho) \land \alpha\,\rangle\;\mathsf{stmt}\;\langle\,(\forall z\notin\Br.\,\rho) \land\beta\,\rangle$ is valid.

\textsc{Mutation}\hspace{1em} The claim is true for the mutation rule since by the premise of the rule we update the broken set with the impact set consisting of all potential objects where local conditions may not hold.

\textsc{Function Call}\hspace{1em} Here we simply appeal to the induction hypothesis.

\textsc{Allocation}\hspace{1em} We refer to our operational semantics, which ensures that no object points to a freshly allocated object. Therefore, the allocation of an object could have only broken the local conditions on itself at most.

\textsc{Infer LC Outside Br}\hspace{1em} There is nothing to prove for this rule as it does not alter the $\Br$ set at all.

\textsc{Assert LC and Remove}\hspace{1em} The claim holds for this rule by construction. If $\LC$ holds everywhere outside $\Br$ , and we know that $\LC(x)$ holds, then we can conclude that $\LC$ holds everywhere outside $\Br \setminus \{x\}$.

It only remains to show that the claim holds for larger well-behaved programs obtained by composing smaller well-behaved programs using sequencing, branching, or looping constructs. The proof here is trivial as the argument for sequencing is trivial (we can think of a loop as unboundedly many sequenced compositions of the smaller well-behaved program): we can \emph{always} compose two well-behaved programs to obtain a well-behaved program.\qed

\section{Details For Well-Behaved Programming}
\label{app:wellbehaved}

\subsection*{General Construction for Automatically Checking Correctness of Impact Sets}

Fix a class with maps $\Ff \cup \Gg = \{f_1,f_2,\ldots f_n\}$ (includes both original and ghost fields) and an intrinsic definition $(\Gg, \LC, \varphi)$ over which we prove correctness of programs. Without loss of generality, let $f_1,\ldots f_k$ for some $k \leq n$ alone correspond to pointer fields (where the range is an object); the others we assume are data fields that range over background sorts. In the sequel we assume for simplicity that $\LC(x)$ only relates the fields of $x$ with those of $f_i(x)$ for $1 \leq i \leq k$, i.e., the local conditions only constrain the fields of $x$ with those of its neighboring objects that are ``one pointer hop'' away from $x$.

Consider a mutation $x.f \coloneqq y$ for some $f$ in $f_1$ through $f_n$ and an arbitrary $y$. It is clear that the only set of objects whose local condition can be impacted by this mutation are those that are one pointer hop away via an incoming or outgoing edge in the heap (seen as a directed graph with labeled edges corresponding to pointers), apart from $x$ itself. In general there can be unboundedly many such objects, but in our work we only handle impact sets that can be expressed as a finite set of terms over $x$ (see Section~\ref{sec:well-behaved} under `Rules for Constructing Well-Behaved Programs'). Note here that the impact set can be larger than the set of impacted objects as we only require that objects not belonging to the impact set retain that $\LC$ holds on them under mutation. However, we attempt here to construct of impact sets that are as small as possible.

Following the above discussion, let us assume that the impact set consists of terms from the following set:
\begin{center}
$\mathit{ImpactableObjects} = \{x, f_1(x),\ldots f_k(x)\} \cup \{\old(f(x)) \,|\, f \textrm{ is a pointer field}\}$
\end{center}

The reader may be inclined to suggest here that when $f$ is a pointer field, $y$ (the new value of $f(x)$) may also belong to the minimal impact set. However, this is not possible in general since $y$ is arbitrary, and in particular $y$ can be an object in the heap that is ``far away'' from $x$, i.e., not one pointer hop away (either incoming or outgoing). The same argument applies to terms over $y$. Therefore, if the (minimal) impact set is at all expressible as a set of terms over the vocabulary of the mutation statement it must be a subset of the terms in the set $\mathit{ImpactableObjects}$ defined above.

Let this subset of terms be $A$. We then generate the following triple to check that $A$ is in fact an impact set:

\begin{center}
$\vdash\; \{(\bigwedge_{t \in A} u \neq t) \land \LC(u) \land x \neq \nil\}\; x.f := y\; \{\LC(u)\}$
\end{center}

The triple says that any location $u$ that is not $A$ which satisfied $\LC$ before the mutation must continue to satisfy it after the mutation. As discussed in the main text, this validity of this triple can be check effectively by decision procedures over quantifier-free combinations of theories that are supported by SMT solvers~\cite{Z3,cvc4}.

Finally, we can compute a provably correct and minimal impact set automatically, if one exists, by considering subsets of $\mathit{ImpactableObjects}$ in turn and checking the corresponding triple as described above. However, in our experiments we compute impact sets manually and check their correctness automatically.

\section{Details for Case Studies in Section~\ref{sec:case-studies}}
\label{app:case-studies}

\iflong
In this appendix we provide further details for the various case studies discussed in the main text.
\else
In this appendix we provide further details for the various case studies discussed in the main text and detail some other case studies not featured in the main text.
\fi

\iflong\else
\subsection{Discussion on Sorted List Insertion (Section~\ref{sec:cs-slist-insert})}
\label{app:cs-slist-insert}

\fi

\iflong
\subsection{Continuing Discussion for BST Rotate (Section~\ref{sec:cs-bst-rotate})}
\label{app:cs-bst-rotate}

The following program for BST Rotate is based off of the local conditions and impact sets that appear in Section~\ref{sec:cs-bst-rotate}. 
\else

\fi

We present the fully annotated program below, with comments displaying the state of the broken set $Br$ at the corresponding point in the program.

{\footnotesize
\begin{alltt}
 \annotation{pre:} \ensuremath{\Br=\emptyset \land l(x)\neq\nil \land p(x)=xp}
 \annotation{post:} \ensuremath{\Br=\emptyset \land p(ret)=xp}
       \ensuremath{\land l(ret)=\old(l(l(x))) \land ret=\old(l(x)) \land r(ret)=x}
       \ensuremath{\land l(r(ret))=\old(r(l(x))) \land r(r(ret))=\old(r(x))}
 \funcname{bst_right_rotate}(x: \type{BST}, xp: \type{BST?}, Br: \type{Set(BST)}) 
 \keyword{returns} ret: \type{BST}, Br: \type{Set(BST)}
 \{
   \ghost{LCOutsideBr(x, Br);}
   \ghost{if (xp \ensuremath{\neq} nil) then \{}
        \ghost{LCOutsideBr(xp, Br);}
   \ghost{\}}
   \ghost{if (x.l \ensuremath{\neq} nil) then \{}
        \ghost{LCOutsideBr(x.l, Br);}
   \ghost{\}}
   \ghost{if (x.l \ensuremath{\neq} nil \ensuremath{\land} x.l.r \ensuremath{\neq} nil) then \{}
        \ghost{LCOutsideBr(x.l.r, Br);}
   \ghost{\}}
   \keyword{var} y := x.l;                \codecomment{// \{\}}
   Mut(x, l, y.r, Br);          \codecomment{// \{x, y\}}
   \keyword{if} (xp \ensuremath{\neq} nil) \keyword{then} \{
        \keyword{if} (x = xp.l) \keyword{then} \{
            Mut(xp, l, y, Br);  \codecomment{// \{xp, x, y\}}
        \}
        \keyword{else} \{
            Mut(xp, r, y, Br);  \codecomment{// \{xp, x, y\}}
        \}
   \}
   Mut(y, r, x, Br);            \codecomment{// \{xp, x, y, x.l\} (Note: x.l == old(y.r))}
   \codecomment{// (1): Repairing \texttt{x.l}}
   \ghost{if (x.l \ensuremath{\neq} nil) then \{}
        \ghost{Mut(x.l, p, x, Br);}     \codecomment{// \{xp, x, y, x.l\}}
   \ghost{\}}
   \codecomment{// (2): Repairing \texttt{x}}
   \ghost{Mut(x, p, y, Br);}            \codecomment{// \{xp, x, y, x.l\}}
   \ghost{Mut(x, min, if x = nil then x.k else x.l.min, Br);}    \codecomment{// \{xp, x, y, x.l\}}
   \codecomment{// (3): Repairing \texttt{y}}
   \ghost{Mut(y, p, xp, Br);}           \codecomment{// \{xp, x, y, x.l\}}
   \ghost{Mut(y, max, x.max, Br);}      \codecomment{// \{xp, x, y, x.l\}}
   \ghost{Mut(y, rank, if xp = nil then x.rank+1 else (xp.rank+x.rank)/2, Br);}     \codecomment{// \{xp, x, y, x.l\}}
   \ghost{AssertLCAndRemove(x.l, Br);}  \codecomment{// \{xp, x, y\}}
   \ghost{AssertLCAndRemove(x, Br);}    \codecomment{// \{xp, y\}}
   \ghost{AssertLCAndRemove(y, Br);}    \codecomment{// \{xp\}}
   \ghost{AssertLCAndRemove(xp, Br);}   \codecomment{// \{\}}
   ret := y \codecomment{// return y}
 \}
\end{alltt}
}

\subsection{Discussion on Sorted List Reversal (Section~\ref{sec:cs-slist-reverse})}
\label{app:cs-slist-reverse}

What follows are the complete local conditions and impact sets for Sorted List Reverse:

\begin{figure*}[h]
\begin{equation*}
\label{eq:slist-reverse-lc}
\begin{aligned}
\LC \equiv \forall x.\, \prev(x) \neq \nil \Rightarrow\; & \nxt(\prev(x)) = x\\
\land~ \nxt(x) \neq \nil \Rightarrow\; & \prev(\nxt(x)) = x\\
&\land~ \len(x) = \len(\nxt(x)) + 1\\
&\land~ \keys(x) = \keys(\nxt(x)) \cup \{ \key(x) \}\\
&\land~ \hslist(x) = \hslist(\nxt(x)) \uplus \{ x \}\\
&\land~ \sorted(x) \Rightarrow \key(x) \leq \key(\nxt(x))\\
&\qquad\qquad\qquad \land \sorted(x) = \sorted(\nxt(x))\\
&\land~ \revsorted(x) \Rightarrow \key(x) \geq \key(\nxt(x))\\
&\qquad\qquad\qquad \land \revsorted(x) = \revsorted(\nxt(x))\\
~~~~~\land~ (\nxt(x) = \nil \Rightarrow\; & \len(x) = 1 \land \keys(x) = \{ x \} \land \hslist(x) =\{x\})\\
\end{aligned}
\end{equation*}
\caption{Full local condition for lists for Sorted List Reverse}
\end{figure*}

\begin{table*}[h]
\begin{center}
\begin{tabular}{ |c|c| } 
 \hline
 Mutated Field $f$ & Impacted Objects $A_f$ \\ 
 \hline
 $\nxt$ & $\{x, \old(\nxt(x))\}$ \\ 
 $\key$ & $\{x, \prev(x)\}$\\
 $\prev$ & $\{x, \old(\prev(x))\}$\\
 $\len$ & $\{x, \prev(x)\}$ \\
 $\keys$& $\{x, \prev(x)\}$ \\
 $\hslist$& $\{x, \prev(x)\}$ \\
 $\sorted$& $\{x, \prev(x)\}$ \\
 $\revsorted$& $\{x, \prev(x)\}$ \\
 \hline
\end{tabular}
\end{center}
\caption{Full impact sets for lists for Sorted List Reverse}
\label{tab:slist-reverse-impact}
\end{table*}

The following program reverses a sorted list as defined by the local condition above. We annotate this program with comments on the current composition of the broken set according to the rules of Table \ref{tab:slist-reverse-impact}.

{\footnotesize
\begin{alltt}
 \annotation{pre:} \ensuremath{\Br=\emptyset \land \varphi(x) \land \sorted(x)}
 \annotation{post:} \ensuremath{\Br'=\emptyset \land \varphi(ret) \land \revsorted(ret) \land \keys(ret)=\old(\keys(x)) \land \hslist(ret)=\old(\hslist(x))}
 \funcname{sorted_list_reverse}(x: \type{C}, Br: \type{Set(C)}) 
 \keyword{returns} ret: \type{C}, Br: \type{Set(C)}
 \{
   \ghost{LCOutsideBr(x, Br);}
   \keyword{var} cur := x;
   ret := null;
   \keyword{while} (cur \ensuremath{\neq} nil)
      \keyword{invariant} \ensuremath{cur\neq\nil \Rightarrow LC(cur)\land\sorted(cur)\land\varphi(cur)}
      \keyword{invariant} \ensuremath{ret\neq\nil \Rightarrow LC(ret)\land\revsorted(ret)\land\varphi(ret)}
      \keyword{invariant} \ensuremath{cur\neq\nil \land ret\neq\nil \Rightarrow \key(ret) \leq key(cur)}
      \keyword{invariant} \ensuremath{\old(\keys(x))=\ite(cur=\nil, \emptyset, \keys(cur))\cup\ite(ret=\nil, \emptyset, \keys(ret))}
      \keyword{invariant} \ensuremath{\old(\hslist(x))=\ite(cur=\nil, \emptyset, \hslist(cur))\cup\ite(ret=\nil, \emptyset, \hslist(ret))}
      \keyword{invariant} \ensuremath{Br=\emptyset}
      \keyword{decreases} \ensuremath{\ite(cur\neq\nil, 0, \len(cur))}
   \{
      \keyword{var} tmp := cur.next;                \codecomment{// \{\}}
      \ghost{if (tmp \ensuremath{\neq} nil) then \{}            
        \ghost{LCOutsideBr(tmp, Br);}             \codecomment{// \{\}}
        \ghost{Mut(tmp, p, nil, Br);}             \codecomment{// \{cur, tmp\}}
      \ghost{\}}
      Mut(cur, next, ret, Br);            \codecomment{// \{cur, tmp\}}
      \ghost{if (ret \ensuremath{\neq} nil) then \{}
        \ghost{Mut(ret, p, cur, Br);}             \codecomment{// \{cur, tmp, ret\}}
      \}
      \ghost{Mut(cur, keys, 
        \{cur.k\} \ensuremath{\cup} (if cur.next=nil then \ensuremath{\varphi} else cur.next.keys), Br);}    \codecomment{// \{cur, tmp, ret\}}
      \ghost{Mut(cur, hslist, 
        \{cur\} \ensuremath{\cup} (if cur.next=nil then \ensuremath{\varphi} else cur.next.hslist), Br);}    \codecomment{// \{cur, tmp, ret\}}
      \ghost{if (cur.next \ensuremath{\neq} nil \ensuremath{\land} (cur.key \ensuremath{>} cur.next.key \ensuremath{\lor} \ensuremath{\neg}cur.next.sorted)) \{}
        \ghost{Mut(cur, sorted, false, Br);}      \codecomment{// \{cur, tmp, ret\}}
      \ghost{\}}
      \ghost{Mut(cur, rev\_sorted, true, Br);}     \codecomment{// \{cur, tmp, ret\}}
      \ghost{AssertLCAndRemove(cur, Br);}         \codecomment{// \{tmp, ret\}}
      \ghost{AssertLCAndRemove(ret, Br);}         \codecomment{// \{tmp\}}
      \ghost{AssertLCAndRemove(tmp, Br);}         \codecomment{// \{\}}
      ret := cur;
      cur := tmp;
   \}
   \codecomment{// The current value of ret is returned}
 \}
\end{alltt}
}

\subsection{Discussion on Circular List Insert Back (Section~\ref{sec:cs-cl-insertback})}
\label{app:cs-cl-insertback}

We first provide the local conditions and impact sets for circular lists. 

\begin{figure*}[h]
\begin{equation*}
\begin{aligned}
\LC \equiv \forall x.\, \nxt(x) \neq \nil \; & \land \; \prev(x) \neq \nil\\
\land \;~~~ \nxt(\prev(x)) \; &= \; x \; \land \; \prev(\nxt(x)) \; = \; x\\
\land~ \lst(x) = x \Rightarrow\; & \len(x) = 0 \land \revlen(x) = 0\\
&\land~ \lst(x) = \lst(\nxt(x))\\
&\land~ \nxt(x) = x \Rightarrow \keys(x) = \emptyset \land \hslist(x) = \{x\}\\
&\land~ \nxt(x) \neq x \Rightarrow \keys(x) = \keys(\nxt(x)) \hspace{8em}\text{(C1)}\\
&\qquad\qquad\qquad\quad\; \land \hslist(x) = \{x\} \cup \hslist(\nxt(x)) \hspace{4.1em}\text{(C2)}\\
\land~ \lst(x) \neq x \Rightarrow\; & \len(x) = \len(\nxt(x)) + 1\\
&\land~ \revlen(x) = \revlen(\prev(x)) + 1\\
&\land~ \nxt(x) = \lst(x) \Rightarrow \keys(x) = \{\key(x)\} \land \hslist(x) = \{x\}\\
&\land~ \nxt(x) \neq \lst(x) \Rightarrow \keys(x) = \{\key(x)\} \cup \keys(\nxt(x))\hspace{1em}\text{(C3)}\\
&\qquad\qquad\qquad\qquad\quad\; \land \hslist(x) = \{x\} \cup \hslist(\nxt(x))\hspace{2.18em}\text{(C4)}\\
&\qquad\qquad\qquad\qquad\quad\; \land x \not\in \hslist(\nxt(x))\\
&\land~ \lst(x) = \lst(\nxt(x))\\
&\land~ \lst(\lst(x)) = \lst(x)\\
&\land~ x \in \hslist(\lst(x))\\
&\land~ \prev(x) \in \hslist(\lst(x))\\
&\land~ \nxt(x) \in \hslist(\lst(x))
\end{aligned}
\end{equation*}
\caption{Full local condition for lists for Circular List Insert Back}
\label{fig:cl-insertback-lc}
\end{figure*}

The local condition $\LC$ for circular lists can be seen in Figure \ref{fig:cl-insertback-lc}. For use in loop invariants, we have defined two variants of the local condition. One of these variants is $LC_{MinusNode}(x, n)$, which can be seen as a predicate on nodes $x$ and $n$, and is formed from $\LC$ by replacing the clauses (C1), (C3), and (C4) in Figure \ref{fig:cl-insertback-lc} with the three clauses (C1'), (C3'), and (C4') in Figure \ref{fig:cl-insertback-lc-minnode}. Additionally, we have another variant: $LC_{Last}(x, n)$, which is formed from $LC$ by replacing the clause (C2) in Figure \ref{fig:cl-insertback-lc} with $\hslist(x) = \{x, n\} \cup \hslist(\nxt(x))$.

\begin{figure*}[h]
\begin{equation*}
\begin{aligned}
&(\keys(x) = \keys(\nxt(x)) \setminus \{\key(n)\} \lor \keys(x) = \keys(\nxt(x))) \hspace{2em}\text{(C1')}\\
&(\keys(x) = (\key(x) \cup \keys(\nxt(x))) \setminus \{\key(n)\} \hspace{8.75em}\text{(C3')}\\
&\quad~ \lor \keys(x) = (\key(x) \cup \keys(\nxt(x))))\\
&(\hslist(x) = (x \cup \hslist(\nxt(x))) \setminus \{n\})\hspace{12.1em}\text{(C4')}
\end{aligned}
\end{equation*}
\caption{Alterations to Figure \ref{fig:cl-insertback-lc} to form $LC_{MinusNode}$}
\label{fig:cl-insertback-lc-minnode}
\end{figure*}

Note that in this example as well as other benchmarks where we introduce scaffolding nodes, in order to prove a bound on the impact set, we require that a precondition $\phi$ holds before we mutate particular fields of nodes. The fields, preconditions, and impact sets for every node can be seen in Table \ref{tab:cl-insertback-impact}. Note that our benchmark contains another manipulation macro, \texttt{AddToLastHsList(x, n, Br)}, which, if \texttt{x} is a scaffolding node (or $\lst(x) = x$), adds the node \texttt{n} to the set $\hslist(x)$. The precondition for invoking this macro is that $\lst(x) = x$, and the only object impacted by the macro is $\{x\}$. %

\begin{table*}[h]
\begin{center}
\begin{tabular}{ |c|c|c| } 
 \hline
 Mutated Field $f$ & Mutation Precond. $\phi$ & Impacted Objects $A_f$ \\ 
 \hline
 $\nxt$ & $\top$ & $\{x, \old(\nxt(x))\}$ \\ 
 $\key$ & $\top$ & $\{x, \prev(x)\}$\\
 $\prev$ & $\top$ & $\{x, \old(\prev(x))\}$\\
 $\lst$ & $\lst(x) \neq x \lor (\lst(x) = x \land \hslist(x) = \{x\})$ & $\{x, \prev(x)\}$\\
 $\len$ & $\top$ & $\{x, \prev(x)\}$ \\
 $\revlen$ & $\top$ & $\{x, \nxt(x)\}$ \\
 $\keys$& $\top$ & $\{x, \prev(x)\}$ \\
 $\hslist$& $\lst(x) \neq x \lor (\lst(x) = x \land \hslist(x) = \{x\})$  & $\{x, \prev(x)\}$ \\
 \hline
\end{tabular}
\end{center}
\caption{Full impact sets for lists for Circular List Insert Back}
\label{tab:cl-insertback-impact}
\end{table*}

\clearpage
We give the specification and program for Circular List Insert Back below.

{\footnotesize
\begin{alltt}
 \annotation{pre:} \ensuremath{\Br=\emptyset \land \nxt(x)=\lst(x)}
 \annotation{post:} \ensuremath{\Br=\emptyset \land \nxt(ret)=\lst(ret) \land \lst(ret)=\old(\lst(x))}
       \ensuremath{\land \keys(\lst(ret))=\old(\keys(\lst(x)))\cup\{k\} \land fresh(\hslist(\lst(ret))\setminus\old(\hslist(\lst(x))))}
 \funcname{circular_list_insert_back}(x: \type{C}, k: \type{Int} Br: \type{Set(C)}) 
 \keyword{returns} ret: \type{C}, Br: \type{Set(C)}
 \{
   \ghost{LCOutsideBr(x, Br);}
   \ghost{LCOutsideBr(x.next, Br);}
   \ghost{LCOutsideBr(x.prev, Br);}
   
   \keyword{var} last: C = x.next;
   \keyword{var} node: C;
   NewObj(node, Br);
   Mut(node, key, k, Br);
   Mut(node, next, x.next, Br);
   Mut(x, next, node, Br);

   \ghost{AddToLastHsList(last, node, Br);}
   \ghost{Mut(last, prev, node, Br);}
   \ghost{Mut(node, prev, x, Br);}
   \ghost{Mut(node, length, 1, Br);}
   \ghost{Mut(node, rev_length, 1 + node.prev.rev_length, Br);}
   \ghost{Mut(node, keys, \{k\}, Br);}
   \ghost{Mut(node, hslist, \{node\}, Br);}
   \ghost{Mut(node, last, node.prev.last, Br);}
   \ghost{AssertLCAndRemove(node, Br);}

   \ghost{\keyword{ghost var} cur: C = x;}
   \ghost{\keyword{label} PreLoop:}
   \ghost{\keyword{while} (cur \ensuremath{\neq} last)}
      \ghost{\keyword{invariant} \ensuremath{cur\neq{last}\Rightarrow}
              \ensuremath{Br=\{cur, last\}}
              \ensuremath{\land LC\sb{MinusNode}(cur, node)}
              \ensuremath{\land \lst(cur) = last}
              \ensuremath{\land LC\sb{Last}(last, node)}}
      \ghost{\keyword{invariant} \ensuremath{cur=last\Rightarrow LC\sb{MinusNode}(cur, node)}}
      \ghost{\keyword{invariant} \ensuremath{node \in \hslist(\nxt(cur))}}
      \ghost{\keyword{invariant} \ensuremath{k \in \keys(\nxt(cur))}}
      \ghost{\keyword{invariant} \keyword{Unchanged}@PreLoop\ensuremath{(node)}}
      \ghost{\keyword{invariant} \keyword{Unchanged}@PreLoop\ensuremath{(last)}}
      \ghost{\keyword{invariant} \ensuremath{Br \subseteq \{cur, last\}}}
      \ghost{\keyword{decreases} \ensuremath{\revlen(cur)}}
   \ghost{\{}
      \ghost{if (cur.prev \ensuremath{\neq} last) \{
        LCOutsideBr(cur.prev, Br);
      \}
      Mut(cur, length, cur.next.length + 1, Br);
      Mut(cur, hslist, cur.next.hslist + \{node\});
      Mut(cur, keys, cur.next.keys + \{node.k\});
      AssertLCAndRemove(cur, Br);
      cur := cur.prev;}
   \ghost{\}}

   \ghost{LCOutsideBr(node, Br);}
   \ghost{Mut(cur, keys, cur.next.keys, Br);}
   \ghost{AssertLCAndRemove(cur, Br);}
   \ghost{AssertLCAndRemove(node, Br);}
   
   ret := node;
 \}
\end{alltt}
}

\iflong
\subsection{Continuing Discussion for Sorted List Merge (Section~\ref{sec:cs-slist-merge})}
\label{app:cs-slist-merge}

We now present the full local conditions and impact sets for disjoint sorted lists, which combine sorted list conditions with monadic maps $list1, list2, list3$ that ensure that lists of a particular class (represented by these monadic maps) are disjoint from lists of another class.

\else

\smallskip
\noindent
We provide below the full local conditions and impact sets.
\fi

\begin{figure}[H]
\begin{equation}
\label{eq:dslist-lc}
\begin{aligned}
\LC \equiv \forall x&. (list1(x) \lor list2(x) \lor list3(x)) \\
&\land~ \neg(list1(x) \land list2(x)) \land \neg(list2(x) \land list3(x)) \\
&\land~ \neg(list1(x) \land list3(x)) \\
&\land~ (\prev(x) \neq \nil \Rightarrow \nxt(\prev(x)) = x)\\
&\land~ (\nxt(x) \neq \nil \Rightarrow \prev(\nxt(x)) = x\\
&\qquad\qquad\qquad\quad \land~ \len(x) = \len(\nxt(x)) + 1\\
&\qquad\qquad\qquad\quad \land~ \keys(x) = \keys(\nxt(x)) \cup \{ \key(x) \}\\
&\qquad\qquad\qquad\quad \land~ \hslist(x) = \hslist(\nxt(x)) \uplus \{ x \} \text{\quad(disjoint union)}\\
&\qquad\qquad\qquad\quad \land~ \key(x) \leq \key(\nxt(x)))\\
&\land~ (\nxt(x) = \nil \Rightarrow \len(x) = 1 \land \keys(x) = \{ \key(x) \} \land \hslist(x) = \{ x \})\\
&\land~ (list1(x) \Rightarrow (\nxt(x)\neq\nil \Rightarrow list1(\nxt(x)))) \\
&\land~ (list2(x) \Rightarrow (\nxt(x)\neq\nil \Rightarrow list2(\nxt(x)))) \\
&\land~ (list3(x) \Rightarrow (\nxt(x)\neq\nil \Rightarrow list3(\nxt(x)))) \\
\end{aligned}
\end{equation}
\caption{Full local condition for lists for Sorted List Reverse}
\end{figure}

Note that we also have a variation of the local condition $LC_{NC}$, used in ghost loop invariants, which is similar to Equation \ref{eq:dslist-lc}, except the final three conjuncts (those enforcing closure on $list1, list2, list3$) are removed. This is done when converting an entire list from one class to another (i.e., converting from $list1$ to $list3$). The following are the full impact sets for all fields of this data structure.

\begin{figure}[H]
\begin{center}
\begin{tabular}{ |c|c| } 
 \hline
 Mutated Field $f$ & Impacted Objects $A_f$ \\ 
 \hline
 $\nxt$ & $\{x, \old(\nxt(x))\}$ \\ 
 $\key$ & $\{x, \prev(x)\}$\\
 $\prev$ & $\{x, \old(\prev(x))\}$\\
 $\len$ & $\{x, \prev(x)\}$ \\
 $\keys$& $\{x, \prev(x)\}$ \\
 $\hslist$& $\{x, \prev(x)\}$ \\
 $list1$& $\{x, \prev(x)\}$ \\
 $list2$& $\{x, \prev(x)\}$ \\
 $list3$& $\{x, \prev(x)\}$ \\
 \hline
\end{tabular}
\end{center}
\caption{Full impact sets for disjoint sorted lists}
\end{figure}

\end{document}